\documentclass[a4paper,UKenglish,cleveref]{lipics-v2021}

\usepackage{enumitem} 
\usepackage{stackengine} 
\usepackage{xspace} 
\usepackage{stmaryrd}
\usepackage[all]{xy}    
\usepackage[vskip=3pt]{quoting}
\usepackage{turnstile}

\DeclareFontFamily{OT1}{pzc}{}
\DeclareFontShape{OT1}{pzc}{m}{it}{<->s*[1.30]pzcmi7t}{}
\DeclareMathAlphabet{\mathpzc}{OT1}{pzc}{m}{it}
\def\ct#1{\emath{\mathpzc{#1}}}

\title{An Effectful Object Calculus} 


\author{Francesco Dagnino}{DIBRIS, Universit\`a di Genova, Italy}{francesco.dagnino@dibris.unige.it}{https://orcid.org/0000-0003-3599-3535}{}

\author{Paola Giannini}{DiSSTE, Universit\`a del Piemonte Orientale, Italy}{paola.giannini@uniupo.it}{https://orcid.org/0000-0003-2239-9529}{}

\author{Elena Zucca}{DIBRIS, Universit\`a di Genova, Italy}{elena.zucca@unige.it}{https://orcid.org/0000-0002-6833-6470}{}

\authorrunning{F. Dagnino, P. Giannini, and E. Zucca} 

\Copyright{Francesco Dagnino, Paola Giannini, and Elena Zucca} 

\begin{CCSXML}
<ccs2012>
   <concept>
       <concept_id>10003752.10010124.10010125.10010130</concept_id>
       <concept_desc>Theory of computation~Type structures</concept_desc>
       <concept_significance>500</concept_significance>
       </concept>
 </ccs2012>
\end{CCSXML}
\begin{CCSXML}
<concept>
<concept_id>10003752.10010124.10010131.10010134</concept_id>
<concept_desc>Theory of computation~Operational semantics</concept_desc>
<concept_significance>500</concept_significance>
</concept>

</ccs2012>
\end{CCSXML}

\ccsdesc[500]{Theory of computation~Operational semantics}
\ccsdesc[500]{Theory of computation~Type structures}
\keywords{Object calculi, handlers, type-and-effect systems}

\funding{This work was partially funded by the MUR project ``T-LADIES'' (PRIN 2020TL3X8X) and has the financial support of the Universit\`{a}  del Piemonte Orientale.}

\nolinenumbers 

\EventEditors{John Q. Open and Joan R. Access}
\EventNoEds{2}
\EventLongTitle{42nd Conference on Very Important Topics (CVIT 2016)}
\EventShortTitle{CVIT 2016}
\EventAcronym{CVIT}
\EventYear{2016}
\EventDate{December 24--27, 2016}
\EventLocation{Little Whinging, United Kingdom}
\EventLogo{}
\SeriesVolume{42}
\ArticleNo{23}


\newcommand{\Space}{\hskip 0,7em}
\newcommand{\BigSpace}{\hskip 1,5em}
\newcommand{\HugeSpace}{\hskip 3,5em}
\newcommand{\refToRule}[1]{\textsc{\small (#1)}}
\newcommand{\refItem}[2]{\cref{#1}(\ref{#1:#2})} 
\newenvironment{proofOf}[1]{\begin{proof}[Proof of \cref{#1}]}{\end{proof}} 

\lstset{
  basicstyle=\ttfamily\footnotesize,
  showspaces=false,
  showstringspaces=false,
  mathescape=true
}


\newcommand{\emath}[1]{\ensuremath{#1}\xspace}
\newcommand{\ple}[1]{\emath{{\langle #1 \rangle}}}
\newcommand{\Pair}[2]{\ple{#1,#2}}

\newcommand{\fun}[3]{\emath{#1 : #2 \rightarrow #3}} 
\newcommand{\funtype}[2]{\emath{#1\rightarrow #2}} 
\newcommand{\pfun}[3]{\emath{#1 : #2 \rightharpoonup   #3}}
 
\newcommand{\dom}{\mathsf{dom}} 
\newcommand{\N}{\mathbb{N}} 
\newcommand{\Subst}[3]   {#1[#2/#3]}

\newcommand{\NamedRule}[4]{\scriptstyle{\textsc{(#1)}}\
\displaystyle                  
\frac{#2}{#3}         
\begin{array}{l}
#4     
\end{array}
}


\newenvironment{grammatica}{\begin{array}{lcll}}{\end{array}}
\newcommand{\produzione}[3]{#1&::=&#2&\mbox{#3}}

\newcommand{\kw}[1]{\texttt{#1}}
\newcommand{\aux}[1]{\mathsf{#1}}
\newcommand{\mvar}[1]{\mathit{#1}}
\newcommand{\produzioneinline}[2]{#1::=#2}
\newcommand{\seq}[1]{\overline{#1}} 

%
\newcommand{\nt}[3]{\emath{#1 : #2 \Rightarrow #3}} 
\newcommand{\Set}{\ct{Set}} 
\newcommand{\Id}{\mathsf{Id}} 
\newcommand{\id}{\mathsf{id}} 

\newcommand{\mnd}{\Mnd\mfun}
\newcommand{\mfun}{M}
\newcommand{\mun}{\eta}
\newcommand{\mmul}{\mu}
\newcommand{\mkl}[1]{{#1}^\dagger}

\newcommand{\mbind}{\mathbin{\gg=}} 
\newcommand{\Mmap}[3][]{\aux{map}\ifblank{#1}{}{_{#1}}\, #2\, #3}
\newcommand{\Mnd}[1]{\mathbb{#1}}
\newcommand{\Mun}[1]{\mun^{\Mnd{#1}}}
\newcommand{\Mmul}[1]{\mmul^{\Mnd{#1}}}

\newcommand{\PowerFun}{P}
\newcommand{\ListFun}{L} 
\newcommand{\elist}{\epsilon}
\newcommand{\cons}{\colon} 

\newcommand{\DistFun}{D}
\newcommand{\Supp}{\aux{supp}} 

%
%
%
%

\newcommand{\Object}{\aux{Object}}

\newcommand{\tname}{\aux{N}}
\newcommand{\m}{\aux{m}}
\newcommand{\prog}{\mvar{P}}
\newcommand{\tdec}{\mvar{td}} 
\newcommand{\X}{\mvar{X}}
\newcommand{\Y}{\mvar{Y}}
\newcommand{\NT}{\mvar{N}}
\newcommand{\md}{\mvar{md}}
\newcommand{\e}{\mvar{e}}
\newcommand{\x}{\mvar{x}}
\newcommand{\y}{\mvar{y}}
\newcommand{\ext}{\triangleleft}
\newcommand{\TDec}[3]{#1 \ext #2 \{#3\}} 
\newcommand{\MethDec}[4]{#1:#2\,#3\, #4}
\newcommand{\Gen}[2]{#1[#2]} 
\newcommand{\Ext}[2]{#1\ext#2}
\newcommand{\MCall}[4]{#1.#2[#3](#4)} 
\newcommand{\MCallNG}[3]{#1.#2(#3)}   
\newcommand{\Obj}[2]{ #1\{#2\} }
\newcommand{\MBody}[3]{\Pair{#1\, #2}{#3}}
\newcommand{\GenMBody}[4]{[#1]\MBody{#2}{#3}{#4}}

\newcommand{\T}{\mvar{T}}
\newcommand{\UT}{\mvar{U}}
\newcommand{\sig}{\mvar{s}}
\newcommand{\MT}{\mvar{MT}}
\newcommand{\kind}{\aux{k}}
\newcommand{\abs}{\aux{abs}}
\newcommand{\defn}{\aux{def}}
\newcommand{\MSig}[3]{#2:\MkT{#1}{#3}}
\newcommand{\MkT}[2]     {#1\,#2}
\newcommand{\mkind}{\aux{mkind}}
\newcommand{\TObj}[2]{\Obj{#1}{#2}} 
\newcommand{\TMeth}[3]{ [#1] #2 \to #3 } 
\newcommand{\TMethNG}[2]{ #1 \to #2 } 

\newcommand{\Fun}[3]{\lambda #1{:}#2.#3}
\newcommand{\apply}{\aux{apply}}

\newcommand{\mgc}{\aux{mgc}} 
\newcommand{\Continue}{\aux{c}}
\newcommand{\Stop}{\aux{s}}
\newcommand{\ve}{\mvar{v}} 
\newcommand{\mode}{\mu}
\newcommand{\cc}{\mvar{c}} 
\newcommand{\handler}{\mvar{h}}
\newcommand{\Ret}[1]{\kw{return}\ #1}
\newcommand{\Do}[3]{\kw{do}\ #1 = #2\kw{;}\ #3} 
\newcommand{\Try}[4]{\kw{try}\ #1\  \kw{with}\  \Handler{#2}{#3}{#4}}
\newcommand{\TryShort}[2]{\kw{try}\ #1\ \kw{with}\ #2}
\newcommand{\TryNoFinal}[2]{\kw{try}\, #1\,  \kw{with}\,  #2}
\newcommand{\Handler}[3]{#1, \Pair{#2}{#3}} 
\newcommand{\CC}[7]{#1.#2: \CExp{#3}{#4}{#5}{#6}{#7}}
\newcommand{\CExp}[5]{\GenMBody{#1}{#2}{#3}{#4}_{#5}}

\newcommand{\Exp}{\aux{Exp}}
\newcommand{\Val}{\aux{Val}} 
\newcommand{\me}{\textsc{e}} 
\newcommand{\mve}{\textsc{v}} 
\newcommand{\Res}{\aux{Res}}
\newcommand{\res}{r}
\newcommand{\conf}{c} 
\newcommand{\red}{\to}
\newcommand{\Red}{\Rightarrow}
\newcommand{\Redstar}{\Red^\star} 
\stackMath

\newcommand\xxrightarrow[2][]{\mathrel{%
  \setbox2=\hbox{\stackon{\scriptstyle#1}{\scriptstyle#2}}%
  \stackunder[0pt]{%
    \xrightarrow{\makebox[\dimexpr\wd2\relax]{$\scriptstyle#2$}}%
  }{%
   \scriptstyle#1\,%
  }%
}}
\newcommand{\ehole}{[\ ]} 
\newcommand{\purered}{\red_p} 
\newcommand{\mbody}{\aux{lookup}} 
\newcommand{\cmatch}{\aux{cmatch}} 
\newcommand{\mrun}[2]{\emath{\aux{run}_{#1,#2}}} 
\newcommand{\instanceof}[2]{#1\ \aux{instof}\ #2}
\newcommand{\Conf}{\aux{Conf}}
\newcommand{\Wrng}{\aux{Wr}}
\newcommand{\wrng}{\aux{wrong}} 
\newcommand{\finsem}[1]{\infsem[\star]{#1}}  
\newcommand{\infsem}[2][\infty]{\llbracket #2 \rrbracket_{#1}} 
\newcommand{\tred}{\xxrightarrow[\tredfun]{}}
\newcommand{\tredfun}{\aux{step}}

\newcommand{\mconf}{\textsc{c}} 
\newcommand{\mres}{\textsc{r}} 

\newcommand{\mctr}{\aux{res}} 
\newcommand{\ctr}{\mctr_0}
\newcommand{\mbot}{\bot}
\newcommand{\msup}{\bigsqcup}

\newcommand{\Nat}{{\tt Nat}}
\newcommand{\nat}[1]{\Nat_{#1}}
\newcommand{\One}{{\tt One}}
\newcommand{\Zero}{{\tt Zero}}

\newcommand{\succtt}{{\tt succ}}
\newcommand{\nottt}{{\tt not}}
\newcommand{\pred}{{\tt pred}}
\newcommand{\zero}{{\tt zero}}
\newcommand{\match}{{\tt match}}
\newcommand{\Even}{{\tt Even}}
\newcommand{\If}{{\tt if}}

\newcommand{\ifNatTE}{{\tt if[Nat\ ThenElse[Nat]]}}
\newcommand{\True}{{\tt True}}
\newcommand{\False}{{\tt False}}

\newcommand{\ExSet}{\aux{Exc}}
\newcommand{\ExceptFun}[1][]{E\ifblank{#1}{}{_{#1}}}
\newcommand{\Exc}{\aux{E}}
\newcommand{\MyExc}{\aux{MyE}}
\newcommand{\Exception}{{\tt Exception}}
\newcommand{\getExc}[1]{\aux{exc}(#1)}
\newcommand{\My}{{\tt My}}
\newcommand{\MyException}{{\tt MyException}}
\newcommand{\exc}{\aux{e}}
\newcommand{\throw}{{\tt throw}}
\newcommand{\MyTE}{{\tt MyTE}}
\newcommand{\throwNat}{{\tt throw[Nat]}}
\newcommand{\mt}{{\tt m}}
\newcommand{\effsem}[1]{\llbracket #1 \rrbracket}

\newcommand{\Failure}{{\tt Failure}}
\newcommand{\fail}{{\tt fail}}
\newcommand{\Fail}{\aux{Fail}}
\newcommand{\Test}{{\tt Test}}

\newcommand{\aU}{\texttt{"1"}}
\newcommand{\aD}{\texttt{"a"}}
\newcommand{\sumAsNat}{{\tt sumAsNat}}
\newcommand{\toNat}{{\tt toNat}}
\newcommand{\sumT}{{\tt sum}}
\newcommand{\nU}{{\tt n1}}
\newcommand{\nD}{{\tt n2}}
\newcommand{\nT}{{\tt n}}

\newcommand{\CCSh}[6]{#1.#2: \MBody{#3}{#4}{#5}_{#6}}
\newcommand{\RetS}[1]{\kw{ret}\, #1}
\newcommand{\DoS}[3]{\kw{do}\, #1 {=} #2\kw;#3} 

\newcommand{\Chooser}{{\tt Chooser}}
\newcommand{\choott}{{\tt choose}}
\newcommand{\ytt}{{\tt y}}
\newcommand{\ztt}{{\tt z}}
\newcommand{\mU}{{\tt m1}}
\newcommand{\mD}{{\tt m2}}
\newcommand{\MyTEU}{{\tt MyTE1}}
\newcommand{\MyTED}[1]{{\tt MyTE}^#1}
\newcommand{\List}[1]{[\,#1\,]}
\newcommand{\Two}{{\tt Two}}

\newcommand{\AllLift}[1][]{\mathbf{\forall}\ifblank{#1}{}{^{#1}}}
\newcommand{\ExLift}[1][]{\mathbf{\exists}\ifblank{#1}{}{^{#1}}} 
\newcommand{\quantifier}{\aux{q}}


\newcommand{\eff}{\mvar{E}} 
\newcommand{\eZero}{\bullet}
\newcommand{\EComp}[2]{#1{\vee}#2}
\newcommand{\eTop}{\top}
\newcommand{\eCall}[3]{#1.#2[#3]}
\newcommand{\TEff}[2]{#1{!}#2}
\newcommand{\TMethEff}[4]{ [#1] #2 \to \TEff{#3}{#4} } 
\newcommand{\MkTE}[5]{#1\,\TMethEff{#2}{#3}{#4}{#5}}

\newcommand{\EffRed}[3]{#1\vdash#2{\effred}#3} 
\newcommand{\effred}{\Downarrow}

\newcommand{\TEnv}{\Phi}
\newcommand{\ok}{\diamond}
\newcommand{\IsWFType}[2]{ #1 \vdash #2\ok }
\newcommand{\subt}{\leq}
\newcommand{\SubType}[3]{#1\vdash#2\subt#3}
\newcommand{\SubT}[2]{#1\subt#2}
\newcommand{\SubTNarrow}[2]{#1{\subt}#2}

\newcommand{\typeof}[3]{#1\vdash#2\leadsto#3}
\newcommand{\override}[2]{#1\{#2\}}
\newcommand{\sigPlus}{\uplus}
\newcommand{\bigSigPlus}{\biguplus}

\newcommand{\NoAbs}[1]{\aux{NoAbs}(#1)}
\newcommand{\NoMgc}[1]{\aux{NoMgc}(#1)}
\newcommand{\TVars}[2]{\seq#2{:}\seq#1}
\newcommand{\IsMType}[4]{\mtype_{#1}(#2,#3){=}#4}
\newcommand{\mtype}{\aux{mtype}}
\newcommand{\GetDef}[1]{\aux{GetDef}(#1)}

\newcommand{\hfilter}{\mvar{H}}
\newcommand{\FilterF}[1]{\mathcal{F}_{#1}}
\newcommand{\FilterFun}[2]{\FilterF{#2}(#1)}
\newcommand{\cfilter}{\mvar{C}}
\newcommand{\CFilter}[6]{#1.#2:[#3]\ple{#5}}
\newcommand{\HFilter}[2]{#1, #2} 
\newcommand{\EffSet}{\aux{Eff}}

\newcommand{\IsWFVal}[4]{#1;#2\vdash#3:#4}
\newcommand{\IsWFExp}[5]{#1;#2\vdash#3:\TEff{#4}{#5}}
\newcommand{\IsWFExpNarrow}[5]{#1;#2{\vdash}#3{:}\TEff{#4}{#5}}
\newcommand{\IsWFHandler}[6]{#1;#2;#3\vdash#4:\TEff{#5}{#6}}
\newcommand{\IsWFHandlerNarrow}[6]{#1;#2;#3{\vdash}#4{:}\TEff{#5}{#6}}
\newcommand{\IsWFClause}[5]{#1;#2;#3\vdash#4\Rightarrow#5}
\newcommand{\IsWFClauseNarrow}[5]{#1;#2;#3{\vdash}#4{\Rightarrow}#5}
\newcommand{\IsWFNType}[1]{\vdash#1\, \ok}
\newcommand{\IsWFMethod}[4]{#1;#2;#3\vdash#4\,\ok}
\newcommand{\TVar}[2]{#2:#1}
\newcommand{\TVarNarrow}[2]{#2{:}#1}


\newcommand{\WTVal}[2]{\vdash#1:#2}
\newcommand{\WTValNarrow}[2]{{\vdash}#1{:}#2}
\newcommand{\WTExp}[3]{\vdash#1:\TEff{#2}{#3}}
\newcommand{\WTExpNarrow}[3]{{\vdash}#1{:}\TEff{#2}{#3}}
\newcommand{\WTRes}[2]{\vdash#1:#2}

\newcommand{\WTMExp}[4]{\sststile{#4}{}#1:\TEff{#2}{#3}}
\newcommand{\WTMRes}[3]{\sststile{#3}{}#1:#2}
\newcommand{\WTMResQ}[4]{\sststile{#3}{#4}#1:#2}
\newcommand{\WTMVal}[3]{\sststile{#3}{}#1:#2}

\newcommand{\MEff}{\mathcal{E}}

\newcommand{\mlift}[1][]{\lambda\ifblank{#1}{}{^{#1}}} 

\newcommand{\PW}{\mathcal{P}}

\begin{document}

\maketitle

\begin{abstract}
We show how to smoothly incorporate in the object-oriented paradigm constructs to raise, compose, and handle effects in an arbitrary monad. The underlying pure calculus is meant to be a representative of the last generation of OO languages, and the effectful extension is manageable enough for ordinary programmers; notably, constructs to raise effects are just special methods. We equip the calculus with an expressive type-and-effect system, which, again by relying on standard features such as inheritance and generic types, allows a simple form of effect polymorphism. The soundness of the type-and-effect system is expressed and proved by a  recently  introduced technique, where the semantics is formalized by a one-step reduction relation from language expressions into monadic ones, so that it is enough to prove progress and subject reduction properties on this relation. 
\end{abstract}

\maketitle


\section{Introduction}

Every modern programming language needs to support a wide range of computational effects, such as exceptions, input/output, interaction with a memory and classical or probabilistic non-determinism. 
 Several approaches have been proposed for  designing  language constructs dealing with different computational effects in a uniform and principled way. 
The monad pattern \cite{Moggi89,Moggi91,Wadler95} and algebraic effects and handlers \cite{PlotkinP01,PlotkinP02,PlotkinP03,PlotkinPretnar09,PlotkinPretnar13,BauerP15,Pretnar15} stand out for their impact on real world  programming, and are now incorporated  into  many mainstream languages such as Haskell, OCaml 5 and Scala. 
However,  this area  of research  focuses on the functional paradigm.   
The aim of this paper, instead, is to show that constructs to raise, compose, and handle effects can be smoothly incorporated  into  object-oriented languages;  in fact,  the specific features of the paradigm help in allowing a \mbox{simple effectful extension. }

We illustrate  the effectful extension  on top of a pure object calculus, meant to be an evolution of Featherweight Java (FJ)\footnote{In its version with generic types.} \cite{IgarashiPW99},  for  more than twenty years the paradigmatic calculus for Java-like languages. 
Indeed, like FJ it only includes distinctive ingredients, such as nominal types, inheritance, and objects, and no imperative features. 
However, following the current trend in OO languages, and inspired by recent work \cite{WangZOS16,ServettoZ21,WebsterSD24}, object values are, rather than class instances with fields,   \emph{object literals}, extending a sequence of nominal types with  additional methods.  
 In other words, we select, as subset of Java features, interfaces with default methods and anonymous classes, rather than classes with fields and constructors. 
This choice leads to generality, in the sense that the language design encompasses both Java-like and functional calculi.  More precisely, FJ can be easily encoded by representing fields as  constant  methods, 
 and, at the same time, the simply-typed lambda calculus can be seen as a language subset, since  functions  are a special case of stateless objects. 

 The effectful extension builds on generic effects \cite{PlotkinP03} and handlers \cite{PlotkinPretnar09}, carefully adapted to be  more familiar   to  OO programmers. 
Constructs for  raising effects are just method calls of a special kind, called \emph{magic}.\footnote{This terminology is taken from \cite{WebsterSD24}, and also informally used in Java reflection and Python.} 
They  are made available 
 through predefined interfaces,  and do not specify an implementation,  like  abstract ones. However, rather than being deferred to subtypes, their implementation is provided ``by the system''. 
 For instance, throwing an exception is a magic call \lstinline{Exception.throw()}. 
This fits naturally in the OO paradigm, where computations happen in the context of  a program. Again to be familiar, the construct to handle effects is a generalized try-block with catch clauses. A raised effect (a magic call) is caught by a clause when the receiver's type is (a subtype of) that specified in the clause, as for Java exceptions, generalized to arbitrary magic calls. 
To this end, besides those interrupting the normal flow of execution, we allow   catch clauses replacing an effect with an alternative behaviour in a continuous manner. Altogether, the construct provides a good compromise between simplicity and expressivity: notably, it does not explicitly handle continuations, as in handlers of algebraic effects \cite{PlotkinPretnar09,PlotkinPretnar13,KammarLO13,HillerstromL18}, yet expressing a variety of handling mechanisms. 

We endow our calculus with  a type-and-effect system where effect types\footnote{The term ``effect'' is used in literature both as synonym of computational effect, and in the context of type-and-effect systems, as a static approximation of the former.   We will use ``effect'' when there is no ambiguity, otherwise  ``computational effect'' and ``effect type'', respectively.} again generalize sets of exceptions, as in \lstinline{throws} clauses, to sets of \emph{call-effects}, of shape $\eCall{\T}{\m}{\seq\T}$, approximating the computational effects of a call of $\m$ in type $\T$, with types $\seq\T$ instantiating the type variables of the method. 
 This approach, inspired by \cite{GarianoNS19},  provides more expressivity than  the effect systems for algebraic effects \cite{BauerP14,BauerP15,Pretnar15}, which only  track the  \emph{names} of operations raising effects, without any type information. 
Moreover, $\T$ can be a type variable $\X$, expressing a  call-effect  not  yet  specified, which will become, when $\X$ is instantiated to a specific type, the effect type declared there for  the  method.  
This allows  a rather sophisticated approximation of effects, encompassing a simple form of \emph{effect polymorphism} \cite{Leijen17,HillerstromLAS17,LindleyMM17,BiernackiPPS19,BrachthauserSO20jfp,BrachthauserSO20oopsla}. 

Last but not least, 
the soundness of the type-and-effect system is expressed and proved by a newly introduced approach \cite{DagninoGZ25}, 
building on a recent line of research on \emph{monadic operational semantics} for effectful languages \cite{GavazzoF21,GavazzoTV24}. 
That is, the semantics is 
formalized by a one-step reduction from language expressions into monadic ones, parameterized by a monad modeling the capabilities provided by the system.
In this way, we can uniformly deal with a wide range of  effects, by just providing an interpretation of magic methods in the monad. 
Then, applying a technique from \cite{DagninoGZ25}, to ensure type-and-effect soundness it is enough to prove progress and subject reduction on the monadic one-step reduction. 
Accordingly with its shape, subject reduction roughly means that reducing an expression to a monadic one preserves its type-and-effect; hence, to express and prove this property, we need \emph{monadic typing judgments}, which can be derived by \emph{lifting} the non-monadic ones.  
In this respect, the contribution of this paper is twofold: we show a significant application of the technique, for a language with a complex type system; moreover, the abstract notion of lifting in \cite{DagninoGZ25} is concretized here  by deriving monadic typing rules, so that the proof of subject reduction can be driven by the usual reasoning. 

 We describe the language in \cref{sect:lang}, and its  semantics in \cref{sect:sem}.
 The type-and-effect system is illustrated in \cref{sect:effect-system}, and  its soundness is shown in \cref{sect:results}. \cref{sect:related} discusses related work and \cref{sect:conclu} summarizes  our  contribution and outlines future work.  
Auxiliary definitions and omitted proofs can be found in the Appendix.


\section{Language}\label{sect:lang}
To have a smoother presentation, we first provide the syntax, discuss the key features, and show some examples, for the effect-free subset of the language; then we illustrate the constructs to raise, compose, and handle computational effects.

\smallskip
\noindent\textit{Effect-free language} 
As anticipated, in our calculus objects are not created by invoking constructors, but directly obtained by extending a sequence of nominal types with additional methods. In this way, the notions of class (more in general, nominal type) and object almost coincide. The only difference is that nominal types are top-level entities in a program, possibly generic (declaring type variables), and abstract (including non-implemented methods). On the other hand, objects can be seen as anonymous types declared on-the-fly in the code.  

Syntax and types are given in \cref{fig:syntax}.   We assume \emph{type names} $\tname$, \emph{method names} $\m$, \emph{type variables} $\X, \Y$, and \emph{variables} $\x,\y$.   The metavariable $\seq{\tdec}$ stands for  a sequence $\tdec_1\ldots\tdec_n$, and analogously for other overlined metavariables.  An empty sequence will be denoted by $\epsilon$. 

\begin{figure}[th]
\begin{math}
\begin{grammatica}
\produzione{\prog}{\seq{\tdec}}{program}\\
\produzione{\tdec}{\TDec{\Gen\tname{\Ext{\seq\Y}{\seq\T}}}{\seq\NT}{\seq{\md}}}{(nominal) type declaration} \\ 
\produzione{\md}{\MethDec{\m}{\abs}{\MT}{}\mid\MethDec{\m}{\defn}{\MT}{\MBody{\x}{\seq\x}{\e}}}{method declaration}\\
\produzione{\e}{\x \mid \Obj{\seq\NT}{\seq{\md}} \mid \MCall\e\m{\seq\T}{\seq\e}}{expression: variable, object, method call} \\
\produzione{\T,\UT}{\X \mid \TObj{\seq\NT}{\sig} }{type}  \\ 
\produzione{\NT}{\Gen\tname{\seq\T}}{nominal type}\\
\produzione{\sig}{\seq{\m}:\seq{\kind}\ \seq\MT}{signature (structural type)}  \\
\produzione{\kind}{\abs \mid \defn}{(method) kind}\\
\produzione{\MT}{\TMeth{\Ext{\seq\X}{\seq\UT}}{\seq\T}{\T}}{method type} \\
\end{grammatica}
\end{math}
\caption{Effect-free syntax and types}\label{fig:syntax} 
\end{figure}

 A program $\prog$ is a sequence of declarations of \emph{(nominal) types}\footnote{We adopt this terminology to avoid more connoted terms such as class, interface, trait.}. This sequence is  assumed to be a map, that is, type names are distinct. Hence, we can safely use the notations $\prog(\tname)$ and $\dom(\prog)$, as we will do for other (sequences representing) maps. 
 
A type declaration $\TDec{\Gen\tname{\Ext{\seq\Y}{\seq\T}}}{\seq\NT}{\seq{\md}}$ introduces a generic nominal type $\Gen\tname{\Ext{\seq\Y}{\seq\T}}$, inheriting from all those in $\seq\NT$. We assume that the notation $\Ext{\seq\Y}{\seq\T}$ represents a map from type variables to  types (their bounds),  that is, type variables are distinct and the two sequences have the same length; moreover, $\seq{\NT}$ represents a set, that is, order and repetitions are immaterial, and $\seq{\md}$ represents a map, that is, method names are distinct. Analogous assumptions hold in an object $\Obj{\seq\NT}{\seq\md}$ and a method type $\TMeth{\Ext{\seq\X}{\seq\UT}}{\seq\T}{\T}$. 

 Method declarations can be either abstract ($\abs$), that is, only specifying a method type, or defined ($\defn$), providing the (variables to be used as) parameters and the method body.  The programmer can choose an arbitrary variable $\x$ for the first  parameter in defined methods  to denote  the current object, rather than a fixed name as, e.g., $\aux{this}$ in Java.

 In the calculus, we adopt a uniform syntax for abstract and defined methods, to simplify the technical treatment;  a more realistic language would likely use a different concrete syntax, for instance declaring parameters together with their types. 
 
\begin{remark}\label{rem:lambda-encoding} 
Lambda-expressions can be encoded in the calculus, similarly to what is done in Java; that is, the object  $\Obj{}{\MethDec{\apply}{\defn}{\TMethNG{\T}{\UT}}{\MBody{\_}{\x}{\e}}}$ can be abbreviated $\Fun{\x}{\T}{\e}$.  
 Note that, in this way, encoded functions are possibly recursive and higher-order. 
\end{remark}
 
Types are either type variables $\X$ or \emph{object types}, of   shape $\TObj{\seq\NT}{\sig}$. Nominal types are (instantiations of) those declared in the program. 
 Object types are a form of intersection types, present also in Scala 3, denoting, without introducing a name, a type which is a subtype of all types in  $\seq\NT$, additionally providing methods in the signature $\sig$. 
Nominal types can be seen as special object types. We assume that a signature $\sig$ represents a map from method names to method kinds and types, that is, method names are distinct and the three sequences have the same length. Hence, we can safely use the notations $\sig(\m)$ and $\dom(\sig)$.

Note that type annotations written by the programmer are allowed to be arbitrary types, rather than only nominal.   In this respect, our calculus offers more generality than Java. 

 In the following, we will write 
$\seq\NT$ for $\TObj{\seq\NT}{}$;  note that such  a  syntactic form can be seen both as an object and the corresponding object type. Moreover, we will write $\Object$ for $\TObj{}{}$, which, as a type, is 
the top of the subtyping relation. 

\smallskip
\noindent\textit{Examples} 
In the code examples, we will use a number of other obvious conventions and abbreviations, such as omitting square brackets around an empty sequence in type/method declarations and method calls, and writing the pair of sequences $\Ext{\seq\X}{\seq\T}$ as a sequence of pairs, omitting bounds which are $\Object$. 
Finally, in both code and formalism, we will often use the wildcard $\_$ to  indicate  that some variable or meta-variable does not matter.
\begin{example}\label{ex:boolnat}
As an example of a program, we show the classic encoding of booleans and natural numbers in the object-oriented paradigm, through an inheritance hierarchy, 
 using the visitor pattern for defining functions over them.  
 \begin{lstlisting}
Bool {
  if: abs [X Y$\,\ext\,$ThenElse[X]] Y -> X 
  not: def -> Bool <b, b.if[Bool ThenElse[Bool]](
    ThenElse[Bool]{
      then: def -> Bool <$\_$, False>
      else: def -> Bool <$\_$, True>
    }
  )
}

True $\ext$ Bool { if: def [X Y$\,\ext\,$ThenElse[X]] Y -> X <$\_$ te, te.then()> }
False $\ext$ Bool { if: def [X Y$\,\ext\,$ThenElse[X]] Y -> X <$\_$ te, te.else()> }

ThenElse[X] { 
  then: abs -> X   
  else: abs -> X 
}
\end{lstlisting}
Let us focus first on the method \lstinline{if}, which takes a parameter of (a subtype of) \lstinline{ThenElse[X]}, expected to provide two alternative results of type \lstinline{X}. The method, abstract in \lstinline{Bool}, is implemented in \lstinline{True} and \lstinline{False} by selecting the \lstinline{then} and \lstinline{else} alternative provided by the argument, respectively. As said above, we use the wildcard as variable for the current object, since it does not occur in the body. 

The type \lstinline{ThenElse[X]} declares methods \lstinline{then} and \lstinline{else} to be implemented in  its  subtypes. To  illustrate  the language features, we define \lstinline{not} in \lstinline{Bool} by  invoking \lstinline{if} on the current object with, as argument, an object  which implements \lstinline{ThenElse[Bool]}.   Clearly  \lstinline{not} could  declared  abstract in \lstinline{Bool} and defined in the obvious way \mbox{in the two subtypes. }

Turning now to naturals,  a similar encoding through an inheritance hierarchy could be: 
\begin{lstlisting}
Nat { 
  succ: def -> Nat <n, Succ{pred: def -> Nat <$\_$, n>}
  match: abs [X] NatMatch[X] -> X
}

Zero $\ext$ Nat { match: def [X] NatMatch[X] -> X <$\_$ nm, nm.zero()> }
Succ $\ext$ Nat {
  pred: abs -> Nat
  match: def [X] NatMatch[X] -> X <n nm, nm.succ(n.pred())>
}

NatMatch[X] { zero: abs -> X   succ: abs Nat -> X }
\end{lstlisting}
Since in the calculus data are uniformly implemented by  stateless objects, numbers greater than $0$ are encoded as objects which extend \lstinline{Succ} by implementing the  predecessor method; that is, if $\nat{n}$ is the object encoding number $n$, then $n+1$ is encoded by
\lstinline{$\nat{n}$.succ()}, which evaluates to \lstinline@Succ{pred: def -> Nat <$\_$, $\nat{n}$>}@.

The type \lstinline{NatMatch[X]} declares methods \lstinline{zero} and \lstinline{succ} to be implemented in subtypes, offering a programming schema for definitions given by arithmetic induction.
For instance:
\begin{lstlisting}
Even $\ext$ NatMatch[Bool] {
  zero: def -> Bool <$\_$,True>
  succ: def Nat -> Bool <even n, n.match(even).not()>
}  
\end{lstlisting}
A different encoding of natural numbers could be provided with no inheritance hierarchy:
\begin{lstlisting}
Nat {
  match: def [X] NatMatch[X] -> X <$\_$ nm, nm.zero()>
  succ: def -> Nat 
    <n, Nat{match: def [X] NatMatch[X] -> X <$\_$ nm, nm.succ(n)>}}
}
\end{lstlisting}
In this version, $0$ is encoded by \lstinline@Nat@, and, if $\nat{n}$ is the object encoding number $n$, then $n+1$ is encoded by
\lstinline{Nat.succ($\nat{n}$)}, which evaluates to 
\begin{lstlisting}
Nat{ match: def [X] NatMatch[X] -> X <$\_$ nm, nm.succ($\nat{n}$)>} }
\end{lstlisting}
\end{example}

\smallskip
\noindent\textit{Effectful language} 
In \cref{fig:eff-syntax} we extend the language by adding constructs to raise, compose, and handle effects. 
\begin{figure}[th]
\begin{math}
\begin{grammatica}
\produzione{\prog}{\seq{\tdec}}{program}\\
\produzione{\tdec}{\TDec{\Gen\tname{\Ext{\seq\Y}{\seq\T}}}{\seq\NT}{\seq{\md}}}{type declaration} \\ 
\produzione{\md}{\MethDec{\m}{\abs}{\MT}{}\mid\MethDec{\m}{\defn}{\MT}{\MBody{\x}{\seq\x}{\e}}\mid\MethDec{\m}{\mgc}{\MT}{}}{method declaration}\\
\produzione{\ve}{\x \mid \Obj{\seq\NT}{\seq\md}}{value} \\ 
\produzione{\e}{ \MCall\ve\m{\seq\T}{\seq\ve}\mid \Ret\ve \mid \Do\x{\e_1}{\e_2}\mid\TryShort{\e}{\handler}}{expression} \\ 
\produzione{\handler}{\Handler{\seq\cc}{\x}{\e}}{handler}\\
\produzione{\cc}{\CC{\NT}{\m}{\seq\X}{\x}{\seq\x}{\e}{\mode}}{(catch) clause}\\
\produzione{\mode}{\Continue\mid\Stop}{mode}\\
\produzione{\kind}{\abs \mid \defn \mid \mgc}{(method) kind}\\
\end{grammatica}
\end{math}
\caption{Fine-grain syntax for the effectful language}\label{fig:eff-syntax} 
\end{figure}
As customary, we adopt a fine-grain syntax \cite{LevyPT03}, where \emph{values} are effect-free, whereas expressions may raise effects, and are also called \emph{computations}.
Methods can be, besides abstract and defined, \emph{magic} (abbreviated $\mgc$).  
 Magic methods  are made available to the programmer in the type declarations composing the program, and do not specify an implementation,  like  abstract methods. However, rather than being deferred to subtypes, their implementation is provided ``by the system''.  
We add the standard operators \lstinline{return} for embedding values into computations,  and \lstinline{do} for composing computations sequentially passing the result of the former to the latter.
Finally, we provide a try-block enclosing a computation with a \emph{handler}, consisting of a sequence of \emph{(catch) clauses}, and a  \emph{final expression}, parametric on a variable.
A clause specifies a type and method name, expected to identify a magic method declaration; the clause may catch calls of such method, by executing the \emph{clause expression} instead, parametric on type variables and variables, analogously to a method body.  After that, the final expression  is either executed or not depending on the \emph{mode}, 
either $\Continue$ or $\Stop$, for ``continue'' and ``stop'', respectively, of the clause. As illustrated in the following examples, a  $\Continue$-clause replaces an effect with an alternative behaviour in a continuous manner, whereas in  $\Stop$-clauses  handling the computational effect interrupts the \mbox{normal flow of execution.  }


\section{Monadic Operational Semantics}\label{sect:sem}

In this section, following the approach in \cite{DagninoGZ25}, we define a  \emph{monadic operational semantics}  for the language, parametric on an underlying monad.   

\smallskip
\noindent\textit{Monads} 
First of all we recall basic notions about monads,
referring  to standard textbooks \cite{Riehl17} for a detailed presentation. 

Monads \cite{EilenbergM65,Street72} are a fundamental notion in category theory, enabling an abstract and unified study of algebraic structures. 
Since Moggi's seminal papers \cite{Moggi89,Moggi91}, they have also  become a major tool in computer science, especially for describing the semantics of computational effects, as well as for integrating them in programming languages in a structured and principled way. 

A \emph{monad} $\mnd = \ple{\mfun,\mun,\mmul}$ (on \Set) consists of 
a functor \fun{\mfun}{\Set}{\Set} and two natural transformations \nt{\mun}{\Id}{\mfun} and \nt{\mmul}{\mfun^2}{\mfun} 
such that, for every set $X$, the following diagrams commute: 
\[\vcenter{\xymatrix{
  \mfun X \ar[r]^-{\mun_{\mfun X}} \ar[rd]_-{\id_{\mfun X}} 
& \mfun^2 X \ar[d]_-{\mmul_X} 
& \mfun X \ar[l]_-{\mfun\mun_X} \ar[ld]^-{\id_{\mfun X}} 
\\ 
& \mfun X 
}} \qquad 
\vcenter{\xymatrix{
  \mfun^3 X \ar[r]^-{\mfun\mmul_X} \ar[d]_-{\mmul_{\mfun X}} 
& \mfun^2 X \ar[d]^-{\mmul_X} 
\\
  \mfun^2 X \ar[r]^-{\mmul_X}
& \mfun X 
}}\]
The functor $\mfun$ specifies, for every set $X$, a set $\mfun X$ of monadic elements built over $X$, in a way compatible with functions. 
The map $\mun_X$, named \emph{unit}, embeds elements of $X$ into monadic elements in $\mfun X$, and 
the map $\mmul_X$, named \emph{multiplication}, flattens monadic elements built on top of other monadic elements into plain monadic elements.

Functions of type $X \to \mfun Y$ are called \emph{Kleisli functions} and play a special role: 
they can be seen as ``effectful functions'' from $X$ to $Y$, raising effects described by the monad $\mnd$. Given a Kleisli function, by \emph{Kleisli extension} we can define:
\begin{quoting}
\begin{math}
\begin{array}{ll}
\fun{\mkl{f}}{\mfun X}{\mfun Y}&
\mkl{f} = \mmul_Y \circ \mfun f
\end{array}
\end{math}
\end{quoting}
that is, 
first we lift $f$ through $\mfun$ to apply it to monadic elements and then we flatten the result using $\mmul_Y$. 
Moreover, given $\alpha\in \mfun X$, \fun{f}{X}{\mfun Y} and \fun{g}{X}{Y}, we set 
\begin{quoting}
\begin{math}
\begin{array}{ll}
\fun{-\mbind-}{\mfun X}{(X\to\mfun Y)\to\mfun Y}&\alpha \mbind f = \mkl{f}(\alpha) \\
\fun{\aux{map}}{(X\to Y)}{\mfun X \to \mfun Y}&\Mmap{g}{\alpha} = \mfun g(\alpha) 
\end{array}
\end{math}
\end{quoting}
The operator $\mbind$ is also called $\aux{bind}$. As its definition shows, it can be seen as an alternative description of the Kleisli extension, where the parameters are taken in inverse order.  
This view corresponds, intuitively, to the sequential composition of two expressions with effects, where the latter  depends on a parameter \emph{bound} to the result of the former. 
The operator $\aux{map}$ describes the effect of the functor $\mfun$ on functions. That is, the lifting of function $g$ through $\mfun$ is applied to a monadic value $\alpha$. 

\smallskip
\noindent\textit{Monadic reduction} 
Let $\Val$ and $\Exp$ be the sets of closed values and expressions, respectively. 
 In the following, we define a monadic (one-step) reduction  for the language, being a relation $\red$  on  $\Exp \times \mfun\Exp$, parametric on a monad  $\mnd = \ple{\mfun,\mun,\mmul}$.
More in detail, its definition depends on the following ingredients:
\begin{itemize}
\item The function $\fun{\mun_\Exp}{\Exp}{\mfun\Exp}$  embedding  language expressions  into their counterpart in the monad, written simply $\mun$ in the following.
\item The function $\fun{\aux{map}}{(\funtype{\Exp}{\Exp})}{\funtype{\mfun\Exp}{\mfun\Exp}}$  lifting functions from expressions to expressions  to their counterpart in the monad.
\end{itemize}
Moreover we assume:
\begin{itemize}
\item For every magic method $\m$ declared in  $\tname$, a partial function $\pfun{\mrun\tname\m}{\Val\times\Val^\star}{\mfun\Val}$,  returning a monadic value expressing the effects raised by a call. The function could be undefined, for instance, when arguments do not have the expected types.
\end{itemize}
The monadic reduction is modularly defined on top of a ``pure''  reduction $\purered$ on $\Exp\times\Exp$, which transforms calls of non-magic methods into  the corresponding bodies, as usual, and try-blocks into either magic calls or \lstinline{do} expressions, by distributing and possibly applying catch clauses; magic calls and \lstinline{do} expressions are, then, normal forms for the pure reduction. 

Rules defining the pure reduction are given in  \cref{fig:pure-red}.  The abbreviation $\handler$ for $\Handler{\seq\cc}{\x}{\e'}$ is intended to be used in all the rules below. 
\begin{figure}[t]
\begin{small}
\begin{math}
\begin{array}{l}
\NamedRule{invk}{ }
{ \MCall{\ve}{\m}{\seq\T}{\seq\ve} \purered \Subst{\Subst{\Subst{\e}{\seq\T}{\seq\X}}{\ve}{\x}}{\seq\ve}{\seq\x} }
{ \mbody(\ve,\m) = \ple{\seq\X,\x,\seq\x,\e} }
\\[4ex]

\handler=\Handler{\seq\cc}{\x}{\e'}

\\[2ex]

\NamedRule{try-ret}{ }
{ \TryShort{\Ret\ve}{\handler}\purered \Do\x{\Ret\ve}{\e'} }
{ } 

\\[3ex] 
\NamedRule{try-do}{ }
{ \TryShort{\Do\y{\e_1}{\e_2}}{\handler}\purered \Try{\e_1}{\seq\cc}{\y}{\TryShort{\e_2}{\handler}}}
{ }
\\[3ex]

\NamedRule{catch-continue}{ }
{ \begin{array}{l}
\TryShort{\MCall{\ve}{\m}{\seq\T}{\seq\ve}}{\handler} \purered\\
\HugeSpace\Do{\x}{\Subst{\Subst{\Subst{\e}{\seq\T}{\seq\X}}{\ve}{\x}}{\seq\ve}{\seq\x}}{\e'}
\end{array}
}
{ 
   \mbody(\ve,\m)=\Pair{\mgc}{\_}\\
  \cmatch( \MCall\ve\m{\seq\T}{\seq\ve},\seq\cc) = \CExp{\seq\X}{\x}{\seq\x}{\e}{\Continue} \\
}

\\[4ex]

\NamedRule{catch-stop}{ }
{ \TryShort{\MCall{\ve}{\m}{\seq\T}{\seq\ve}}{\handler} \purered\Subst{\Subst{\Subst{\e}{\seq\T}{\seq\X}}{\ve}{\x}}{\seq\ve}{\seq\x}
}
{ 
   \mbody(\ve,\m)=\Pair{\mgc}{\_}\\
  \cmatch( \MCall\ve\m{\seq\T}{\seq\ve},\seq\cc) = \CExp{\seq\X}{\x}{\seq\x}{\e}{\Stop} \\
}

\\[4ex]
\NamedRule{fwd}{ }
{ \TryShort{\MCall{\ve}{\m}{\seq\T}{\seq\ve}}{\handler} \purered \Do{\x}{\MCall{\ve}{\m}{\seq\T}{\seq\ve}}{\e'} }
{ 
   \mbody(\ve,\m)=\Pair{\mgc}{\_}\\
  \cmatch( \MCall\ve\m{\seq\T}{\seq\ve},\seq\cc)\  \mbox{undefined} 
}
\\[5ex]
\NamedRule{try-ctx}{
  \e_1 \purered \e_2
}{ \TryShort{\e_1}{\handler} \purered \TryShort{\e_2}{\handler} }
{ }

\end{array}
\end{math}
\end{small}
\caption{Pure reduction}\label{fig:pure-red} 
\end{figure}
The $\mbody$ function models method look-up, searching for the method declaration corresponding to a call, with two different successful outcomes: either a defined method, and the result are its type variables, variables, and body, or a magic method, necessarily in a type declaration, and the result is an  $\mgc$ tag, and the type name. 
The $\cmatch$ function is defined, on a pair consisting of a magic call and a sequence of clauses, when there is a (first) clause catching the call; in this case, the corresponding clause expression is returned. The formal definitions are given in \cref{fig:auxfun}.

Rule \refToRule{invk} is the standard rule for method invocation.  That is, method look-up finds a defined method, and the call reduces to the method's body where type variables and variables are replaced by actual types and arguments. 
The other rules model the behaviour of a computation enclosed in a try-block.  

When the enclosed computation has no effects, rule \refToRule{try-ret}, the try-block is reduced to the sequential composition of  this  computation with the final expression.
In case of a  \lstinline{do} composition of two computations,  
the \lstinline{do} is eliminated by reducing it to a try-block enclosing the first computation, with as final expression another try-block enclosing the second one; clauses are propagated to both. 

When the computation calls a magic method, the behaviour depends on whether a matching clause is found. If it is found, then the clause expression is executed, after 
replacing parameters by arguments, see rules \refToRule{catch-continue} and \refToRule{catch-stop}. In a $\Continue$-clause, the final 
expression is then executed. If no matching clause is found, then in rule \refToRule{fwd} the try-block is reduced to the sequential composition of the magic call with \mbox{the final expression. }

Finally, rule \refToRule{try-ctx} is the standard contextual rule. Note that, as already mentioned, in the pure reduction there are no rules for magic calls and \lstinline{do} expressions. Indeed, the pure reduction only handles expressions when they are enclosed in a try-block, so that effects which would be raised by magic calls \mbox{can be possibly caught. }

The $\mbody$ function is defined in \cref{fig:auxfun}. 
\begin{figure}[t]
\begin{small}
\begin{align*} 
\mbody(\TObj{\seq\NT}{\seq\md},\m) &=
  \begin{cases}
  \GenMBody{\seq\X}{\x}{\seq\x}{\e} 
    & \mbox{if}\ \seq{\md}(\m)=\MethDec{\m}{\defn}{\TMeth{\Ext{\seq\X}{\_}}{\_}{\_}}{\MBody{\x}{\seq\x}{\e}}\\ 
  \mbody(\seq\NT,\m) 
    &\text{otherwise} 
  \end{cases}
\\ 
\mbody(\NT_1\ldots\NT_n ,\m) &= 
  \begin{cases}
  \mbody(\NT_i,\m) & \mbox{if}\
 \mbody(\NT_i,\m)\ \mbox{defined for a unique}\ i\in 1..n \\ 
  \mbox{undefined} &\text{otherwise} 
  \end{cases}
 \\ 
\mbody(\Gen\tname{\seq\T},\m) &= 
  \begin{cases}
  \GenMBody{\seq\X}{\x}{\seq\x}{\Subst\e{\seq\T}{\seq\Y}} 
    & \mbox{if}\ \seq{\md}(\m)=\MethDec{\m}{\defn}{\TMeth{\Ext{\seq\X}{\_}}{\_}{\_}}{\MBody{\x}{\seq\x}{\e}}\\ 
      \Pair{\mgc}{\tname}
    & \mbox{if}\ \seq{\md}(\m)=\MethDec{\m}{\mgc}{\TMeth{\Ext{\_}{\_}}{\_}{\_}}{}\\ 
  \mbody(\Subst{\seq\NT}{\seq\T}{\seq\Y},\m) 
    &\text{otherwise} 
  \end{cases} \\ 
  & \text{where }\prog(\tname)=\TDec{\Gen\tname{\Ext{\seq\Y}{\_}}}{\seq\NT}{\seq{\md}} 
\end{align*}

\begin{quoting}
$\cmatch(\MCall{\ve}{\m}{\seq\T}{\seq\ve}, \cc)=
\begin{cases}
\CExp{\seq\X}{\x}{\seq\x}{\e}{\mode}&\mbox{if}\ \cc=\CC{\NT}{\m}{\seq\X}{\x}{\seq\x}{\e}{\mode}\ \mbox{and}\ \instanceof{\ve}{\NT}\\
\mbox{undefined}&\mbox{otherwise}
\end{cases}
$
\\
$\cmatch(\MCall{\ve}{\m}{\seq\T}{\seq\ve}, \cc\,\seq\cc)=
\begin{cases}
\cmatch(\MCall{\ve}{\m}{\seq\T}{\seq\ve}, \seq\cc)&\mbox{if}\ \cmatch(\MCall{\ve}{\m}{\seq\T}{\seq\ve}, \cc)\ \mbox{undefined}\\
\cmatch(\MCall{\ve}{\m}{\seq\T}{\seq\ve}, \cc)&\mbox{otherwise}\\
\end{cases}
$\\
$\cmatch(\MCall{\ve}{\m}{\seq\T}{\seq\ve}, \epsilon)=$\ \mbox{undefined}
\end{quoting}
\end{small}
\caption{Method look-up and matching}\label{fig:auxfun} 
\end{figure}
When a method is invoked on an object receiver, it is first searched among the method declarations in the object itself (first clause).   
Objects  cannot  declare magic methods, since they are made available to the programmer through predefined interfaces,  as illustrated by examples later.    If not found, then look-up is propagated to the parent  types. In this case, there should be exactly one parent type where method look-up is successful. In a nominal type, analogously, the method is first searched among those defined in the corresponding  declaration.  If a defined method is found, then the type parameters of the declaration are replaced by the corresponding type arguments in the nominal type.  If a magic method is found, necessarily in a type declaration, then the result is an  $\mgc$ tag, and the type name. If the method is not found, then look-up is propagated to the parents, again replacing type parameters  with  the corresponding type arguments. 

The $\cmatch$ function
is defined in \cref{fig:auxfun} as well.
We write $\instanceof{\ve}{\NT}$ meaning that the dynamic type of $\ve$, which can be extracted from $\ve$ by just erasing method bodies, defined in the obvious way, is a subtype of $\NT$.
Note that the extraction of the dynamic type and the subtyping check, being part of the runtime semantics, are purely syntactic. Notably, the extracted type could  violate constraints on conflicting methods and overriding (which will be checked by the type system) or even be ill-formed. 
If a (first) matching clause is found, then such clause provides an alternative body with its type parameters and parameters, which are instantiated with the corresponding arguments in the magic call.  This resembles very much what happens for the call of a defined method,  except that  look-up is performed in the catch clauses, following the syntactic order.

In \cref{fig:monadic-red} we give the rules for the monadic reduction.
\begin{figure}[t]
\begin{small}
\begin{math}
\begin{array}{c}
\NamedRule{pure}{
  \e \purered \e' 
}{ \e \red \mun(\e') }
{ } 
\qquad

\NamedRule{mgc}{ }
{ \MCall{\ve}{\m}{\seq\T}{\seq\ve} \red \Mmap{(\Ret{\ehole})}{\mrun\tname\m(\ve,\seq\ve) }  }
{ 
  \mbody(\ve,\m)=\Pair{\mgc}{\tname} 
}

\\[3ex]
\NamedRule{ret}{ }
{ \Do{\x}{\Ret\ve}{\e} \red \mun(\Subst\e\ve\x) }
{ } 
\qquad 
\NamedRule{do}{
  \e_1 \red \me 
}{ \Do\x{\e_1}{\e_2} \red \Mmap{(\Do\x{\ehole}{\e_2})}{\me} } 
{ }
\end{array}
\end{math}
\end{small}
\caption{Monadic (one-step) reduction}\label{fig:monadic-red} 
\end{figure}
An expression is reduced to a monadic expression either by propagating a pure step, embedding its result in the monad, as shown in rule \refToRule{pure},  or by interpreting in the monad magic calls and \lstinline{do} expressions, as shown in the following rules. 

In rule \refToRule{mgc}, method look-up finds a magic method in a type declaration named $\tname$, and the call reduces to the corresponding monadic expression. In other words, the effect is actually raised. 
To this end, we apply the function of type $\funtype{\mfun\Val}{\mfun\Exp}$ obtained by lifting, through $\aux{map}$, the context $\Ret{\ehole}$ to the monadic value obtained from the call.
 Here we identify the context $\Ret{\ehole}$,  which is an expression with a hole, with the function $\ve \mapsto \Ret[\ve]$ of type $\Val\to\Exp$. 
 
  Rules \refToRule{ret} and \refToRule{do} are the monadic version of the standard ones for these constructs.
More in detail, given a \lstinline{do} expression, when the first subterm does not raise effects, just returning a value, the expression can be  reduced to the monadic  embedding  of the second subterm, after replacing the variable with the returned value, as shown in rule \refToRule{ret}.
Rule \refToRule{do}, instead, propagates the reduction of the first subterm, taking into account possibly raised effects. To this end, we apply the function of type $\funtype{\mfun\Exp}{\mfun\Exp}$ obtained by lifting, through $\aux{map}$, the context $\Do\x{\ehole}{\e_2}$, to the monadic expression obtained from $\e_1$. 
 Analogously to above,  we identify the context $\Do\x{\ehole}{\e_2}$, which is an expression with a hole, with the function $\e \mapsto \Do{\x}{[\e]}{\e_2}$ of type $\Exp\to\Exp$. 
 
\smallskip
\noindent\textit{Monadic small-step reduction and semantics} 
Following the approach in \cite{DagninoGZ25}, we can define, on top of the monadic one-step reduction $\red$  on  $\Exp \times \mfun\Exp$: 
\begin{itemize}
\item
a small-step reduction on \emph{monadic configurations}
\item a \emph{finitary semantics} of expressions
\item  assuming an appropriate structure on the monad,  an \emph{infinitary semantics} of expressions.
\end{itemize} 
To apply the construction in  \cite{DagninoGZ25}, the monadic one-step reduction is required to be deterministic, and this is the case indeed.
\begin{proposition}[Determinism] \label{prop:det}
If $\e\red \me_1$ and $\e\red\me_2$ then  $\me_1 = \me_2$. 
\end{proposition}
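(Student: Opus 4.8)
The plan is to reduce the statement to determinism of the underlying pure reduction, and then lift it to the monadic reduction by a case analysis on the syntactic shape of $\e$. The essential observations are that $\mbody$ and $\cmatch$ are (partial) functions, so each applicable rule produces a uniquely determined right-hand side, and that for every shape of $\e$ at most one rule can fire, the relevant side conditions being mutually exclusive.

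First I would prove, as a lemma, that $\purered$ is deterministic, by structural induction on $\e$. If $\e = \MCall\ve\m{\seq\T}{\seq\ve}$, the only applicable rule is $\refToRule{invk}$, whose premise $\mbody(\ve,\m)=\ple{\seq\X,\x,\seq\x,\e}$ fixes the reduct since $\mbody$ is a function. If $\e = \TryShort{\e_0}{\handler}$, I would case on the shape of $\e_0$: when $\e_0$ is $\Ret\ve$, $\Do\y{\e_1}{\e_2}$, or a magic call, exactly one of $\refToRule{try-ret}$, $\refToRule{try-do}$, $\refToRule{catch-continue}$, $\refToRule{catch-stop}$, $\refToRule{fwd}$ applies, where in the last three cases the trichotomy continue/stop/undefined is resolved by the function $\cmatch$; otherwise $\e_0$ is either a defined call or a nested try-block, and only $\refToRule{try-ctx}$ applies, with a unique reduct given by $\mbody$ or by the induction hypothesis, respectively. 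The point that makes these cases disjoint is that magic calls and $\Do$-expressions are pure normal forms, so $\refToRule{try-ctx}$ can never overlap with the dedicated try-rules.

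Then I would prove the proposition itself by case analysis on $\e$, again using structural induction for the recursive construct. If $\e$ is a method call, then either $\mbody(\ve,\m)$ yields a defined method, in which case only $\refToRule{pure}$ (via $\refToRule{invk}$) fires and the reduct $\mun(\e')$ is unique by pure determinism, or it yields $\Pair{\mgc}{\tname}$, in which case only $\refToRule{mgc}$ fires and the reduct is fixed by $\tname$ and the function $\mrun\tname\m$; the two outcomes of $\mbody$ are disjoint. If $\e = \Do\x{\e_1}{\e_2}$, note that no pure rule rewrites a bare $\Do$, so $\refToRule{pure}$ is excluded; then either $\e_1 = \Ret\ve$ and only $\refToRule{ret}$ applies, or $\e_1 \neq \Ret\ve$ and only $\refToRule{do}$ applies, its reduct being unique by the induction hypothesis on $\e_1$. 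Crucially, $\refToRule{ret}$ and $\refToRule{do}$ cannot clash, because when $\e_1 = \Ret\ve$ the expression $\e_1$ is itself a normal form for $\red$, so the premise $\e_1 \red \me$ of $\refToRule{do}$ is unsatisfiable. If $\e$ is a try-block, only $\refToRule{pure}$ can fire and uniqueness follows from pure determinism; and $\Ret\ve$ is a normal form, so the statement holds vacuously there.

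I expect the main obstacle to be the bookkeeping of mutual exclusivity rather than any deep argument: one must verify, form by form, that the side conditions never simultaneously hold, relying in particular on the fact that magic calls, $\Do$-expressions, and $\Ret\ve$ are normal forms for the relevant relations. These normal-form facts are exactly what prevent $\refToRule{try-ctx}$ from overlapping the dedicated try-rules in the pure reduction, and $\refToRule{do}$ from overlapping $\refToRule{ret}$ in the monadic one; once they are established, determinism of $\red$ follows mechanically from the functionality of $\mbody$, $\cmatch$, and $\mrun\tname\m$.
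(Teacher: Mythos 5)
Your proof is correct, and there is nothing in the paper to compare it against: \cref{prop:det} is stated without proof (the text merely asserts ``this is the case indeed''), so you have supplied the routine argument the paper implicitly relies on, and you have supplied it accurately. The decomposition into determinism of $\purered$ followed by a shape analysis for $\red$ is the natural one, and every load-bearing fact you invoke checks out against the definitions: $\mbody$ is a partial function whose two successful outcomes (a defined-method quadruple versus $\Pair{\mgc}{\tname}$) are disjoint, which separates \refToRule{invk} from the catch rules in the pure relation and \refToRule{pure} from \refToRule{mgc} in the monadic one; $\cmatch$ is a partial function returning a mode, so the continue/stop/undefined trichotomy makes \refToRule{catch-continue}, \refToRule{catch-stop}, and \refToRule{fwd} mutually exclusive; and the normal-form observations are exactly right --- no pure rule rewrites a bare $\Ret\ve$, a bare $\Do\x{\e_1}{\e_2}$, or a magic call, which blocks \refToRule{try-ctx} from overlapping the dedicated try-rules, and $\Ret\ve$ has no $\red$-reduct, which blocks \refToRule{do} from overlapping \refToRule{ret}. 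Uniqueness of each right-hand side then follows since $\mun$, $\aux{map}$, and $\mrun\tname\m$ are (partial) functions. The only loose phrasing is the claim that for a defined call or nested try-block under a handler ``only \refToRule{try-ctx} applies'': when the enclosed expression is stuck (e.g.\ $\mbody$ undefined, or a stuck inner try-block), no rule applies at all, but determinism holds vacuously there, so this costs nothing.
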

We outline the construction, referring to \cite{DagninoGZ25} for technical details and proofs.

First of all, we set $\Conf = \Exp + \Res$, where $\Res = \Val + \Wrng$, $\Wrng = \{\wrng\}$. That is, a \emph{configuration} $\conf$ is either an expression or a result $\res$, which, in turn, is either a value, modelling successful termination, or $\wrng$, modelling  a stuck computation.
Then, we extend the  monadic  reduction $\red$ to configurations, getting the relation 
$\tred \subseteq \Conf \times \mfun\Conf$ shown in \cref{fig:mred-conf}.
\begin{figure}[t]
\begin{small}
\begin{math}
\begin{array}{c}
\NamedRule{exp}{
  \e\red\me 
}{ \e \tred \me }
{ } 
\BigSpace
\NamedRule{ret}{
}{\Ret{\ve}\tred \mun_\Conf(\ve) }
{ } 
\\[3ex]
\NamedRule{wrong}{
}{\e\tred \mun_\Conf(\wrng) }
{ \e\not\red\\
\e\ne\Ret{\ve}\ \mbox{for all}\ \ve\in\Val} 
\BigSpace
\NamedRule{res}{
}{\res \tred \mun_\Conf(\res) }
{ } 
\end{array}
\end{math}
\end{small}
\caption{Monadic (one-step) reduction on configurations}\label{fig:mred-conf} 
\end{figure}
Expressions which represent terminated computations reduce to the monadic embedding of the corresponding value or $\wrng$, respectively; moreover, results (either values or $\wrng$) conventionally reduce \mbox{to their monadic embedding.}

Now, we can define  a relation $\Red$ on $\mfun\Conf$ as follows:
\begin{quoting}
$\mconf\Red\mconf'$ iff $\mkl\tredfun(\mconf) = \mconf'$. 
\end{quoting}
In this way, computations on monadic configurations are described, as usual in small-step style, by sequences of $\Red$ steps. 

Then, the finitary semantics is the function \fun{\finsem{-}}{\Exp}{\mfun\Res + \{\infty\}} defined as follows\footnote{ Here and  in rule \refToRule{exp} we omit the injection into monadic configurations.}: 
\begin{quoting}
\begin{math}
\finsem\e = \begin{cases}
\mres & \text{if $\mun_\Conf(\e)\Redstar\mres$} \\ 
\infty & \text{otherwise} 
\end{cases}
\end{math}
\end{quoting}
where $\Redstar$ is the reflexive and transitive closure of $\Red$. 
This semantics describes only monadic results that can be reached in finitely many steps. 
In other words, all diverging computations are identified and no information on computational effects they may produce is available. 


To overcome this limitation, we introduce an \emph{infinitary semantics}.
As formally detailed in \cite{DagninoGZ25}, we assume, for each set $X$, a partial order on $\mfun X$ with a least element $\mbot_X$ and suprema $\msup$ of $\omega$-chains.
\label{res}
 Let \fun{\ctr}{\Conf}{\mfun\Res} be the function given by 
\begin{quoting}
\begin{math}
\ctr(\conf) = \begin{cases}
\mbot_\Res & \conf = \e \\ 
\mun_\Res(\res) & \conf = \res 
\end{cases}
\end{math}
\end{quoting}
For every $\e\in\Exp$ and $n\in\N$, we define 
$\infsem[n]{\e} = \mkl{\ctr}(\mconf)$ iff $\mun_\Conf(\e)\Red^n \mconf$. 
Note that $\infsem[n]{\e}$ is well-defined for all $n\in\N$ since $\Red$ is (the graph of) a total function and so is its $n$-th iteration.  
The sequence $(\infsem[n]{\e})_{n\in\N}$ turns out to be an $\omega$-chain, so we define the infinitary semantics as 
\begin{quoting}
\begin{math}
\infsem{\e} = \msup_{n\in\N} \infsem[n]{\e} 
\end{math}
\end{quoting}
Intuitively, $\infsem[n]{\e}$ is the portion of the  result that is reached after $n$ reduction steps. 
Hence, the actual result is obtained as the supremum of all such approximations, thus describing also the observable behaviour of  possibly diverging computations. 

\smallskip
\noindent\textit{Examples} 
We show now examples of type declarations providing the user interface to raise effects in an underlying monad, together with some reductions and semantics. 
Booleans, natural numbers, conditional, and checking that a number is even, are encoded as shown in \cref{ex:boolnat}, except that, for brevity, we assume that \lstinline{not} in \lstinline{True} directly returns \lstinline{False}. 
We recall the (inductive) definition of $\nat{n}$, the object \mbox{encoding the natural number $n$: }
\begin{quoting}
$\nat{0}=$ \lstinline{Zero}\HugeSpace
$\nat{n+1}=$ \lstinline@Succ{pred: def -> Nat <$\_$, $\nat{n}$>}@
\end{quoting}
 Finally,  a handler of shape  $\Handler{\seq\cc}{\x}{\Ret{\x}}$ is abbreviated  by  $\seq\cc$. 

\begin{example}[Exceptions]\label{ex:exception}
Let us fix a set $\ExSet$. 
The monad $\Mnd{\ExceptFun[\ExSet]} = \ple{\ExceptFun[\ExSet],\Mun{\ExceptFun[\ExSet]},\Mmul{\ExceptFun[\ExSet]}}$ is given by 
$\ExceptFun[\ExSet] X = X+\ExSet$, and 
\begin{quoting}
\begin{math}
\Mun{\ExceptFun[\ExSet]}(x) = x 
\HugeSpace
\alpha \mbind f = \begin{cases}
f(x)\ & \mbox{if}\ \alpha = x\in X \\
\alpha\ & \text{otherwise ($\alpha = \exc\in \ExSet$)} 
\end{cases}  
\end{math} 
\end{quoting}
where $+$ denotes disjoint union (coproduct) and, for simplicity, we omit the injections. 

We define a nominal type $\Exception$ of the (objects representing) exceptions. More formally, we assume that, for each $\ve$  such that $\instanceof{\ve}{\Exception}$, that is, the dynamic type of $\ve$ is a subtype of $\Exception$,  there is an associated exception in $\ExSet$, denoted $\getExc{\ve}$.  
As a minimal example, 
%
we define, besides $\Exception$, a subtype $\MyException$, and assume\footnote{In this simple example $\aux{exc}$ only depends on the exception type; generally, it could depend on a specific instance, for instance if exceptions provide error messages or other information.} $\getExc{\Exception}=\Exc$, and $\getExc{\MyException}=\MyExc$, with $\Exc$, $\MyExc$ elements of $\ExSet$. 
\begin{lstlisting}
Exception { throw : mgc[X] -> X }

MyException$\ext$Exception { throw : mgc[X] -> X }
\end{lstlisting}
Both types declare 
a magic method \lstinline{throw} that can be called to raise the exception corresponding to the receiver object. 
%
Note that the method \lstinline{throw} declares a type variable as result type, since it could be called in an arbitrary context.  
We have:
\begin{itemize}
\item $\pfun{\mrun\Exception\throw}{\Val\times\Val^\star}{\mfun\Val}$ only defined on $\Pair{\ve}{\epsilon}$  with $\instanceof{\ve}{\Exception}$\\
$\mrun{\Exception}{\throw}(\ve,\epsilon)=\getExc{\ve}$
\item $\pfun{\mrun{\MyException}\throw}{\Val\times\Val^\star}{\mfun\Val}$ only defined on $\Pair{\ve}{\epsilon}$  with $\instanceof{\ve}{\MyException}$\\
$\mrun{\MyException}{\throw}(\ve,\epsilon)=\getExc{\ve}$
\end{itemize}
The \lstinline{throw} magic method in $\MyException$ has the same behaviour of that in the parent type; however, redefining the method will be significant in the type-and-effect system (\cref{sect:effect-system}), to precisely track the possibly raised exceptions.

The type $\My$ below defines a method that, depending on whether its
argument (a natural number) is either even or odd, returns a natural or calls the method $\throw$ of $\MyException$.  
\begin{lstlisting}
My { 
  m : def Nat -> Nat<x y, y.match(Even).if[Nat ThenElse[Nat]](MyTE)}> 
}

MyTE = ThenElse[Nat]{
    then : def [Nat] ->Nat <_, return One> 
    else : def [Nat] ->Nat <_, Exception.throw[Nat]()>
} 

One = $\nat{1}$ = Succ{pred : def -> Nat <_,Zero>}
\end{lstlisting}
In the following reductions, 
we use the abbreviations $\MyTE$ and $\One$ above  and
write $\If$ for $\ifNatTE$.

Consider the expression:
\begin{quoting}
$\e_1 = \TryNoFinal{\MCallNG\My\mt{\One}}{\cc_1}$ where
$\cc_1=\CC{\Exception}{\throw}{\X}{}{\x}{\Ret\ \One}{ \Stop}$
\end{quoting}
We have the following small-step reduction sequences on monadic configurations\footnote{We omit the injections from monadic expressions and values.}:
\begin{quoting}
\begin{math}
\begin{array}{lcl}
\e_1&\Red&  \TryNoFinal{\MCallNG{\MCallNG\One\match{\Even}}{\If}{\MyTE}}{\cc_1}\\
&\Red & \TryNoFinal{\MCallNG{ \MCallNG\Even\succtt{\MCallNG\One\pred}}    {\If}{\MyTE}}{\cc_1}\\
&\Red & \TryNoFinal{\MCallNG{ \MCallNG{\MCallNG{\MCallNG\One\pred{} }\match\Even}\nottt{}  }    {\If}{\MyTE}}{\cc_1}\\
&\Red & \TryNoFinal{\MCallNG{ \MCallNG{\MCallNG{\Zero }\match\Even}\nottt{}  }    {\If}{\MyTE}}{\cc_1}\\
&\Red & \TryNoFinal{\MCallNG{ \MCallNG{\MCallNG{\Even }\zero{}}\nottt{}  }    {\If}{\MyTE}}{\cc_1}\\
& \Red& \TryNoFinal{\MCallNG{ \MCallNG{\True}\nottt{}  }    {\If}{\MyTE}}{\cc_1}\\
& \Red& \TryNoFinal{\MCallNG{\False} {\If}{\MyTE}}{\cc_1}\\
& \Red& \TryNoFinal{\MCallNG{\Exception} {\throwNat}{}}{\cc_1}\\
& \Red& \Ret\ \One \Red \One
\end{array}
\end{math}
\end{quoting}
 where all steps are derived by  rules ~\refToRule{pure} in \cref{fig:monadic-red} and \refToRule{exp} in \cref{fig:mred-conf}, except the last one, which is derived by rule \refToRule{ret} in \cref{fig:mred-conf}. 

Considering now:
\begin{quoting}
$\e_2 = {\TryNoFinal{\MCallNG\My\mt{\One}}{\cc_2}}$ where
$\cc_2=\CC{\MyException}{\throw}{\X}{\x}{}{\Ret\ \One}{}$
\end{quoting}
we have
\begin{quoting}
\begin{math}
\begin{array}{lcl}
\e_2&\Redstar&  \TryNoFinal{\MCallNG{\False} {\If}{\MyTE}}{\cc_2}\text{ first 8 steps as for $e_1$}\\
&\Red& \Do{\x}{\MCallNG\Exception\throw{}}{\Ret\ \Zero}\\
&\Red&    \Exc\ \ \text{since }\\
& & \Do{\x}{\MCallNG\Exception\throw{}}{\Ret\ \Zero}\red\Exc\text{  by  rule~\refToRule{do}} \\
\end{array}
\end{math}
\end{quoting}
\end{example}
 
As the reader may have noted,  the $\Stop$-clause models the expected behaviour of exceptions, which, even when caught, interrupt the normal flow of execution. In the following example we show a different interface, still using the exception monad, paired with $\Continue$-clauses, which replace an effect with an alternative behaviour in a continuous manner.
\begin{example}[Failure]\label{ex:fail}\footnote{This example is a minimal version of one in \cite{BrachthauserSO20jfp}.}
Consider the exception monad with  $\Fail\in\ExSet$, and the  following  type declarations,
with $\mrun{\Failure}{\fail}$ returning $\Fail$. 
\begin{lstlisting}
Failure[X]{ fail : mgc -> X }

String{ ...  toNat: def -> Nat < ... > }

Test {
 sumAsNat: def String String -> Nat
   <$\_$ s1 s2,do n1=s1.toNat();do n2=s2.toNat();do n=n1.sum(n2);return n>
}    
\end{lstlisting}
Method \lstinline{toNat}, whose implementation is omitted, is expected to return the natural number represented by a string, if any; otherwise, $\Fail$ is raised, through the magic call $\Failure[\Nat].\fail()$. 
We assume  a method \lstinline{sum} in \lstinline{Nat} returning the sum of two numbers. 

The  expression \lstinline{Test.sumAsNat(s1,s2)} clearly raises $\Fail$ if one of the two strings does not represent a natural number. 
However, we can catch such effect returning a default value:
\begin{lstlisting}
try Test.sumAsNat(s1,s2) with Failure[Nat].fail : <_ , return Zero>$_\Continue$ 
\end{lstlisting}
In this way, we always get a result; in particular, if \lstinline{s1} does not represent a natural number, and \lstinline{s2} represents the natural number \lstinline{n}, we get \lstinline{n}, as expected.
Instead, with a $\Stop$-clause, we would get \lstinline{Zero}, without performing the sum. An example of reduction sequence is below.\footnote{ We use $\kw{ret}$ for $\kw{return}$ to save space.}
%

\begin{small}
$\e = \TryNoFinal{\MCallNG\Test\sumAsNat{\aU,\aD}}{\cc}$ where
$\cc=\CCSh{\Failure[\Nat]}{\fail}{\_}{}{\Ret\ \Zero}{\Continue}$
\begin{math}
\begin{array}{lcl}
\e&\Red&  \TryNoFinal{\DoS{\nU}{\MCallNG\aU\toNat{}}  {\DoS{\nD}{\MCallNG\aD\toNat{}}  {\DoS{\nT}{\MCallNG\nU\sumT{\nD}}{\RetS{\nT}   }}    }  }{\cc}\\
&\Red & \TryNoFinal{\MCallNG\aU\toNat{}}{\cc,\MBody{\nU} {} {\TryNoFinal{    \DoS{\nD}{\MCallNG\aD\toNat{}}  {\DoS{\nT}{\MCallNG\nU\sumT{\nD}}{\RetS{\nT}   }}       }{\cc} }  }         \\
&\Redstar& \TryNoFinal{\RetS\One}{\cc,\MBody{\nU} {} {\TryNoFinal{    \DoS{\nD}{\MCallNG\aD\toNat{}} {\DoS{\nT}{\MCallNG\nU\sumT{\nD}}{\RetS{\nT}   }}   }{\cc} }  }         \\
&\Red & \Do {\nU}{\Ret \One}{ {\TryNoFinal{    \Do{\nD}{\MCallNG\aD\toNat{}}  {\Ret{\MCallNG\nU\sumT{\nD}}}}{\cc} } }         \\
&\Red &  {\TryNoFinal{    \DoS{\nD}{\MCallNG\aD\toNat{}}  {\DoS{\nT}{\MCallNG\One\sumT{\nD}}{\RetS{\nT}   }}    } {\cc} }       \\   
&\Red & \TryNoFinal{\MCallNG\aD\toNat{}}{\cc,\MBody{\nD} {} {\TryNoFinal{  {\DoS{\nT}{\MCallNG\One\sumT{\nD}}{\RetS{\nT}   }}  }{\cc} }  }         \\
&\Redstar & \TryNoFinal{\Failure[\Nat].{\fail}}{\cc,\MBody{\nD} {} {\TryNoFinal{  {\DoS{\nT}{\MCallNG\One\sumT{\nD}}{\RetS{\nT}   }}  }{\cc} }  }   (\ast)      \\
&\Red & \Do {\nD}{\Ret \Zero}{{\TryNoFinal{  {\DoS{\nT}{\MCallNG\One\sumT{\nD}}{\RetS{\nT}   }} }{\cc} } }         \\
&\Red &   {{\TryNoFinal{  {\DoS{\nT}{\MCallNG\One\sumT{\Zero}}{\RetS{\nT}   }}  }{\cc} } }         \\
&\Red &   {{\TryNoFinal{\MCallNG\One\sumT{\Zero}}  { \cc, \MBody {\nT}{}{\RetS{\nT} } } } }         \\
&\Redstar &   {{\TryNoFinal{\Ret\One}  { \cc, \MBody {\nT}{}{\RetS{\nT} } } } }         \\
&\Red  &  \Do {\nT}{\Ret \One}{\Ret\nT } \\
&    \Red  &  \Ret \One   \Red \One 
\end{array}
\end{math}
\end{small}

 If the catch $\Continue$-clause were a $\Stop$-clause, then the expression at line $(\ast)$ would  reduce to $\Ret\Zero$ and therefore the reduction would produce $\Zero$. 

It is worthwhile to compare the interfaces:
\begin{lstlisting}
Exception { throw : mgc[X] -> X }

Failure[X]{ fail : mgc String -> X }
\end{lstlisting} 
\lstinline{Exception} offers a generic method \lstinline{throw}, so that a magic call \lstinline{Exception.throw()} can occur in any context. On the other hand, \lstinline{Failure} is a parametric type, so that a magic call \lstinline{Failure[T].fail()} can only occur where a \lstinline{T} is expected. 
\end{example}

\begin{example}[Non-determinism]\label{ex:pow}
The monad $\Mnd\ListFun = \ple{\ListFun,\Mun\ListFun, \Mmul\ListFun}$ is given by
$\ListFun X$ the set of (possibly infinite) lists over $X$, coinductively defined by the following rules: 
$\elist\in\ListFun(X)$ and, 
if $x\in X$ and $l\in\ListFun(X)$, then $x\cons l \in\ListFun(X)$. 
We use the notation $[x_1,\ldots,x_n]$ to denote the finite list $x_1\cons \ldots\cons x_n\cons\elist$. 
Then, the unit is 
$\Mun\ListFun_X(x) = [x]$, and the bind is corecursively defined as follows: 
$\elist\mbind f = \elist$ and 
$(x\cons l)\mbind f = f(x)(l\mbind f)$, 
where juxtaposition denotes the concatenation of possibly infinite lists.
%
We define the nominal type $\Chooser$ declaring the magic method
$\choott$,  with $\mrun\Chooser\choott$  returning the list   (monadic value) consisting of  the  values $\True$ and $\False$
and we use the abbreviations $\MyTEU$ and $\MyTED{\y}$ defined below. 
\begin{lstlisting}
Chooser { choose : mgc -> Bool }

My {
  m1 : def -> Nat  <$\_$, do z = Chooser.choose();
                          z.if[Nat ThenElse[Nat]](MyTE1)>
  m2 : def Nat -> Nat <$\_$ y, do z = Chooser.choose(); 
                               z.if[Nat ThenElse[Nat]](MyTE$^\ytt$)>
}

MyTE1 = ThenElse[Nat]{
  then : def [Nat] -> Nat <_, return One> 
  else : def [Nat] -> Nat <_, return Zero>
} 

MyTE$^\ytt$ = ThenElse[Nat]{
  then : def [Nat] ->Nat <_, return y> 
  else : def [Nat] ->Nat <_, My.m2(y.succ())>
} 
 
Two = $\nat{2}$ = Succ{ pred: def ->Nat <_,Succ{pred: def ->Nat <_,Zero>}> }
\end{lstlisting}
We have the following small-step reduction sequences:

\begin{small}
\begin{math}
\begin{array}{lcl}
\List{\MCallNG\My\mU{}}&\Red&  \List{\DoS{\ztt}{\MCallNG\Chooser\choott{}}  {\MCallNG\ztt\If\MyTEU}} \\
& \Red & \List{\DoS{\ztt}{\RetS\True}  {\MCallNG\ztt\If\MyTEU},\DoS{\ztt}{\RetS\False}  {\MCallNG\ztt\If\MyTEU} }\\
& \Red & \List{{\MCallNG\True\If\MyTEU}, {\MCallNG\False\If\MyTEU} }\\
& \Red & \List{\RetS \One, \RetS\Zero }\\
& \Red & \List{ \One,\Zero }\\[1ex]
 \List{\MCallNG\My\mD{\Zero}}&\Red&  \List{\DoS{\ztt}{\MCallNG\Chooser\choott{}}  {\MCallNG\ztt\If{\MyTED{\ytt}}}} \\
& \Red & \List{\DoS{\ztt}{\RetS\True}  {\MCallNG\ztt\If{\MyTED{\ytt}}}\, ,\,\DoS{\ztt}{\RetS\False}  {\MCallNG\ztt\If{\MyTED{\ytt}}} }\\
& \Red & \List{{\MCallNG\True\If{\MyTED{\ytt}}}\, ,\,{\MCallNG\False\If{\MyTED{\ytt}}} }\\
& \Red & \List{\RetS \Zero\, ,\, \MCallNG\My\mD{\MCallNG\Zero\succtt{}}  }\\
& \Red & \List{ \Zero\, ,\, \MCallNG\My\mD{\One} }\\
& \Red & \List{ \Zero\, ,\, \DoS{\ztt}{\MCallNG\Chooser\choott{}}  {\MCallNG\ztt\If{\MyTED{\ytt}}}}\\
& \Red & \List{ \Zero\, ,\, \DoS{\ztt}{\RetS\True}  {\MCallNG\ztt\If{\MyTED{\ytt}}}\, ,\,\DoS{\ztt}{\RetS\False}  {\MCallNG\ztt\If{\MyTED{\ytt}}} }\\
& \Red & \List{ \Zero\, ,\, {\MCallNG\True\If{\MyTED{\ytt}}}\, ,\,{\MCallNG\False\If{\MyTED{\ytt}}}  }\\
& \Red & \List{ \Zero\, ,\, \RetS \One\, ,\,  \MCallNG\My\mD{\MCallNG\One\succtt{}}  }\\
& \Red & \List{ \Zero\, ,\,  \One\, ,\,  \MCallNG\My\mD{\Two }}\\
& \Redstar & \List{ \Zero\, ,\,  \One\, ,\,  \Two\, ,\,\MCallNG\My\mD{\MCallNG\Two\succtt{}}}\\
& \cdots &
\end{array}
\end{math}
\end{small}

Note that the second reduction is non-terminating,  in the sense that  a monadic result (a list of values) is never reached. Hence, with the finitary semantics, we get $\finsem{\List{\MCallNG\My\mD{\Zero}}}=\infty$.  
With the infinitary semantics, instead, we get the following $\omega$-chain:
\begin{quoting}
\lstinline{[]}, \ldots, \lstinline{[Zero]}, \ldots, \lstinline{[Zero, One]}, \ldots, \lstinline{[Zero, One, Two]}, \ldots, \lstinline{[$\nat{0}$, $\ldots$, $\nat{n}$]}, \ldots, 
\end{quoting}
whose supremum is the infinite list of the (objects representing the) natural numbers. 
 \end{example}

\begin{example}\label{ex:prob}
Denote by 
$\DistFun X$ the set of probability subdistributions $\alpha$ over $X$ with countable support, i.e.,
\fun{\alpha}{X}{[0..1]} with $\sum_{x\in X}\alpha(x) \leq 1$ and $\Supp(\alpha) = \{ x \in X \mid  \alpha(x)\ne 0 \}$ countable set. 
We write $r\cdot\alpha$ for the pointwise multiplication of a subdistribution $\alpha$ with a number $r\in[0,1]$. 
The monad $\Mnd\DistFun = \ple{\DistFun,\Mun\DistFun,\Mmul\DistFun}$ is given by 
\begin{quoting}
\begin{math}
\Mun\DistFun(x) = y \mapsto \begin{cases}
  1 & y = x \\
  0 & \text{otherwise}
\end{cases} \HugeSpace
\alpha \mbind f = \sum_{x\in X} \alpha(x)\cdot f(x) 
\end{math}
\end{quoting}
\end{example}

We use the same nominal type $\Chooser$ with the magic method $\choott$, now
returning the  distribution  consisting of  the   values $\True$ and $\False$  with probability $\frac{1}{2}$, that we denote by ${\List{\frac{1}{2}:\Ret\True,\frac{1}{2}:\Ret\False}}$.

The expressions $\List{1:\MCallNG\My\mU{}}$ and $\List{1:\MCallNG\My\mD{\Zero}}$ can be reduced  analogously to  the previous example:
\begin{quoting}
$\List{1:\MCallNG\My\mU{}}\Redstar\List{\frac{1}{2}:\One,\frac{1}{2}:\Zero}$\\
$\List{1:\MCallNG\My\mD{\Zero}}\Redstar \List{ \frac{1}{2}:\Zero,\frac{1}{4}:\One,\frac{1}{8}:\Two,\frac{1}{16}:\MCallNG\My\mD{\MCallNG\Two\succtt{}}}\ \ldots$
\end{quoting}

Again, the second reduction is non-terminating, hence, with the finitary semantics, we get $\finsem{\List{1:\MCallNG\My\mD{\Zero}}}=\infty$.  
With the infinitary semantics we get the following $\omega$-chain:
\begin{quoting}
\lstinline{[]}, \ldots, \lstinline{[$\frac{1}{2}$:Zero]}, \ldots, \lstinline{[$\frac{1}{2}$:Zero, $\frac{1}{4}$:One]}, \ldots, \lstinline{[$\frac{1}{2}$:$\nat{0}$, $\ldots$, $\frac{1}{2^{n+1}}$:$\nat{n}$]}, \ldots, 
\end{quoting}
whose supremum is the infinite list where each (object representing the) number $n$ has \mbox{probability $\frac{1}{2^{n+1}}$. }


\section{Type-and-effect System}\label{sect:effect-system}

In order to equip the language with  a type-and-effect  system, first of all we extend the syntax and the signatures, as shown in  \cref{fig:effects}.  
\begin{figure}[th]
\begin{math}
\begin{grammatica}
\produzione{\MT}{\TMeth{\Ext{\seq\X}{\seq\UT}}{\seq\T}{\TEff{\T}{\eff}}}{method type-and-effect} \\
\produzione{\md}{\MethDec{\m}{\defn}{\MT}{\Pair{\x\,\seq\x}{\e}}\mid\MethDec{\m}{\abs}{\MT}{}\mid\MethDec{\m}{\mgc}{\MT}{}}{method declaration}\\
\produzione{\eff}{\eZero\mid\eTop\mid\EComp{\eff}{\eff'}\mid\eCall{\T}{\m}{\seq{\T}}}{effect} \\
\produzione{\sig}{\seq\m:\seq{\kind}\ \seq\MT}{signature}  \\
\produzione{\kind}{\abs \mid \defn\mid\mgc}{(method) kind}\\
\end{grammatica}
\end{math}
\caption{Adding effects}\label{fig:effects} 
\end{figure}
In method declarations, method types are replaced by \emph{method type-and-effects}, where an \emph{effect} component is added; we maintain the same meta-variable for simplicity. 
They are considered equal up-to $\alpha$-renaming. That is,  $\TMethEff{\Ext{\seq\X}{\seq\UT}}{\seq\T}{\T}{\eff}=\Subst{(\TMethEff{\Ext{\seq\Y}{\seq{\UT'}}}{\seq{\T'}}{\T'}{\eff'})}{\seq\X}{\seq\Y}$. 

Effects are the empty effect, the top effect, union of effects, and  \emph{call-effects}. For magic methods, this component is assumed to have a  canonical  form, hence can be omitted in  the  concrete syntax. Notably, for a method  $\MethDec{\m}{\mgc}{\TMeth{\Ext{\seq\X}\_}{\_}{\_}}{}$ in the declaration of the nominal type $\Gen\tname{\Ext{\seq\Y}{\_}}$,  the effect is $\eCall{\Gen{\tname}{\seq\Y}}{\m}{\seq\X}$.
In signatures, the information associated to method names is analogously extended; moreover, the additional kind $\mgc$ is considered.  

\begin{figure}[th]
\begin{math}
\begin{grammatica}
\produzione{\TEnv}{{\seq\X}\subt{\seq\T}}{type environment}\\
\produzione{\Gamma}{{\seq\x}:{\seq\T}}{environment}
\end{grammatica}
\end{math}
\caption{Syntax of (type) environments}\label{fig:env} 
\end{figure}

Type environments and environments, defined in \cref{fig:env}, are assumed to be maps, from type variables to   types   (their bounds), and from variables to types, respectively. Hence (type) variables are distinct, and the two sequences have the same length.

Before the formal details, we illustrate the  most  distinctive feature of our type system.

 \smallskip
\noindent\textit{Variable call-effects and simplification} 
As shown in \cref{fig:effects},  except for  $\eTop$, effects are essentially (representations of) sets of call-effects, with $\eZero$ the empty set and $\EComp{}{}$ the union.  Sets of ``atomic'' effects are a rather natural idea, generalizing what happens, e.g., in Java \lstinline{throws}  clauses. What is interesting here is the nature of such atomic effects, and the role of $\eTop$. 
 A call-effect $\eCall{\T}{\m}{\seq{\T}}$ is a static approximation of the computational effects of a call to $\m$ with receiver of type $\T$ and type arguments $\seq\T$.  
More precisely, we only allow call-effects which are:

\begin{description}
\item[magic] $\eCall{\T}{\m}{\seq{\T}}$ with $\T$ object type (that is, not of shape $\X$), and $\m$ magic in $\T$. As expected, this means that this magic method could be possibily invoked, raising the corresponding computational effect; for instance, as will be shown in \cref{ex:simpl}, a call-effect \lstinline{Exception.throw} denotes that an exception could be possibly thrown.
\item[variable]  $\eCall{\X}{\m}{\seq\T}$. This call-effect can be assigned to code parametric on the type variable $\X$; the meaning is that, for each instantiation of $\X$ with an object type $\T$, this becomes, through a non-trivial process called \emph{simplification}, the effect of $\m$ in $\T$. In other words, $\eCall{\X}{\m}{\seq\T}$  is a parametric effect,  which can be made concrete in the types which replace $\X$. 
\end{description}
In addition to sets of call-effects, we include the \emph{top} effect, which plays the role of default for (typically abstract) methods which do not pose constraints on the effects in implementations. In a sense, this generalizes the meaning of a \mbox{\lstinline{throws Exception} clause in Java. }

We illustrate now the above features on an example, notably showing how (simplification of) variable call-effects allows a very precise approximation. 

\begin{example}\label{ex:simpl}
 Consider again the encoding of booleans in \cref{ex:boolnat} (\cref{sect:lang}), where we added effect annotations.
 \begin{lstlisting}
Bool { if : abs [X Y$\ext$ThenElse[X]] Y -> X ! Y.then $\vee$ Y.else }

True $\ext$ Bool { 
  if : def [X Y$\ext$ThenElse[X]] Y -> X ! Y.then <$\_$ te, return te.then()>
}

False $\ext$ Bool {
  if : def [X Y$\ext$ThenElse[X]] Y -> X ! Y.else <$\_$ te, return te.else()>
}

ThenElse[X] {
  then: abs -> X ! $\eTop$
  else: abs -> X ! $\eTop$
}
\end{lstlisting}
 The abstract method {if} in \lstinline{Bool} has, as parameter type,  the type variable \lstinline{Y}, expected to be instantiated with subtypes of \lstinline{ThenElse[X]}, hence providing  methods \lstinline{then} and \lstinline{else}. As effect, the method declares the union of two variable call-effects,  \lstinline{Y.then $\vee$ Y.else}. This means that the computational effects of this method can only be those propagated from calling either \lstinline{then} or \lstinline{else} on the argument.  Subtypes \lstinline{True} and \lstinline{False} specializes the effect of \lstinline{if} as expected, since they select the \lstinline{then} and the \lstinline{else} alternative provided by the argument, respectively. 

In the type  \lstinline{ThenElse[X]} the  methods \lstinline{then} and \lstinline{else} are abstract,  and their implementation in subtypes is allowed to raise arbitrary effects, as denoted by $\eTop$. 
 This makes sense, since we would like to instantiate \lstinline{ThenElse[X]} on types implementing these methods in arbitrary ways. 

For instance, consider the following code, where \lstinline{b} is an expression of type \lstinline{Bool}, and \lstinline{Exception}, $\MyTE$ are the type and object introduced in \cref{ex:exception}, 
 where the latter has been annotated with effects, and \lstinline{MyTEType} is the corresponding object type.

\begin{lstlisting}
b.if[Nat, MyTEType](MyTE)

MyTE = ThenElse[Nat]{
  then: def  -> Nat ! $\eZero$ <$\_$, return Zero>
  else: def  -> Nat ! MyException.throw[Nat] <$\_$, MyException.throw()>
}

MyTEType = 
ThenElse[Nat]{
  then: def -> Nat ! $\eZero$, 
  else: def -> Nat ! MyException.throw[Nat]
}
\end{lstlisting}
The argument passed to \lstinline{if} is an object implementing \lstinline{then} by returning $\Zero$, hence declaring no effects, whereas \lstinline{else} calls a magic method, and declares the corresponding effect. 

The effect computed for the call is, as intuitively expected, \lstinline@MyException.throw[Nat]@. This happens thanks to effect simplification, which will be formally specified in \cref{fig:simplify}.
Indeed, looking for method \lstinline{if} in the receiver's type \lstinline{Bool} gives the effect \lstinline{X.then $\vee$ X.else}, which is instantiated to the argument type,  giving \lstinline{MyTEType.then $\vee$ MyTEType.else}. This effect would be highly inaccurate. 
However, \lstinline{MyTEType.then} and \lstinline{MyTEType.else} are simplified to the effects of the corresponding methods, that is, $\eZero$ and \lstinline$MyException.throw[Nat]$, respectively. 

 In conclusion, our type-and-effect system supports a form of effect polymorphism which does not need additional ingredients, such as, e.g.,  explicit effect variables. Indeed, the effect of a method can be parametric on those of methods called on type variables in the context (variable call-effects): for instance, the effect of \lstinline{if} is parametric on \lstinline{Y.then} and \lstinline{Y.else}.   
This only relies on existing language features, notably on the OO paradigm. The key point is that a method is identified by, besides its name, the type where it is declared. Hence, to be parametric on the effect of a method, it is enough to be parametric on the type where it is declared, and this is for free since we have type variables. 
In other words, whereas effect variables stand for arbitrary effects, variable call-effects stand for effects declared by a method, and standard instantiation is complemented here by a \mbox{ non-trivial  step of simplification. }

\end{example}

\smallskip
\noindent\textit{Key formal definitions} 
 The typing rules for values and expressions are given in \cref{fig:typing-exp}. 
The typing judgment for values has  the shape $\IsWFVal{\TEnv}{\Gamma}{\ve}{\T}$, since values have no effects;  the one  for expressions,  instead, has  the  shape $\IsWFExp{\TEnv}{\Gamma}{\e}{\T}{\eff }$. 
\begin{figure}[th]
\begin{small}
\begin{math}
\begin{array}{c}
\NamedRule{t-var}{}
{ \IsWFVal{\TEnv}{\Gamma}{\x}{\T} }
{\Gamma(\x)=\T } 
\BigSpace
\NamedRule{t-obj}{
  \IsWFMethod{\TEnv}{\Gamma}{\TObj{\seq\NT}{\sig}}{ \seq{\md}}
  }
{ \IsWFVal{\TEnv}{\Gamma}{\Obj{\seq\NT}{\seq\md}}{\TObj{\seq\NT}{\sig}} }
{ 
\typeof{\TEnv}{\seq\md}{\sig}\\
\typeof{\TEnv}{\TObj{\seq\NT}{\sig}}{\sig'}\\
\NoMgc{\sig}\\
\NoAbs{\sig'}
}\\[6ex]
\NamedRule{t-invk}{
  \IsWFVal{\TEnv}{\Gamma}{\ve_0}{\T_0}\ \BigSpace \IsWFVal{\TEnv}{\Gamma}{\ve_i}{\T'_i}\ \forall i\in 1..n}
  {  \IsWFExp{\TEnv}{\Gamma}{\MCall{\ve_0}\m{\seq\T}{\ve_1,\ldots,\ve_n}}{\Subst{\T}{\seq\T}{\seq\X}}{\eff'}}
{\IsMType{\TEnv}{\T_0}{\m}{\MkTE{\_}{\Ext{\seq\X}{\seq{\UT}}}{\T_1\ldots\T_n}{\T}{\eff}}\\
   \SubType{\TEnv}{\seq\T}{\Subst{\seq{\UT}}{\seq\T}{\seq\X}}\\
     \SubType{\TEnv}{\T'_i}{\Subst{\T_i}{\seq\T}{\seq\X}}\ \forall i \in 1..n\\
 \EffRed{\TEnv}{\Subst{\eff}{\seq\T}{\seq\X}}{\eff'}
}
\\[9ex]
\NamedRule{t-ret}{
  \IsWFVal{\TEnv}{\Gamma}{\ve}{\T}
  }
{ \IsWFExp{\TEnv}{\Gamma}{\Ret\ve}{\T}{\eZero } }
{ } 
\BigSpace
\NamedRule{t-do}{
  \IsWFExp{\TEnv}{\Gamma}{{\e}}{\T}{\eff} \BigSpace
   \IsWFExp{\TEnv}{\Gamma,\TVar{\T}{\x}}{{\e'}}{\T'}{\eff'} 
  }
{ \IsWFExp{\TEnv}{\Gamma}{\Do\x{\e}{\e'}}{\T'}{\EComp{\eff}{\eff'}} }
{ 
} 
\\[4ex]
\NamedRule{t-try}{\IsWFExp{\TEnv}{\Gamma}{\e}{\T}{\eff}\BigSpace
\IsWFHandler{\TEnv}{\Gamma}{\T}{\handler}{\T'}{\hfilter}
}{\IsWFExp{\TEnv}{\Gamma}{\TryShort{\e}{\handler}}{\T'}{\FilterFun{\eff}{\hfilter}}}
{
}
\\[4ex]
\NamedRule{t-handler}
{\IsWFExp{\TEnv}{\Gamma,\TVar{\T}{\x}}{\e'}{\T'}{\eff'}\BigSpace
\IsWFClause{\TEnv}{\Gamma}{\T''}{\cc_i}{\cfilter_i}\ \forall i\in 1..n
}{\IsWFHandler{\TEnv}{\Gamma}{\T}{\Handler{\cc_1\ldots\cc_n}{\x}{\e'}}{\T''}{\HFilter{\cfilter_1\ldots\cfilter_n}{\eff'}}
}{
 \SubType{\TEnv}{\T'}{\T''} 
 }
\\[4ex]
\NamedRule{t-continue}{
  \IsWFExp{\TEnv}{\Gamma,\TVar{\NT_\x}{\x},\TVars{\T}{\x}} {\e} {\T''} {\eff}
}{\IsWFClauseNarrow{\TEnv}{\Gamma}{\T'}{\CC{\NT_\x}{\m}{\seq\X}{\x}{\seq\x}{\e}{\Continue}}{\CFilter{\T_\x}{\m}{\seq{\X}}{\seq\UT}{\eff}{\Continue}}}
{\IsMType{\TEnv}{\NT_\x}{\m}{\MkTE{\mgc}{\Ext{\seq\X}{\seq\UT}}{\seq\T}{\T}{\_}}\\
  \SubType{\TEnv}{\T''}{\T}  
}
\\[5ex]
\NamedRule{t-stop}{
  \IsWFExp{\TEnv}{\Gamma,\TVar{\NT_\x}{\x},\TVars{\T}{\x}} {\e} {\T''} {\eff}
}{\IsWFClauseNarrow{\TEnv}{\Gamma}{\T'}{\CC{\NT_\x}{\m}{\seq\X}{\x}{\seq\x}{\e}{\Stop}}{\CFilter{\T_\x}{\m}{\seq{\X}}{\seq\UT}{\eff}{\Continue}}}
{\IsMType{\TEnv}{\NT_\x}{\m}{\MkTE{\mgc}{\Ext{\seq\X}{\seq\UT}}{\seq\T}{\_}{\_}}\\
  \SubType{\TEnv}{\T''}{\T'}  
}
\end{array}
\end{math}
\end{small}
\caption{Typing rules for values and expressions}\label{fig:typing-exp}
\end{figure}

They rely on the following auxiliary notations:
\begin{enumerate}
\item $\EffRed{\TEnv}{\eff}{\eff'}$ meaning that the effect $\eff$ can be simplified to $\eff'$
\item  $\FilterF{\handler}$ the \emph{filter function} associated to the handler $\handler$
\item $\typeof{\TEnv}{\T}{\sig}$ meaning that we can safely extract a signature $\sig$ from a type; the notation $\IsMType{\TEnv}{\T}{\m}{\MkT{\kind}{\MT}}$ is an abbreviation for 
$\typeof{\TEnv}{\T}{\sig}$ and $\sig(\m)=\MkT{\kind}{\MT}$
\item $\SubType{\TEnv}{\T}{\T'}$ the subtyping relation
\end{enumerate}
We illustrate the typing rules and notations (1) and (2), whereas the formal definitions of (3) and (4), which are almost standard, are given in the extended version \cite{DagninoGZ25bis}. 
The reader should only know that (3) models extracting the structural type information; in this phase, constraints about no conflicting method definitions and safe overriding are checked. Then, typing rules model another phase where code (method bodies) is typechecked against this type information.\footnote{Differently from other type systems for Java-like languages \cite{IgarashiPW99}, here the two phases cannot be sequenced, since we still need signature extraction for the types introduced ``on the fly'' by objects. }

Rule \refToRule{t-var} is straightforward.  
An object is well-typed, rule \refToRule{t-obj}, if a signature can be safely extracted from its type. This essentially means that there are no conflicts among the parent types, and they are safely overriden by the method declarations in the object (first and second side conditions).  Moreover, an object cannot declare magic methods (third side condition), and should provide an implementation for all its (either inherited or declared) methods (last side condition). 
Finally, in the premise, (bodies of) declared methods should be well-typed with respect to an enclosing type which is the object type. 

In rule \refToRule{t-invk}, the type-and-effect of the invoked method is found in the (signature extracted from) the receiver's type,  as expressed by the first side condition.  In the second side condition, the type annotations in the call should (recursively) satisfy the bounds for the corresponding type variables, and, in the third side condition, the argument types  should be subtypes of the corresponding parameter types where type variables have been replaced by the corresponding type annotations. The type assigned to the call is the return type of the method, with the same replacement. The method effect is instantiated by replacing the type variables with the type annotations. However, the effect assigned to the call is obtained by a further \emph{simplification step} (last side condition). 

The formal definition of simplification is given in \cref{fig:simplify}, where $\EffRed{\TEnv}{\eff}{\eff'}$ means that the (not necessarily simplified) effect $\eff$ is simplified to $\eff'$, which only contains call-effects which are either magic or variable, as described before.

\begin{figure}[t]
\begin{math}
\begin{array}{c}
\NamedRule{ empty}{}{\EffRed{\TEnv}{\eZero}{\eZero}}{}
\BigSpace
\NamedRule{top}{}{\EffRed{\TEnv}{\eTop}{\eTop}}{}
\BigSpace
\NamedRule{var}{}{\EffRed{\TEnv}{\eCall{\X}{\m}{\seq{\T}}}{\eCall{\X}{\m}{\seq{\T}}}}{\X\in\dom(\TEnv)\\ }
\\[4ex]
\NamedRule{mgc}{}{\EffRed{\TEnv}{\eCall{\T}{\m}{\seq{\T}}}{\eCall{\T}{\m}{\seq{\T}}}}{
\IsMType{\TEnv}{\T}{\m}{\MkTE{\mgc}{\Ext{\seq\X}{\_}}{\_}{\_}{\eff}}\\
}
\\[5ex]
\NamedRule{simplify-non-mgc}{ \EffRed{\TEnv}{\Subst{\eff } {\seq\T}{\seq\X}}{\eff'} }
{ \EffRed{\TEnv}{\eCall{\T}{\m}{\seq{\T}}}{\eff'} } 
{\IsMType{\TEnv}{\T}{\m}{\MkTE{\kind}{\Ext{\seq\X}{\_}}{\_}{\_}{\eff}}\\
 \kind\neq\mgc
}
\\[5ex]
\NamedRule{simplify-union}{\EffRed{\TEnv}{\eff_i}{\eff'_i}\ i\in 1..2}{\EffRed{\TEnv}{\EComp{\eff_1}{\eff_2}}{\EComp{\eff'_1}{\eff'_2}}}{}

\end{array}
\end{math}
\caption{Simplification of effects}
\label{fig:simplify}
\end{figure}

We assume that effect  annotations are written by the programmer (or preliminarily reduced) in a simplified shape. Non-simplified  effects  can, however, appear during typechecking, when a method declared with an effect $\eff$ which is generic, that is, depending on the method's type variables $\seq\X$,  is invoked with actual type arguments\footnote{This happens in rule \refToRule{t-invk} in \cref{fig:typing-exp}. }
Indeed, in this case, instantiating the type variables,  in some variable call-effect $\eCall{\X}{\m}{\seq\T}$ in $\eff$, $\X$ can become an object type providing a (non-magic) method $\m$, allowing to simplify to the corresponding effect of $\m$ in the object type. This allows  us to give  a more refined approximations of the effects raised at runtime, as  illustrated by \cref{ex:simpl} above, where the non-simplified effect \lstinline{MyTEType.then $\vee$ MyTEType.else} is simplified to \lstinline{$\eZero$ $\vee$ MyException.throw[Nat]}. 
 To ensure decidability of typechecking, an issue we do not deal with in this paper, termination of effect simplification should be enforced by some standard technique, essentially by forbidding (mutual) recursion in effect annotations.

In rule \refToRule{t-ret}, a computation which is the embedding of a value has no effects, and, in \refToRule{t-do}, a sequential composition of two computations has the union of the two effects.

Typing rules for try-blocks rely on \emph{filter functions} associated to handlers, which describe how they
 transform effects, by essentially replacing calls of magic methods matching some clause with the effect of the clause expression, as formally defined in \cref{fig:filters}.  Filter functions allow to typecheck try-blocks in a very precise way, since effects of the expression in a clause are only added if the clause could be possibly applied. This generalizes to arbitrary effects what happens for Java catch clauses; however,  in Java this is part of the analysis of unreachable code, whereas here it is a feature of the type system. Filters are not a novelty of this type-and-effect system, since they were firstly used in \cite{DagninoGZ25} for a functional calculus; however, it is worthwhile to describe them in detail since they are a  new  feature,  and the general idea is applied here to rather different effects, leading to another formal definition.

\begin{figure}[t]
\begin{math}
\begin{grammatica}
\produzione{\handler}{\Handler{\seq\cc}{\x}{\e}}{handler}\\
\produzione{\hfilter}{\HFilter{\seq\cfilter}{\eff}}{filter}\\
\produzione{\cc}{\CC{\NT}{\m}{\seq\X}{\x}{\seq\x}{\e}{\mode}}{catch clause}\\
\produzione{\cfilter}{\CFilter{\NT}{\m}{\seq{\X}}{\seq\UT}{\eff}{\mode}}{clause filter}\\[1ex]
\end{grammatica}
\end{math}
\hrule
\begin{math}
\begin{array}{ll}
\\[-1.5ex]
 \FilterF{\hfilter}\   \mbox{defined by:}\\
\FilterFun{\eff}{\hfilter}=\EComp{\FilterFun{\eff}{\seq\cfilter}}{\eff'}\ \mbox{if}\ \hfilter=\HFilter{\seq\cfilter}{\eff'}\BigSpace
\\[1ex]
 \FilterF{\seq\cfilter}\  \mbox{defined by:}\\
\FilterFun{\eZero}{\seq\cfilter}=\eZero\BigSpace
\FilterFun{\eTop}{\seq\cfilter}=\eTop\BigSpace
\FilterFun{\EComp{\eff_1}{\eff_2}}{\seq\cfilter}=\EComp{\FilterFun{\eff_1}{\seq\cfilter}}{\FilterFun{\eff_2}{\seq\cfilter}}\\
\eff=\eCall{\T}{\m}{\seq{\T}}\BigSpace
\FilterFun{\eff}{\cfilter\seq\cfilter}=
\begin{cases}
\FilterFun{\eff}{\cfilter}& \mbox{if}\ \FilterFun{\eff}{\cfilter}\ \mbox{defined}\\
\FilterFun{\eff}{\seq\cfilter}&\mbox{otherwise}
\end{cases}
\BigSpace
\FilterFun{\eff}{\epsilon}=\eff
\\[2ex]
\cfilter=\CFilter{\NT}{\m}{\seq\X}{\_}{\eff}{\mode}\BigSpace
\FilterFun{\eCall{\T}{\m}{\seq{\T}}}{\cfilter}=
\begin{cases}
\Subst{\eff}{\seq\T}{\seq\X}&\mbox{if}\   \SubType{\TEnv}{\T}{\NT}
\\
\mbox{undefined}&\mbox{otherwise}
\end{cases}
\end{array}
\end{math}
\caption{Filters}\label{fig:filters}
\end{figure}

As shown in \cref{fig:filters}, filters are the type information which can be extracted from a handler, consisting of a sequence of clause filters and a final effect. The filter function $\FilterF{\hfilter}$ associated to $\hfilter$  transforms an effect by first applying the sequence of clause filters, and then adding the final effect. The transformation applying a sequence of clause filters is defined inductively. 
The significant case is a call-effect, which is either transformed by a (first) matching clause filter,  or remains unaffected.  A clause filter matches a call-effect with the same method name and a subtype of the nominal type; the call-effect is replaced by the effect of the clause filter, where type variables have been substituted \mbox{by the types in the call-effect. }

In rule \refToRule{t-try}, in order to typecheck a try-block, first we get the type and effect of the enclosed expression. 
 This type is then used to typecheck the handler,  as type of the parameter of the final expression, see rule \refToRule{t-handler}. 
 By typechecking the handler we get a type, being a supertype of the final expression,  which will be the type of the whole expression. Moreover, we extract from the handler a filter, which is used to transform the effect of the enclosed expression, getting the resulting effect of the whole expression.

In rule \refToRule{t-handler}, as said above, the type on the left of the judgment is used as type of the parameter of the final expression, 
 required to be a subtype of that of the handler. 
This latter type is also needed to typecheck  $\Stop$-clauses, see below.  
The filter extracted from the handler consists in a clause filter for each clause, and the effect of the final expression.

The filter extracted from a clause keeps the first three components (nominal type, method name, and type variables), and adds the effect obtained typechecking the clause expression, as shown in rules \refToRule{t-continue} and \refToRule{t-stop}.  A $\Continue$-clause is meant to provide alternative code to be executed before the final expression, hence the type of the clause expression should be (a subtype of) the return type of the operation. 
In a $\Stop$-clause, instead, the result of the clause expression becomes that of the whole expression with handler, hence the type of the former should be (a subtype of) the latter. 

Referring to \cref{ex:exception}, note that catching an exception with a $\Continue$-clause would be ill-typed. Indeed, the type of the clause expression should be (a subtype of) the return type of \lstinline{throw}, which is a type variable $\X$. Since no value has type $\X$, no value could be returned\footnote{Hence, the clause expression could only be another \lstinline{throw} or a diverging expression.}, as already noted in \cite{PlotkinP03}.

In \cref{fig:typing-dec} we show the typing rules for method and type declarations, which are mainly straightforward.
\begin{figure}[t]
\begin{math}
\begin{array}{c}
\NamedRule{t-meths}{\IsWFMethod{\TEnv}{\Gamma}{\T}{\md_i}\ \forall i\in 1..n}{\IsWFMethod{\TEnv}{\Gamma}{\T}{\md_1\ldots\md_n}}{}
\\[2ex]
\NamedRule{t-meth}{
  \IsWFExp{\TEnv,\Ext{\seq\X}{\seq{\UT}}}{\Gamma,\TVar{\T_\x}{\x}, \TVars{\T}{\x} } {\e} {\T'} {\eff'}              
}{ \IsWFMethod{\TEnv}{\Gamma}{\T_\x}{\MethDec{\m}{\defn}{\TMeth{\Ext{\seq\X}{\seq\UT}}{\seq\T}{\TEff{\T}{\eff}}}{\Pair{\x\,\seq\x}{\e}}} 
}
{\SubType{\TEnv,\Ext{\seq\X}{\seq{\UT}}}{\TEff{\T'}{\eff'}}{\TEff{\T}{\eff}}
} 
\\[3ex]
\NamedRule{t-prog}{\IsWFNType{\tdec_i}\ \forall i\in 1..n}{\IsWFNType{\tdec_1\ldots\tdec_n}}{}
\\[3ex]
\NamedRule{t-ntype}
{\IsWFMethod{\Ext{\seq\Y}{\seq{\T}}}{\emptyset}{\Gen\tname{\seq\Y}}{\GetDef{\seq\md}}}
{ \IsWFNType{ \TDec{\Gen\tname{\Ext{\seq\Y}{\seq\T}}}{\seq\NT}{\seq{\md}}} } 
{
\typeof{}{\TDec{\Gen\tname{\Ext{\seq\Y}{\seq\T}}}{\seq\NT}{\seq{\md}}}{\_}
}
\end{array}
\end{math}
\caption{Typing rules for method and type declarations}\label{fig:typing-dec}
\end{figure}
The typing judgment for method definitions has shape $\IsWFMethod{\TEnv}{\Gamma}{\T}{\seq\md}$. 
In rule \refToRule{t-meth},  a method definition is well-typed if the body is well-typed with respect to a type environment enriched by the type variables with their bounds, and an environment enriched by the variable denoting the current object with the enclosing type, and parameters with the corresponding parameter types.  The type-and-effect of the body should be a sub-type-and-effect of that declared for the method.  The typing judgment for type declarations has shape $\IsWFNType{\seq\tdec}$. 
In rule \refToRule{t-ntype}, 
 a  type declaration, assumed to have passed the extraction phase (side condition), is well-typed if its defined method declarations, denoted by $\GetDef{\seq\md}$, are well-typed with respect to the type environment consisting of the type variables with their bounds, the empty environment, and the declared type as enclosing type. Here the environment is empty since these method declarations are top-level, whereas in those inside objects, handled in rule \refToRule{t-obj}, there can be variables declared at an outer level.


\section{Type-and-effect Soundness}\label{sect:results}
In this section, we express and prove soundness of our type-and-effect system, by applying definitions and results in \cite{DagninoGZ25}.
We focus on explaining the concepts, referring to  \cite{DagninoGZ25} for detailed formal definitions and proofs. 

\smallskip
\noindent\textit{Informal introduction} 
In \cref{sect:sem} we defined, on top of the monadic one-step reduction:
\begin{itemize}
\item a finitary semantics $\fun{\finsem{-}}{\Exp}{\mfun\Res + \{\infty\}}$
\item an infinitary semantics  $\fun{\infsem{-}}{\Exp}{\mfun\Res}$
\end{itemize}
In standard soundness we expect the result, if any, to be in agreement with the expression type. Here, since the expression has also an effect, approximating the computational effects raised by its execution, we expect  the monadic result, if any, to be in agreement with the expression type and effect. 

Let us write $\WTExp{\e}{\T}{\eff}{}$ for $\IsWFExp{\emptyset}{\emptyset}{\e}{\T}{\eff}$, and analogously for $\WTVal{\ve}{\T}$, and other ground judgments. Moreover, $\WTRes{\res}{\T}$ only holds if $\res=\ve$ and $\WTVal{\ve}{\T}$ holds.
 Note that the result $\wrng$ is never well-typed, that is, $\WTRes{\wrng}{\T}$ does not hold for any type $\T$.  

To formally express the above soundness requirement, we need analogous typing judgments\footnote{The effect is written under the turnstile to   emphasize    that this typing judgment is obtained by lifting the non-monadic one, as will be described in the following.}  $\WTMRes{\mres}{\T}{\eff}$, one for each type and effect, on monadic results.  
Assuming to have such judgments, we can express the soundness results as follows.

\begin{theorem}[Finitary type-and-effect soundness]\label{theo:mnd-sound-fin}
$\WTExp{\e}{\T}{\eff}$ and $\finsem{\e} = \mres$ imply $\WTMRes{\mres}{\T}{\eff}$.
\end{theorem}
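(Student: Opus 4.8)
The plan is to obtain \cref{theo:mnd-sound-fin} as an instance of the general meta-theorem of \cite{DagninoGZ25}, which derives type-and-effect soundness from two properties of the monadic one-step reduction $\red$: \emph{progress} and \emph{subject reduction}. The non-trivial preliminary step is to concretize the abstract lifting of \cite{DagninoGZ25} into a workable \emph{monadic typing judgment} $\sststile{\eff}{}\mconf:\T$ on monadic configurations $\mconf\in\mfun\Conf$, presented through derived rules. This judgment is expected to enjoy three structural properties, established once and for all from the lifting: it is introduced by the unit, i.e. $\WTRes{\res}{\T}$ implies $\sststile{\eff}{}\mun_\Conf(\res):\T$ and $\WTExp{\e}{\T}{\eff}$ implies $\sststile{\eff}{}\mun_\Conf(\e):\T$; it is monotone in the effect with respect to the subeffect preorder induced by $\eZero$, $\EComp{}{}$, $\eTop$, so a judgment with a smaller effect can always be relaxed to $\eff$; and it is closed under Kleisli extension, i.e. if $\sststile{\eff}{}\mconf:\T$ and $f$ sends every well-typed configuration in the support to a monadic configuration that is typed with the same $\T$ and an effect within $\eff$, then $\sststile{\eff}{}\mkl{f}(\mconf):\T$.

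First I would prove \emph{progress}: if $\WTExp{\e}{\T}{\eff}$ then either $\e=\Ret\ve$ with $\WTVal{\ve}{\T}$, or $\e\red\me$ for some $\me$. The proof is by case analysis on the shape of the well-typed $\e$. A method call reduces by \refToRule{invk} or \refToRule{mgc}: well-typedness guarantees, via a canonical-forms argument matching the static type of the receiver against its dynamic type, that $\mbody$ finds the invoked method, and in the magic case that $\mrun\tname\m$ is defined on the actual arguments, since these have the types required by the declaration. A $\Do{\x}{\e_1}{\e_2}$ reduces by \refToRule{ret} or \refToRule{do} according to whether $\e_1$ is a return, using progress for $\e_1$. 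A try-block always reduces, since the pure rules of \cref{fig:pure-red} cover every shape of the enclosed computation. Progress is precisely what guarantees that the configuration rule \refToRule{wrong} never fires on a well-typed expression, so $\wrng$ never enters the support of a reachable monadic configuration.

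Next I would prove \emph{subject reduction}: if $\WTExp{\e}{\T}{\eff}$ and $\e\red\me$ then $\sststile{\eff}{}\me:\T$, by cases on the reduction rule. Rule \refToRule{pure} is handled by an auxiliary subject reduction for $\purered$ stating that $\WTExp{\e}{\T}{\eff}$ and $\e\purered\e'$ imply $\WTExp{\e'}{\T}{\eff'}$ with $\eff'$ a subeffect of $\eff$; then $\me=\mun(\e')$ is typed by the unit property and monotonicity. The pure cases rest on the standard type- and term-substitution lemmas for \refToRule{invk} and the catch rules, and, for the try cases, on the agreement between the runtime matching $\cmatch$ and the static filter $\FilterFun{-}{\hfilter}$ of \cref{fig:filters}: distributing the handler over a do-expression (\refToRule{try-do}) and applying or forwarding a clause (\refToRule{catch-continue}, \refToRule{catch-stop}, \refToRule{fwd}) must leave the filtered effect unchanged up to subeffect. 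Rules \refToRule{ret} and \refToRule{do} reduce to a substitution lemma and to closure under $\aux{map}$ (equivalently $\mbind$), respectively. Rule \refToRule{mgc} is the semantic core: here $\me=\Mmap{(\Ret{\ehole})}{\mrun\tname\m(\ve,\seq\ve)}$, and the monadic value returned by $\mrun\tname\m$ must be accounted for by the magic call-effect contained in $\eff$; this is exactly the coherence condition linking the interpretation of magic methods in the monad with their declared type-and-effect, assumed as a well-formedness requirement on the instantiation.

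The main obstacle I expect is the subject-reduction argument for try-blocks, where effect simplification (\cref{fig:simplify}), substitution of type variables, and the filter functions (\cref{fig:filters}) interact: one must show that the clause filter recorded in the handler faithfully reflects the effect of the clause expression actually substituted in by \refToRule{catch-continue} and \refToRule{catch-stop}, and that \refToRule{try-do} commutes the filter across sequential composition. With progress and subject reduction in hand, the finitary statement follows by induction on the number $n$ of $\Red$-steps with $\mun_\Conf(\e)\Red^n\mres$, which is well-defined since $\Red$ is a total function by \cref{prop:det}. The base case $n=0$ uses the unit property to get $\sststile{\eff}{}\mun_\Conf(\e):\T$; the inductive step exploits that one $\Red$-step is the Kleisli extension $\mkl{\tredfun}$ of the configuration reduction $\tredfun$, so closure under Kleisli extension together with subject reduction for $\tredfun$ and effect monotonicity propagates $\sststile{\eff}{}{-}:\T$ along the whole chain; progress ensures no $\wrng$ ever appears, so typing is never broken. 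Finally, since $\mres\in\mfun\Res$, the configuration judgment specializes to the result judgment, yielding $\WTMRes{\mres}{\T}{\eff}$.
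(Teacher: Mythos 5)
Your proposal is correct and follows essentially the same route as the paper: monadic progress and subject reduction for $\red$ (\cref{theo:mnd-progress}, \cref{theo:mnd-sr}), with monadic typing judgments obtained by lifting through the interpretation of effect types (\cref{def:mlift}, \cref{lem:mlift}), an assumed typing rule \refToRule{t-run} for the constants produced by $\mrun\tname\m$, pure subject reduction resting on substitution and the agreement between $\cmatch$ and the filters (\cref{lem:subject-reduction}), and the passage from progress and subject reduction to soundness via the framework of \cite{DagninoGZ25}. The only divergences are presentational: you inline the closing induction on $\Red$-steps that the paper delegates to \cite{DagninoGZ25} --- your three structural properties are exactly what \cref{def:mlift} supplies, though your Kleisli-closure condition should be phrased through the predicate liftings themselves (conditions (\ref{def:mlift:nat}) and (\ref{def:mlift:mul})) rather than through ``supports'', which do not exist for an arbitrary monad --- and you merge the paper's two effect indices $\hat{\eff}$ and $\eff'$ (cf.\ the statement of \cref{theo:mnd-sr}) into a single one, which is recoverable by effect monotonicity.
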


\begin{theorem}[Infinitary type-and-effect soundness]\label{theo:mnd-sound-inf}
$\WTExp{\e}{\T}{\eff}$  implies $\WTMRes{\infsem{\e}}{\T}{\eff}$. 
\end{theorem}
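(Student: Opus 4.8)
The plan is to obtain both \cref{theo:mnd-sound-fin} and \cref{theo:mnd-sound-inf} as instances of the abstract soundness result of \cite{DagninoGZ25}, which reduces type-and-effect soundness --- finitary and infinitary at once --- to two properties of the monadic one-step reduction $\red$: \emph{progress} and \emph{subject reduction}. The first preliminary step is to make concrete the monadic typing judgments $\WTMExp{\me}{\T}{\eff}{}$ on $\mfun\Exp$ and $\WTMRes{\mres}{\T}{\eff}$ on $\mfun\Res$ that appear in the statements, by turning the abstract lifting of \cite{DagninoGZ25} into a set of derivation rules: intuitively, a monadic element is well-typed at $(\T,\eff)$ when all its components are, but in a way that tracks the monad structure, so that the unit lifts $\WTRes{\res}{\T}$ to $\WTMRes{\mun_\Res(\res)}{\T}{\eff}$ and $\aux{map}$/$\mbind$ preserve lifted typing. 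Having such rules is exactly what lets the subsequent proof of subject reduction be driven by the usual rule-by-rule reasoning.

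Next I would prove \textbf{progress}: if $\WTExp{\e}{\T}{\eff}$ with $\e$ closed, then either $\e=\Ret\ve$ for some value $\ve$, or $\e\red\me$ for some $\me$. This guarantees that rule \refToRule{wrong} of \cref{fig:mred-conf} is never applicable to a well-typed expression, so the reduction never reaches $\wrng$ (which is never well-typed). The argument is a case analysis on the shape of $\e$, equivalently on the last typing rule: a magic call reduces by \refToRule{mgc}, provided the relevant $\mrun\tname\m$ is defined on well-typed arguments; a non-magic call reduces by \refToRule{invk}, provided $\mbody$ succeeds, which follows from agreement between the runtime look-up and the statically extracted signatures; a $\kw{do}$ reduces by \refToRule{ret} or by \refToRule{do} using progress on its first subterm; and for a try-block one shows that a well-typed enclosed computation is a $\Ret$, a $\kw{do}$, or a magic call, so that one of the pure rules \refToRule{try-ret}, \refToRule{try-do}, \refToRule{catch-continue}, \refToRule{catch-stop}, \refToRule{fwd} applies.

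Then I would prove \textbf{subject reduction}: if $\WTExp{\e}{\T}{\eff}$ and $\e\red\me$, then $\WTMExp{\me}{\T}{\eff}{}$. I would first establish it for the pure reduction $\purered$ and then lift it to the three genuinely monadic rules \refToRule{mgc}, \refToRule{ret}, \refToRule{do}, whose monadic right-hand sides are built with $\mun$ and $\aux{map}$, so that the lifting rules apply directly. The main technical ingredients are a type-and-effect substitution lemma (for both term and type variables, showing in particular that the declared effect, after instantiation and simplification via \cref{fig:simplify}, over-approximates the effect of the substituted body), agreement between $\mbody$/$\cmatch$ and the static typing (so that a caught method body has the type-and-effect predicted by the clause rules \refToRule{t-continue} and \refToRule{t-stop}), and a lemma relating the filter function to clause matching, namely that the effect assigned by \refToRule{t-try} via $\FilterF{\handler}$ correctly accounts for distributing clauses in \refToRule{try-do} and applying or forwarding them in \refToRule{catch-continue}, \refToRule{catch-stop}, \refToRule{fwd}. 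For \refToRule{mgc} I would rely on the assumption that $\mrun\tname\m$ returns a monadic value well-typed at the declared return type, so that $\Mmap{(\Ret\ehole)}{\mrun\tname\m(\ve,\seq\ve)}$ is well-typed at the call-effect $\eCall{\tname}{\m}{\seq\T}$.

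The hardest part is subject reduction for the try-block machinery, because the static effect of a try-block is computed in one shot through the filter function and simplification, while the dynamics pulls clauses through $\kw{do}$-composition (rule \refToRule{try-do}) and applies, stops, or forwards them one magic call at a time. I expect the crux to be a commutation/monotonicity lemma showing that $\FilterF{\handler}$ distributes over sequential composition and matches the runtime behaviour of $\cmatch$, so that the filter-transformed effect of the enclosed computation remains a sound approximation of every continuation produced by the pure reduction. Once progress and subject reduction are in place, both \cref{theo:mnd-sound-fin} and \cref{theo:mnd-sound-inf} follow uniformly: using determinism (\cref{prop:det}) together with the construction of $\Red$ and of $\infsem{-}$, the abstract result of \cite{DagninoGZ25} yields that the (finitary or infinitary) semantics of a well-typed expression is a lifted-well-typed monadic result.
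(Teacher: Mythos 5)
Your plan is, in its structure, exactly the paper's own proof: you concretize the monadic typing judgments as lifted typing rules (the paper's \cref{lem:mlift} plus the assumed rule \refToRule{t-run} for the $\mrun\tname\m$ constants), you prove monadic progress (\cref{theo:mnd-progress}) and monadic subject reduction factored through subject reduction for $\purered$ (\cref{lem:subject-reduction}, with substitution, inversion, and the filter-distribution property of \cref{lem:filter-prop}), and you then invoke the abstract result of \cite{DagninoGZ25} using determinism (\cref{prop:det}). However, for the theorem you were actually asked to prove --- the \emph{infinitary} one --- there is a genuine gap: progress and subject reduction alone yield only finitary soundness. Since $\infsem{\e} = \msup_{n\in\N}\infsem[n]{\e}$ is the supremum of an $\omega$-chain of approximations, well-typedness of every finite approximation does not automatically transfer to the limit; the abstract result of \cite{DagninoGZ25} additionally requires the interpretation of effect types to respect the ordered structure of the monad, namely that the least element $\mbot$ is well-typed at every type-and-effect and that the lifted typing predicates are closed under suprema of $\omega$-chains. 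The paper states this requirement explicitly and observes that it holds trivially for its interpretations; your claim that both theorems ``follow uniformly'' from progress and subject reduction silently skips this admissibility check, without which the infinitary statement simply does not follow.

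A secondary imprecision: you state subject reduction as $\WTExp{\e}{\T}{\eff}$ and $\e\red\me$ imply $\WTMExp{\me}{\T}{\eff}{}$, i.e., same type, same effect, and no lifting index. The framework needs the refined form of the paper's \cref{theo:mnd-sr}: $\WTMExp{\me}{\T'}{\eff'}{\hat{\eff}}$ for some $\T'$, $\hat{\eff}$, $\eff'$ with $\SubT{\TEff{\T'}{\EComp{\hat{\eff}}{\eff'}}}{\TEff{\T}{\eff}}$, where $\hat{\eff}$ (written under the turnstile, indexing the predicate lifting) records the computational effects raised by the step just performed and $\eff'$ approximates the residual ones; it is their composition, not either alone, that must be bounded by $\eff$. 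Without this split, the case of rule \refToRule{mgc} --- where the step itself raises the whole call-effect and the residual effect is $\eZero$ --- cannot even be stated, let alone proved by the lifted rules \refToRule{t-unit}, \refToRule{t-ret-lift}, \refToRule{t-do-lift} that your own plan relies on.
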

 Note that  finitary soundness is vacuous when the finitary semantics of an expression is $\infty$. On the other hand, the infinitary semantics is always a monadic result.

Before explaining how the monadic typing judgments can be derived from the non-monadic ones, we show some examples.
\begin{example}\label{ex:exc-mlift}
Consider the exception monad as in \cref{ex:exception}. We have $\ExceptFun\Res= \Res+\ExSet$, that is, monadic results are either values, or $\wrng$, or exceptions. We expect the typing judgment $\WTMRes{\mres}{\T}{\eff}$ to be defined as follows:
\begin{quoting}
\begin{math}
\NamedRule{t-val}{\WTVal{\ve}{\T}}{\WTMRes{\ve}{\T}{\eff}}{}
\BigSpace
\NamedRule{t-exc}{}{\WTMRes{\exc}{\T}{\eff}}{\exc\in\effsem{\eff}}
\end{math}
\end{quoting}
That is, a monadic result is well-typed with a given type-and-effect if it is either a well-typed value with the type, or an exception belonging to (the set denoted by) the effect. Note that $\wrng$ is ill-typed. 
The set of exceptions $\effsem{\eff}$ represented by $\eff$ is defined by:
\begin{quoting}
$\effsem{\eZero}=\emptyset$\HugeSpace
$\effsem{\eTop}=\ExSet$\HugeSpace
$\effsem{\EComp{\eff}{\eff'}}=\effsem{\eff}\cup \effsem{\eff'}$\\
$\effsem{\tname\texttt{.throw}} = \{\exc\mid\exc = \getExc{\ve}, \instanceof{\ve}{\tname}\}$
\end{quoting}
That is, call-effects of the \lstinline{throw} method in $\tname$ denote the set of the exceptions (represented by objects)  of subtypes of $\tname$, and the other operators have the obvious set-theoretic meaning. 
 \end{example}
 
 \begin{example}\label{ex:pow-mlift}
 Consider the monad of non-determinism as in \cref{ex:pow}. We have $\ListFun \Res$ the set of (possibly infinite) lists of results (values or $\wrng$).
We show two different ways to define the typing judgment for monadic results, denoted $\WTMResQ{\mres}{\T}{\eff}{\quantifier}$ for $\produzioneinline{\quantifier}{\forall\mid\exists}$.
\begin{quoting}
\begin{math}
\NamedRule{t-no-res}{}{\WTMResQ{\epsilon}{\T}{\eff}{\quantifier}}{}
\BigSpace
\NamedRule{t-det}{\WTVal{\ve}{\T}}{\WTMResQ{[\ve]}{\T}{\eff}{\quantifier}}{\effsem\eff = 0}\\
\NamedRule{t-$\forall$}{\WTRes{\res}{\T}\ \forall\res\in\mres\  }{\WTMResQ{\mres}{\T}{\eff}{\forall}}{\effsem{\eff}=1 }\BigSpace
\NamedRule{t-$\exists$}{\WTRes{\res}{\T}}{\WTMResQ{\mres}{\T}{\eff}{\exists}}{\effsem{\eff}=1\\ \res\in\mres}
\end{math}
\end{quoting}
where $\effsem{\eff}$ is $1$ if non-determinism is allowed, $0$ otherwise, as defined below:
\begin{quoting}
$\effsem{\eZero}=0$\HugeSpace
$\effsem{\eTop}=1$\HugeSpace
$\effsem{\EComp{\eff}{\eff'}}=\effsem{\eff}\vee \effsem{\eff'}$\\
$\effsem{\eCall{\texttt{Chooser}}{\texttt{choose}}{\ }}=1$
\end{quoting}
That is, monadic results (representing possible results of a computation) are well-typed with a type effect (denoting) $0$ if they have at most one element, and this element, if any, is well-typed; in other words, the computation is deterministic. 
On the other hand, they are well-typed with a type effect (denoting) $1$ if all the results in the  list  are well-typed values, or there is either no result, or at least one well-typed value, respectively. 
The two interpretations express ``must'' and ``may'' soundness of a non-deterministic computation: with $\forall$, we require every possible result to be well-typed, whereas, with $\exists$, it is enough to have a well-typed result, and the others could be even $\wrng$. 
\end{example}

\smallskip
\noindent\textit{Interpretation of effect types} Monadic typing judgments can be derived from non-monadic ones by providing an \emph{interpretation of effect types} into the considered monad.

Let us denote by $\EffSet$ the set of effects, assumed to be simplified  (\cref{fig:simplify}), ground, and well-formed (\cref{fig:wf-eff}), considered modulo the equivalence\footnote{That is, we consider effects as sets of call-effects, plus the top effect.} induced by the preorder $\subt$ in  \cref{fig:sub-eff}. 
As customary, with a slight abuse of notation, 
we write $\eff$ for the element of $\EffSet$ it represents, i.e., its equivalence class. 
Then, it is easy to see that $\EffSet$ is the carrier of an ordered monoid $\MEff$, where the unit is $\eZero$, the multiplication, which turns out to be idempotent and commutative,  is $\EComp{}{}$, and the order is $\subt$.  

For a set $X$, we denote by $\PW(X)$ the poset of all subsets (a.k.a. predicates or properties) on $X$, ordered by subset inclusion. 
For a function \fun{f}{X}{Y}, we have a monotone function \fun{\PW_f}{\PW(Y)}{\PW(X)}, given by the inverse image: 
for $A\subseteq Y$, $\PW_f(A) = \{ x \in X \mid f(x) \in A \}$.  

\begin{definition}[Interpretation of effect types] \label{def:mlift}
Let $\mnd = \ple{\mfun,\mmul,\mun}$ be a monad.
Then, an \emph{interpretation} of  $\MEff$  in $\mnd$
consists of a family $\mlift$ of monotone functions ${\fun{\mlift[\eff]_X}{\PW(X)}{\PW(\mfun X)}}$, 
\mbox{for every $\eff\in\EffSet$ and set $X$, such that}
\begin{enumerate}[series=enum-mlift] 
\item\label{def:mlift:nat}  $\mlift[\eff]_X(\PW_f(A)) = \PW_{\mfun f}(\mlift[\eff]_Y(A))$, for every $A\subseteq Y$ and function \fun{f}{X}{Y}
\item\label{def:mlift:mon}  $\eff\subt\eff'$ implies $\mlift[\eff]_X(A)\subseteq\mlift[\eff']_X(A)$, for every $A\subseteq X$
\item\label{def:mlift:unit}  $A \subseteq \PW_{\mun_X}(\mlift[\eZero]_X(A))$, for every $A\subseteq X$
\item\label{def:mlift:mul}  $\mlift[\eff]_{\mfun X}(\mlift[\eff']_X(A)) \subseteq \PW_{\mmul_X}(\mlift[\eff\vee\eff']_X(A))$, for every $A\subseteq X$. 
\end{enumerate} 
\end{definition}

 The family $\mlift = (\mlift[\eff])_{\eff\in\EffSet}$ is a family of predicate liftings \cite{Jacobs16} for the monad $\mnd$, indexed by effect types. 
 That is, for each effect type $\eff$, $\mlift[\eff]$ transforms predicates on $X$ into predicates on $\mfun X$, and can be regarded as the semantics of the effect type $\eff$.

\cref{def:mlift:nat} states that $\mlift[\eff]_X$ is natural in $X$. 
This ensures that the semantics of each effect type is independent from the specific set $X$.
\cref{def:mlift:mon} states that $\mlift[\eff]_X$ is monotone with respect to the order on effects, that is, 
computational effects described by $\eff$ are also described by $\eff'$. 
\cref{def:mlift:unit} states that monadic elements in the image of $\mun_X$ contain computational effects described by $\eZero$, that is, no computational effect. 
Finally, in \cref{def:mlift:mul} we consider elements of $\mfun^2\X$ whose computational effects are described by lifting predicates to $\mfun\X$ through $\eff'$,  and then by lifting through $\eff$.
By flattening such elements through $\fun{\mmul_\X}{\mfun^2\X}{\mfun\X}$ we obtain elements whose computational effects are described by the composition $\eff\vee\eff'$. 

In the proof of soundness for our calculus, we only need \cref{def:mlift:unit} and \cref{def:mlift:nat}, instantiated as will be detailed in \cref{lem:mlift}. \cref{def:mlift:mon} and \cref{def:mlift:mul} are only required in the general framework in \cite{DagninoGZ25} to derive soundness from progress and subject reduction.

We show now how the previous examples can be obtained through an appropriate interpretation of effect types. 

In \cref{ex:exc-mlift}, we can take $\mlift[\eff]_X(A) = A + \effsem{\eff}$. That is, the interpretation of an effect type $\eff$ transforms a predicate $A$ on $X$ into a predicate on $\mfun X$ which holds either on elements which satisfy the original predicate, or exceptions in the set denoted by $\eff$.

In \cref{ex:pow-mlift}, we can take the following two interpretations $\AllLift$ and $\ExLift$: 
\begin{quoting}
if $\effsem{\eff}=1$, then 
\begin{quoting}
$\AllLift^\eff_{\!X}(A) = \{ B \in \PowerFun X \mid B \subseteq A \}$ and\\
$\ExLift^\eff_X(A)  = {\{ B \in \PowerFun X \mid B = \emptyset \text{ or }B \cap A \ne \emptyset \}}$
\end{quoting}
 if $\effsem{\eff}=1$, then 
$\AllLift^0_{\!X}(A) = \ExLift^0_X(A) = \{ B \in \PowerFun X  \mid B\subseteq  A \text{ and } \sharp B\leq 1\}$
\end{quoting}
That is, in both cases, the interpretation of $0$  forbids  non-determinism, while 
the interpretation of $1$ requires the predicate $A$ to be always satisfied, according to $\AllLift$, and 
satisfied in at least one case, according to $\ExLift$. 

 \smallskip
\noindent\textit{Proof of type-and-effect soundness} 
Thanks to a general result proved in \cite{DagninoGZ25}, to prove \cref{theo:mnd-sound-fin} and \cref{theo:mnd-sound-inf} we can use a technique similar to that  widely used to prove soundness of a type system with respect to a small-step semantics, that is, it is enough to prove  the  progress and subject reduction properties, stated below.  For  the  infinitary soundness, the  interpretation of effect types has to respect the additional structure of the monad, that is, the least monadic element should be always well-typed, and the monadic typing judgment should be closed with respect to suprema of $\omega$-chains. We refer to \cite{DagninoGZ25} for the formal definition of such requirement, which trivially holds in our case.
 
Let us denote by $\WTMVal{\mve}{\T}{\hat{\eff}}$ and $\WTMExp{\me}{\T}{\eff}{\hat{\eff}}$ the typing judgments on monadic values and  expressions obtained by lifting the non-monadic ones through $\hat{\eff}$.  

\begin{theorem}[Monadic Progress]\label{theo:mnd-progress}
$\WTExp{\e}{\T}{\eff}$ implies either $\e = \Ret{\ve}$ for some $\ve\in\Val$, or 
$\e\red\me$ for some $\me\in\mfun\Exp$.
\end{theorem}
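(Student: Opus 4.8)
The plan is to prove the statement by induction on the derivation of the closed judgment $\WTExp{\e}{\T}{\eff}$ (equivalently, by structural analysis on $\e$ with inversion on the typing rules), distinguishing cases on the last applied rule among \refToRule{t-ret}, \refToRule{t-invk}, \refToRule{t-do}, and \refToRule{t-try}. Two ingredients drive the argument. The first is a \emph{canonical forms / method availability} lemma: a closed well-typed value is necessarily an object literal $\Obj{\seq\NT}{\seq\md}$, and if its type provides a method $\m$, then $\mbody(\ve,\m)$ is defined and returns either a defined or a magic method, never an abstract one, because the side condition $\NoAbs{\sig'}$ in rule \refToRule{t-obj} forces every method reachable in a well-typed object to be implemented. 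The second is the well-formedness assumption on the interpretation of magic methods: whenever the arguments of a magic call have the expected types (which typing guarantees through the side conditions of \refToRule{t-invk}), the partial function $\mrun\tname\m$ is defined on them.

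The base and sequencing cases are routine. If the last rule is \refToRule{t-ret}, then $\e = \Ret\ve$ with $\ve$ a closed value, giving the first alternative. If it is \refToRule{t-do}, then $\e = \Do\x{\e_1}{\e_2}$ and the subexpression $\e_1$ is closed and well-typed; by the induction hypothesis either $\e_1 = \Ret\ve$, so rule \refToRule{ret} of \cref{fig:monadic-red} fires, or $\e_1 \red \me$, so rule \refToRule{do} fires. Either way $\e$ takes a monadic step.

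For the try case, last rule \refToRule{t-try}, we have $\e = \TryShort{\e_1}{\handler}$ with $\e_1$ closed and well-typed. By the induction hypothesis $\e_1$ is either $\Ret\ve$ or makes a monadic step; inverting on the latter (via the rules of \cref{fig:monadic-red}) reveals the top-level shape of $\e_1$, and each shape matches exactly one pure try-rule: $\Ret\ve$ triggers \refToRule{try-ret}, a $\Do$-expression triggers \refToRule{try-do}, a magic call triggers one of \refToRule{catch-continue}, \refToRule{catch-stop}, \refToRule{fwd} (depending on whether $\cmatch$ is defined), and an $\e_1$ that itself pure-reduces triggers \refToRule{try-ctx}. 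In every case $\TryShort{\e_1}{\handler}\purered \e'$, so rule \refToRule{pure} yields $\e\red\mun(\e')$. It is worth noting that a magic call enclosed in a try-block is consumed by a \emph{pure} catch or forward rule, so the effect is actually raised only for bare magic calls, treated in the \refToRule{t-invk} case.

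The crux is the method-call case \refToRule{t-invk}, $\e = \MCall{\ve_0}\m{\seq\T}{\seq\ve}$. By the canonical forms lemma applied to the receiver typing $\WTVal{\ve_0}{\T_0}$, the lookup $\mbody(\ve_0,\m)$ is defined. If it yields a defined method, rule \refToRule{invk} gives $\e \purered \e'$ and then \refToRule{pure} gives $\e \red \mun(\e')$. If it yields a magic method $\Pair{\mgc}{\tname}$, then, since the arguments are well-typed, the domain assumption makes $\mrun\tname\m(\ve_0,\seq\ve)$ defined, so rule \refToRule{mgc} applies and produces a monadic step. The main obstacle I anticipate is establishing the lookup-coherence lemma, namely that the signature read off the static type $\T_0$ is faithfully realized by the dynamic $\mbody$ on any value of that type, with abstract methods always resolved to defined or magic ones through subtyping and type-variable substitution, together with pinning down the exact form of the $\mrun$-definedness hypothesis so that it is entailed by the side conditions of \refToRule{t-invk}.
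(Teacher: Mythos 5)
Your proposal is correct and takes essentially the same approach as the paper: induction on the typing derivation, with the method-call case resolved exactly as in the paper's \cref{lem:meth-call-progress} (lookup is defined and never abstract thanks to the $\NoAbs{\sig'}$ side condition of \refToRule{t-obj}, and $\mrun\tname\m$ is defined when the premises of rule \refToRule{t-run} hold, which the side conditions of \refToRule{t-invk} guarantee). The only organizational difference is the try case: the paper proves a standalone lemma (\cref{lem:mnd-pure-red-progress}) by direct case analysis on the shape of the enclosed expression, with an inner induction for nested try-blocks, whereas you obtain the same case split by applying the outer induction hypothesis to the enclosed expression and inverting the resulting monadic step --- an equivalent and slightly more economical bookkeeping, since a nested try then falls under the \refToRule{pure} inversion case and needs no separate induction.
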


\begin{theorem}[Monadic Subject Reduction] \label{theo:mnd-sr}

$\WTExp{\e}{\T}{\eff}$ and $\e\red\me$ imply $\WTMExp{\me}{\T}{\eff'}{\hat{\eff}}$ for some $\T',\hat{\eff},\eff'$ such that $\SubT{\TEff{\T'}{\EComp{\hat{\eff}}{\eff'}}}{\TEff{\T}{\eff}}$. 

\end{theorem}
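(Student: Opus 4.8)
The plan is to proceed by induction on the derivation of $\e\red\me$, with a case analysis on the last rule of \cref{fig:monadic-red}. The argument is organized around two \emph{derived monadic typing rules} that internalize exactly the two properties of the interpretation we are permitted to use (\cref{def:mlift:unit} and \cref{def:mlift:nat}): a \emph{return rule}, stating that $\WTExp{\e'}{\T'}{\eff'}$ implies $\WTMExp{\mun(\e')}{\T'}{\eff'}{\eZero}$ (lifting through the empty effect, by \cref{def:mlift:unit}); and a \emph{map rule}, stating that whenever $\WTMExp{\me'}{\T_1}{\eff_1}{\hat\eff}$ holds and a context $f$ sends every expression of type-and-effect $\TEff{\T_1}{\eff_1}$ to one of type-and-effect $\TEff{\T_2}{\eff_2}$, then $\WTMExp{\Mmap{f}{\me'}}{\T_2}{\eff_2}{\hat\eff}$ (by naturality, \cref{def:mlift:nat}, together with monotonicity of each $\mlift[\eff]_X$ in its predicate argument). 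Throughout I will freely use that $(\EffSet,\EComp{}{},\eZero)$ is an idempotent commutative monoid, so that $\EComp{\eZero}{\eff}=\eff$ and $\EComp{}{}$ is monotone and associative; these are precisely the manipulations needed to reshape the residual effects into the required sub-type-and-effect.

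Three of the cases are comparatively direct. For \refToRule{ret}, inverting \refToRule{t-do} and \refToRule{t-ret} yields $\WTVal{\ve}{\T_1}$ and $\IsWFExp{\emptyset}{\TVar{\T_1}{\x}}{\e_2}{\T'}{\eff_2}$, with source effect $\EComp{\eZero}{\eff_2}=\eff_2$; a value-substitution lemma gives $\WTExp{\Subst{\e_2}{\ve}{\x}}{\T'}{\eff_2}$, and the return rule closes the case with $\hat\eff=\eZero$. For \refToRule{do}, inverting \refToRule{t-do} gives $\WTExp{\e_1}{\T_1}{\eff_1}$ and $\IsWFExp{\emptyset}{\TVar{\T_1}{\x}}{\e_2}{\T_2}{\eff_2}$; writing $\me_1$ for the residual of the premise $\e_1\red\me_1$, the induction hypothesis yields $\WTMExp{\me_1}{\T_1'}{\eff_1'}{\hat\eff_1}$ with $\SubT{\TEff{\T_1'}{\EComp{\hat\eff_1}{\eff_1'}}}{\TEff{\T_1}{\eff_1}}$, and the map rule applied to the context $\Do\x{\ehole}{\e_2}$ (which, appealing to narrowing since $\T_1'\subt\T_1$, sends every expression of type-and-effect $\TEff{\T_1'}{\eff_1'}$ to one of type-and-effect $\TEff{\T_2}{\EComp{\eff_1'}{\eff_2}}$) concludes, the side condition following by monotonicity and associativity of $\EComp{}{}$. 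For \refToRule{pure} we have $\me=\mun(\e')$ with $\e\purered\e'$; the return rule with $\hat\eff=\eZero$ reduces this case to a separate \emph{pure subject reduction lemma}: $\WTExp{\e}{\T}{\eff}$ and $\e\purered\e'$ imply $\WTExp{\e'}{\T'}{\eff'}$ with $\SubT{\TEff{\T'}{\eff'}}{\TEff{\T}{\eff}}$, and since $\EComp{\eZero}{\eff'}=\eff'$ the required inequality holds.

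The case \refToRule{mgc} is where the interpretation of the magic effect is consumed. Here $\e=\MCall{\ve}{\m}{\seq\T}{\seq\ve}$ with $\mbody(\ve,\m)=\Pair{\mgc}{\tname}$, and inversion of \refToRule{t-invk} shows $\eff$ is a simplification of the canonical magic call-effect $\eCall{\tname}{\m}{\seq\T}$. The plan is to invoke \cref{lem:mlift}, which I expect to state precisely that $\mrun\tname\m(\ve,\seq\ve)$ belongs to the predicate lifting, through that call-effect, of the set of values of the method's instantiated return type. Taking $\hat\eff$ to be the call-effect and $\eff'=\eZero$, the map rule applied to $\Ret{\ehole}$---which sends every such value to a well-typed $\Ret{}$-expression of effect $\eZero$ by \refToRule{t-ret}---types $\Mmap{(\Ret{\ehole})}{\mrun\tname\m(\ve,\seq\ve)}$ as required.

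I expect the pure subject reduction lemma to be the main obstacle, itself proved by induction on $\purered$ with one case per rule of \cref{fig:pure-red}. The \refToRule{invk} case must bridge the runtime look-up $\mbody$ with the static $\mtype$, apply a value-and-type substitution lemma, and---most delicately---use compatibility of effect simplification ($\EffRed{\TEnv}{\eff}{\eff'}$, \cref{fig:simplify}) with the type-variable substitution performed in \refToRule{t-invk}. The genuinely intricate cases are the try-block rules, whose soundness rests on the filter functions of \cref{fig:filters} exactly mirroring the syntactic transformation of handlers: for \refToRule{try-do} one shows that distributing a handler over a \kw{do} matches the law $\FilterFun{\EComp{\eff_1}{\eff_2}}{\seq\cfilter}=\EComp{\FilterFun{\eff_1}{\seq\cfilter}}{\FilterFun{\eff_2}{\seq\cfilter}}$ with the final effect counted once; for \refToRule{catch-continue} and \refToRule{catch-stop} one aligns the runtime matcher $\cmatch$ with the matching clause filter and reads off from \refToRule{t-continue}/\refToRule{t-stop} that the clause body's effect is exactly the filter's replacement; and for \refToRule{fwd} that an unmatched call-effect traverses the filter unchanged. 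The contextual rule \refToRule{try-ctx} finally requires monotonicity of $\FilterF{\hfilter}$ in its effect argument, which is immediate from its inductive definition.
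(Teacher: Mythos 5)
Your proposal follows essentially the same route as the paper: induction on the rules of \cref{fig:monadic-red}, with the \refToRule{pure} case delegated to a pure subject reduction lemma (the paper's \cref{lem:subject-reduction}), the \refToRule{ret} case closed by substitution plus lifting through $\eZero$, the \refToRule{do} case by the induction hypothesis plus a lifted \kw{do} rule (your generic ``map rule'' is exactly the derivation the paper performs, once per context, to obtain \refToRule{t-ret-lift} and \refToRule{t-do-lift} in \cref{lem:mlift}, and your explicit appeal to narrowing corresponds to the paper's Context Subsumption, \cref{lem:subsumption}), and the try-block analysis in the pure lemma resting on the filter laws, which match \cref{lem:filter-prop}.

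One point of your sketch does not survive contact with the paper and is worth correcting: in the \refToRule{mgc} case you ``expect \cref{lem:mlift} to state'' that $\mrun\tname\m(\ve,\seq\ve)$ inhabits the predicate lifting of the return type through the canonical call-effect. No such fact is derivable from the interpretation axioms: \cref{def:mlift:unit}--\cref{def:mlift:mul} constrain how liftings interact with the monad structure, but say nothing about the program-specific monadic constants produced by the $\mrun\tname\m$ functions. The paper instead \emph{assumes} this compatibility, in the form of the typing rule \refToRule{t-run} for such constants (``like standard constants, we have to assume that these are well-typed''), and the \refToRule{mgc} case then proceeds exactly as you describe: inversion for magic calls discharges the side conditions of \refToRule{t-run}, giving $\WTMVal{\mrun\tname\m(\ve,\seq\ve)}{\T}{\eff}$, after which \refToRule{t-ret-lift} with $\hat{\eff}$ the call-effect and $\eff'=\eZero$ closes the case. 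So your proof is structurally complete, but that step must be postulated as a requirement on the semantics rather than proved; without adding \refToRule{t-run} as an assumption, the \refToRule{mgc} case has no source for the needed judgment. (As a minor remark, your catch-clause sketch for the pure lemma asserts that the static filter match $\SubType{\TEnv}{\T}{\NT}$ and the runtime $\cmatch$ test $\instanceof{\ve}{\NT}$ align; this alignment does need an argument, though the paper's appendix proof of \cref{lem:subject-reduction} itself omits the \refToRule{catch-continue} and \refToRule{catch-stop} cases, so you cannot be faulted relative to it.)
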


Monadic progress is standard: a well-typed expression either is  (the embedding of) a value  or can reduce. 
For monadic subject reduction, if a well-typed expression reduces to a monadic one, then  the monadic expression has a more specific type, and the expression effect is an upper bound  of the computational effects produced by the current reduction step, described by $\hat{\eff}$, composed with those produced by future reductions, described by $\eff'$. 

Remarkably enough, the proofs of monadic progress and subject reduction can be driven by the usual rule-based reasoning, without any need to know about \cref{def:mlift}, thanks to \cref{lem:mlift} below. 
Recall that the interpretation of effects \emph{lifts} predicates to monadic predicates, in our case typing judgments to monadic typing judgments. \cref{lem:mlift} 
shows that it is even possible to lift, in a sense, the type system itself. That is, we can derive a typing rule for each operator on monadic elements used in the semantics.
The proofs in the following only rely on such monadic type system.  

\begin{lemma}\label{lem:mlift} The following rules can be derived.
\begin{quoting}
\begin{math}
\NamedRule{t-unit}{\WTExp{\e}{\T}{\eff}}{\WTMExp{\mun(\e)}{\T}{\eff}{\eZero}}{}\BigSpace
\NamedRule{t-ret-lift}{
  \WTMVal{\mve}{\T}{\hat{\eff}}
  }
{ \WTMExp{\Mmap (\Ret{\ehole})\ \mve}{\T}{\eZero}{\hat{\eff}} }
{ } 
\\[2ex]
\NamedRule{t-do-lift}{
 \WTMExp{\me}{\T}{\eff}{\hat{\eff}} \BigSpace
   \IsWFExp{\emptyset}{\TVar{\T}{\x}}{\e'}{\T'}{\eff'}
     }
{ \WTMExp{\Mmap{(\Do\x{\ehole}{\e')}{\me} }}{\T'}{\EComp{\eff}{\eff'}}{\hat{\eff} }}
{ 
} 
\end{math}
\end{quoting}
\end{lemma}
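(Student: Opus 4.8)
The plan is to unfold the monadic typing judgments into their defining form as predicate-lifting memberships, and then derive each rule by pairing the corresponding non-monadic typing rule with one structural property of the interpretation $\mlift$ of \cref{def:mlift}. Writing $\llbracket\T\rrbracket$ for the predicate $\{\ve\in\Val\mid\WTVal{\ve}{\T}\}$ and $\llbracket\TEff{\T}{\eff}\rrbracket$ for $\{\e\in\Exp\mid\WTExp{\e}{\T}{\eff}\}$ cut out by the ground non-monadic judgments, the monadic judgments mean, spelled out through the interpretation, $\WTMVal{\mve}{\T}{\hat\eff}$ iff $\mve\in\mlift[\hat\eff]_\Val(\llbracket\T\rrbracket)$ and $\WTMExp{\me}{\T}{\eff}{\hat\eff}$ iff $\me\in\mlift[\hat\eff]_\Exp(\llbracket\TEff{\T}{\eff}\rrbracket)$. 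With these unfoldings, each rule reduces to an inclusion of predicates on $\mfun\Val$ or $\mfun\Exp$, to be established by manipulating the lifting.

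For rule \refToRule{t-unit} I would apply \cref{def:mlift:unit} with $X=\Exp$ and $A=\llbracket\TEff{\T}{\eff}\rrbracket$. The premise $\WTExp{\e}{\T}{\eff}$ says $\e\in A$; the axiom $A\subseteq\PW_{\mun_\Exp}(\mlift[\eZero]_\Exp(A))$, once the inverse image $\PW_{\mun_\Exp}$ is unfolded, states exactly that $\mun_\Exp(\e)\in\mlift[\eZero]_\Exp(A)$, which is the conclusion $\WTMExp{\mun(\e)}{\T}{\eff}{\eZero}$. This is the only place the turnstile effect $\eZero$ is forced, reflecting that $\mun$ introduces no computational effect.

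For rules \refToRule{t-ret-lift} and \refToRule{t-do-lift} I would combine naturality \cref{def:mlift:nat} with the monotonicity of each lifting $\mlift[\eff]_X$ (built into its being a monotone function in \cref{def:mlift}). The uniform schema is: identify the enclosing context with a set-theoretic function (as already done in \cref{sect:sem}), namely $\Ret{\ehole}$ with $f\colon\Val\to\Exp$, $f(\ve)=\Ret{\ve}$, and $\Do\x{\ehole}{\e'}$ with $g\colon\Exp\to\Exp$, $g(\e_1)=\Do{\x}{\e_1}{\e'}$; recall $\Mmap{f}{\alpha}=\mfun f(\alpha)$; and use the matching non-monadic rule to show the context maps the source predicate into the target one. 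Concretely, for \refToRule{t-ret-lift}, rule \refToRule{t-ret} gives $f(\llbracket\T\rrbracket)\subseteq\llbracket\TEff{\T}{\eZero}\rrbracket$, i.e.\ $\llbracket\T\rrbracket\subseteq\PW_f(\llbracket\TEff{\T}{\eZero}\rrbracket)$; monotonicity lifts this to $\mlift[\hat\eff]_\Val(\llbracket\T\rrbracket)\subseteq\mlift[\hat\eff]_\Val(\PW_f(\llbracket\TEff{\T}{\eZero}\rrbracket))$, which equals $\PW_{\mfun f}(\mlift[\hat\eff]_\Exp(\llbracket\TEff{\T}{\eZero}\rrbracket))$ by naturality; unfolding $\PW_{\mfun f}$ then yields $\mfun f(\mve)=\Mmap{(\Ret{\ehole})}{\mve}\in\mlift[\hat\eff]_\Exp(\llbracket\TEff{\T}{\eZero}\rrbracket)$, the conclusion. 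For \refToRule{t-do-lift} the argument is identical with $g$ in place of $f$, obtaining $g(\llbracket\TEff{\T}{\eff}\rrbracket)\subseteq\llbracket\TEff{\T'}{\EComp{\eff}{\eff'}}\rrbracket$ from rule \refToRule{t-do} (instantiated at $\TEnv=\Gamma=\emptyset$) using the side hypothesis $\IsWFExp{\emptyset}{\TVar{\T}{\x}}{\e'}{\T'}{\eff'}$.

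I expect no deep obstacle here, as the derivations are essentially mechanical; the care points are bookkeeping. First, variance: naturality is phrased through inverse images, so the forward inclusion $f(\llbracket\cdots\rrbracket)\subseteq\llbracket\cdots\rrbracket$ supplied by the non-monadic rule must be rewritten as the backward inclusion $\llbracket\cdots\rrbracket\subseteq\PW_f(\llbracket\cdots\rrbracket)$ before the lifting is applied, and $\PW_{\mfun f}$ must be unfolded correctly at the end. Second, keeping the two effects apart: in \refToRule{t-ret-lift} and \refToRule{t-do-lift} the turnstile effect $\hat\eff$ is untouched, since $\aux{map}$ does not flatten and hence raises no new effect, whereas the effect under the colon is manipulated entirely at the non-monadic level by \refToRule{t-ret} and \refToRule{t-do}. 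Confirming that $f$ and $g$ are exactly the functions identified with the contexts $\Ret{\ehole}$ and $\Do\x{\ehole}{\e'}$ closes the remaining gap.
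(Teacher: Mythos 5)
Your proposal is correct and takes essentially the same route as the paper's proof: rule \refToRule{t-unit} by instantiating \refItem{def:mlift}{unit} at $X=\Exp$ with $A$ the set of well-typed expressions, and rules \refToRule{t-ret-lift}/\refToRule{t-do-lift} by combining the non-monadic rules \refToRule{t-ret}/\refToRule{t-do} (read as an inclusion into an inverse image under the context function) with monotonicity of $\mlift[\hat{\eff}]$ and naturality (\refItem{def:mlift}{nat}). The only difference is presentational, since the paper proves \refToRule{t-ret-lift} and dismisses \refToRule{t-do-lift} as analogous, whereas you spell out both via the same uniform schema.
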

\begin{proof}
By instantiating \refItem{def:mlift}{unit} with $X=\Exp$ and $A=\{\e\mid\ \WTExp{\e}{\T}{\eff}\}$, we get $A\subseteq\{\e\mid\ \WTMExp{\mun(\e)}{\T}{\eff}{\eZero}\}$, that is, $\WTExp{\e}{\T}{\eff}$ implies $\WTMExp{\mun(\e)}{\T}{\eff}{\eZero}$, \mbox{as expressed by rule \refToRule{t-unit}. }

\newcommand{\Pre}{\textit{Pre}}
\newcommand{\Co}{\textit{Cons}}

We show how to derive \refToRule{t-ret-lift}, \refToRule{t-do-lift} can be derived analogously. Since, by rule \refToRule{t-ret},  $\WTValNarrow{\ve}{\T}$ implies $\WTExpNarrow{\Ret{\ve}}{\T}{\eZero}$, by the monotonicity of $\mlift[\hat{\eff}]$ we have that, set $\Pre=\{\ve\mid\ \WTVal{\ve}{\T}\}$, $\Co=\{\ve\mid\ \WTExp{\Ret{\ve}}{\T}{\eZero}\}$, 
$\mlift[\hat{\eff}]_\Val(\Pre)\subseteq\mlift[\hat{\eff}]_\Val(\Co)$.
Since $\mlift[\hat{\eff}]_\Val(\Pre)=\{\mve\mid\ \WTMVal{\mve}{\T}{\hat{\eff}}\}$, and, by \refItem{def:mlift}{nat}, instantiated with function (context) $\fun{\Ret}{\Val}{\Exp}$ and $A=\{ \e\mid\ \WTExpNarrow{\e}{\T}{\eZero}\}$,
$\mlift[\hat{\eff}]_\Val(\Co)=\{\mve\mid\ \WTMExp{\Mmap (\Ret{\ehole})\ \mve}{\T}{\eZero}{\hat{\eff}}\}$,
we get that $\WTMVal{\mve}{\T}{\hat{\eff}}$ implies $\WTMExp{\Mmap (\Ret{\ehole})\ \mve}{\T}{\eZero}{\hat{\eff}}$, as expressed by \mbox{rule \refToRule{t-ret-lift}. }
\end{proof}

 \cref{lem:mlift} derives a typing rule for each monadic operator used in the semantics. 
In addition, the semantics uses monadic constants, which are the results of some $\mrun\tname\m$ function, depending on (the magic methods declared in) the specific program. 
Hence,  like standard constants, we have to assume that these are well-typed.
To this end,  we require a typing rule for such constants, as given below. Essentially, since the (canonical) effect of a magic method $\m$ declared in $\tname$ should be an approximation of the computational effects raised by a call, formalized by the monadic value obtained  as result of $\mrun\tname\m$, then such monadic value should be well-typed with respect to the type-and-effect of the method. 
\begin{small}
\begin{quote}
\begin{math}
\NamedRule{t-run}{\WTVal{\ve_0}{\T_0}\BigSpace \WTVal{\seq\ve}{\seq{\T'}}}{\WTMVal{\mrun\tname\m(\seq\ve)}{\Subst{\Subst{\T}{\seq{\T_\Y}}{\seq\Y}}{\seq{\T_\X}}{\seq\X}}{\eCall{\Gen{\tname}{\seq{\T_\Y}}}{\m}{\seq{\T_\X}}}}
{\IsMType{}{\TDec{\Gen\tname{\Ext{\seq\Y}{\seq{\UT_\Y}}}}{\_}{\_}}{\m}{\MkT{\mgc}{\MT}}\\
\MT=\TMethEff{\Ext{\seq\X}{\seq{\UT_\X}}}{\seq\T}{\T}{\eCall{\Gen{\tname}{\seq\Y}}{\m}{\seq\X}}\\
\SubT{\T_0}{\Gen{\tname}{\seq{\T_\Y}}}\\
\SubT{\seq{\T_\Y}}{\Subst{\seq{\UT_\Y}}{\seq{\T_\Y}}{\seq\Y}}\\
\SubT{\seq{\T'}}{\Subst{\Subst{\seq\T}{\seq{\T_\Y}}{\seq\Y}}{\seq{\T_\X}}{\seq\X}}\\
\SubT{\seq{\T_\X}}{\Subst{\Subst{\seq{\UT_\X}}{\seq{\T_\Y}}{\seq\Y}}{\seq{\T_\X}}{\seq\X}}
}
\end{math}
\end{quote}
\end{small}
More in detail, given a magic method $\m$ declared in $\tname$ (first and second side conditions, where the effect is the canonical one), the result of a $\mrun{\tname}{\m}$ function is well-typed if the first argument has a subtype of one obtained by instantiating the type variables in $\tname$, and each other argument has a subtype of that obtained by  instantiating the type variables in $\tname$ and those in $\m$  with  the corresponding parameter type.
The type and the effect of the result are obtained by an analogous instantiation of variables  of  the return type-and-effect.

 We provide below the proof of monadic subject reduction (\cref{theo:mnd-sr}), noteworthy since it relies on the monadic typing rules, that is, those derived in \cref{lem:mlift} and rule \refToRule{t-run}. 
The proofs of monadic progress  and subject reduction for the pure relation (\cref{lem:subject-reduction})  are in the Appendix, together with the standard inversion and substitution lemmas.

\begin{lemma}[Subject Reduction]\label{lem:subject-reduction}
If $\WTExpNarrow{\e}{\T}{\eff}$ and $\e\purered \e'$, then $\WTExpNarrow{\e'}{\T'}{\eff'}$
and  $\SubTNarrow{\TEff{\T'}{\eff'}}{\TEff\T\eff}$.
\end{lemma}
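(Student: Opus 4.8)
The plan is to proceed by induction on the derivation of $\e\purered\e'$, which is needed to treat the single contextual rule \refToRule{try-ctx}, doing a case analysis on the last applied rule of \cref{fig:pure-red}. The proof will rest on a standard toolkit that I would establish first: inversion lemmas for each typing rule of \cref{fig:typing-exp}; two substitution lemmas, one replacing type variables $\seq\X$ by types $\seq\T$ and one replacing term variables by values, each preserving the type-and-effect up to $\subt$ (with the substituted effect re-simplified, as in \cref{fig:simplify}); weakening and narrowing for the environments; and monotonicity of the filter function $\FilterF{\hfilter}$ with respect to effect subtyping, which is immediate from its inductive definition in \cref{fig:filters}.

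\textbf{Structural cases.} For \refToRule{invk} I would invert \refToRule{t-invk} to expose the receiver and argument typings and the method type-and-effect $\mtype_\TEnv(\T_0,\m)$, then use a method-body lemma linking $\mbody$ with $\mtype$: since $\ve$ is a well-typed value, the defined body returned by $\mbody(\ve,\m)$ is exactly the one typed, via \refToRule{t-meth}, against the declared type-and-effect, so the two substitution lemmas give the substituted body a type-and-effect below that of the call, as required. The cases \refToRule{try-ret} and \refToRule{try-do} are handled by inverting \refToRule{t-try} together with \refToRule{t-ret}, \refToRule{t-do} and \refToRule{t-handler}, and re-assembling the result with \refToRule{t-do}/\refToRule{t-try}; the required effect equalities follow by unfolding $\FilterF{\hfilter}$ and using that $\EComp{}{}$ is associative, commutative and idempotent, while the type side uses the subtyping $\SubType{\TEnv}{\T'}{\T''}$ recorded by \refToRule{t-handler}. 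For \refToRule{try-ctx} I would apply the induction hypothesis to $\e_1\purered\e_2$, then retype the enclosing try by narrowing the final-expression parameter to the (smaller) type of $\e_2$ and invoking monotonicity of $\FilterF{\hfilter}$ to conclude $\SubTNarrow{\TEff{\T_2'}{\FilterFun{\eff_2}{\hfilter'}}}{\TEff{\T_1'}{\FilterFun{\eff_1}{\hfilter}}}$.

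\textbf{Catch and forward cases, and the crux.} In \refToRule{catch-continue}, \refToRule{catch-stop} and \refToRule{fwd} the receiver is a value and the enclosed magic call has, after simplification, a single call-effect $\eff_c$ (rule \refToRule{mgc} of \cref{fig:simplify} keeps magic call-effects unchanged). The heart of these cases is a coherence lemma between the runtime matching $\cmatch$ and the static filter $\FilterF{}$: for a well-typed value receiver, $\cmatch(\MCall\ve\m{\seq\T}{\seq\ve},\seq\cc)$ selects a clause $\cc_j$ iff $\FilterFun{\eff_c}{\seq\cfilter}$ is computed by the corresponding clause filter $\cfilter_j$, in which case that filtered effect equals the instantiated effect of the clause expression; and $\cmatch$ is undefined iff $\FilterFun{\eff_c}{\seq\cfilter}=\eff_c$. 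Granting this, \refToRule{fwd} rebuilds a \kw{do} whose effect $\EComp{\eff_c}{\eff_f}$ is exactly $\FilterFun{\eff_c}{\hfilter}$; \refToRule{catch-stop} yields the substituted clause expression, whose effect is $\FilterFun{\eff_c}{\seq\cfilter}\subt\FilterFun{\eff_c}{\hfilter}$ and whose type is below the try's result type by \refToRule{t-stop}; and \refToRule{catch-continue} yields a \kw{do} of the clause expression with the final expression, whose effect $\EComp{\FilterFun{\eff_c}{\seq\cfilter}}{\eff_f}$ again matches $\FilterFun{\eff_c}{\hfilter}$, the type following from \refToRule{t-continue} and narrowing of the final expression.

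\textbf{Main obstacle.} The delicate point is the coherence lemma above, because $\cmatch$ tests the \emph{dynamic} type of $\ve$ through $\instanceof{\ve}{\NT}$, whereas $\FilterF{}$ tests the type recorded in the static call-effect $\eff_c$. I would bridge the two by observing that a well-typed value carries its dynamic type as its static type, and that $\mbody$ and $\mtype_\TEnv$ resolve $\m$ at the same type, so that the clause selected at runtime and the one selected by the filter coincide; the subtle direction is that every runtime match is reflected statically, which relies on the call-effect faithfully recording the type at which $\m$ is found, together with the well-formedness of simplified effects (\cref{fig:simplify}) and effect subtyping. Everything else is routine rule-based bookkeeping with the substitution lemmas and the monoid laws of $\EComp{}{}$.
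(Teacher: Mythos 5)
Your overall strategy is exactly the one the paper follows: induction on the derivation of $\e\purered\e'$, supported by the same toolkit, namely inversion lemmas (including a dedicated one for method calls, \cref{lem:inversion:invk}), a type-variable and a term-variable substitution lemma preserving type-and-effect up to $\subt$ with re-simplification (\cref{lem:substituion:type,lem:substitution,lem:eff-simplification}), context subsumption/narrowing for the handler (\cref{lem:subsumption,lem:h-subsumption}), and monotonicity plus the composition law for $\FilterF{}$ (\cref{lem:filter-prop}), used where you use them. Your cases \refToRule{invk}, \refToRule{try-ret}, \refToRule{try-do} and \refToRule{try-ctx} match the paper's proof step for step. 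For \refToRule{fwd} only one direction of your claimed coherence is needed, namely that $\cmatch$ undefined forces the filter to leave the call-effect unchanged, and that direction does hold by transitivity: the receiver's dynamic type is below $\T_0\subt\Gen\tname{\seq{\T_\Y}}$, the type recorded in the call-effect, so any statically matching clause filter would also match at runtime. This is what the paper uses, implicitly, in its \refToRule{fwd} case.

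The gap is in the converse direction of your coherence lemma, on which your \refToRule{catch-continue} and \refToRule{catch-stop} cases rest. You argue that the clause selected by $\cmatch$ and the clause filter selected by $\FilterF{}$ coincide because ``$\mbody$ and $\mtype$ resolve $\m$ at the same type''. They do, but that shared type is the \emph{declaring} type of the magic method: by \cref{lem:inversion:invk}, $\eff=\eCall{\Gen\tname{\seq{\T_\Y}}}{\m}{\seq{\T_\X}}$ with $\T_0\subt\Gen\tname{\seq{\T_\Y}}$, whereas $\cmatch$ tests $\instanceof{\ve}{\NT}$ against the receiver's \emph{dynamic} type, which can be a proper subtype of $\Gen\tname{\seq{\T_\Y}}$. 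Concretely, take \cref{ex:exception} with $\throw$ \emph{not} redeclared in $\MyException$: a call of $\throw$ on a receiver of type $\MyException$ has static effect $\eCall{\Exception}{\throw}{\T}$, since lookup ascends to $\Exception$, yet a clause for $\MyException.\throw$ is selected by $\cmatch$ (the receiver is an instance of $\MyException$) while the corresponding clause filter is skipped, since $\Exception\not\subt\MyException$; for the same reason $\cmatch$ may select an \emph{earlier} clause than the filter does. In such situations the reduct of \refToRule{catch-stop} is the clause expression, whose effect need not be below $\FilterFun{\eff}{\hfilter}$, so your step ``that filtered effect equals the instantiated effect of the clause expression'' fails, and your bridging observation does not repair it, because it misidentifies which type $\cmatch$ inspects. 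Note, for calibration, that the paper's own appendix proof treats only \refToRule{invk}, \refToRule{try-ret}, \refToRule{try-do}, \refToRule{fwd} and \refToRule{try-ctx}: the catch cases you attempt are precisely where an additional invariant aligning the call-effect's type with runtime matching (in the spirit of the redeclaration of $\throw$ that the paper calls ``significant'' in \cref{ex:exception}) would have to be supplied, and your sketch does not supply it.
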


\begin{proofOf}{theo:mnd-sr}
Assuming $\WTExpNarrow{\e}{\T}{\eff}$ and $\e\red\me$, we have to prove that $\WTMExp{\me}{\T}{\eff'}{\hat{\eff}}$ for some $\T',\hat{\eff},\eff'$ such that $\SubT{\TEff{\T'}{\EComp{\hat{\eff}}{\eff'}}}{\TEff{\T}{\eff}}$. 
By induction on the reduction rules of \cref{fig:monadic-red}.
\begin{description}
\item [\refToRule{pure}] $\me$ is $\mun(\e')$ and $\e\purered \e'$. From $\WTExpNarrow{\e}{\T}{\eff}$ and  \cref{lem:subject-reduction} we get 
$\WTExpNarrow{\e'}{\T'}{\eff'}$ with $\SubT{\TEff{\T'}{\eff'}}{\TEff{\T}{\eff}}$.
From \cref{lem:mlift} by rule \refToRule{t-unit}
we derive   
$\WTMExp{\mun(\e')}{\T'}{\eff'}{\eZero}$, and $\SubT{\TEff{\T'}{\EComp{\eZero}\eff}}{\TEff{\T}{\eff}}$.

\item [\refToRule{mgc}] 
$\e$ is $\MCall{\ve_0}{\m}{\seq{\T_\X}}{\seq\ve}$ and $\me=\Mmap{(\Ret{\ehole})}{\mrun\tname\m(\ve,\seq\ve)}$ and $\mbody(\ve_0,\m)=\Pair{\mgc}{\tname}$. By  inversion  we get $\WTValNarrow{\ve_0}{\T_0}$ and $\WTValNarrow{\seq\ve}{\seq\T}$, and, from the definition of $\mbody$, $\mkind(\T_0, \m)=\mgc$. Hence, again  by inversion,  
     \begin{itemize}
\item
${\IsMType{}{\tdec}{\m}{\MkTE{\mgc}{\Ext{\seq\X}{\seq{\UT_\X}}}{\seq{\T''}}{\T''}{\eCall{\Gen{\tname}{\seq\Y}}{\m}{\seq\X}}}}$ for some $\tdec=\TDec{\Gen\tname{\Ext{\seq\Y}{\seq{\UT_\Y}}}}{\_}{\_}$ 
\item there is $\seq{\T_\Y}$ such that
 $\SubT{\T_0}{\Gen{\tname}{\seq{\T_\Y}}}$
  and $\SubT{\seq{\T_\Y}}{\Subst{\UT_\Y}{\seq{\T_\Y}}{\seq\Y}}$
\item $\SubT{\seq{\T_\X}}{\Subst{\Subst{\seq{\UT_\X}}{\seq{\T_\Y}}{\seq\Y}}{\seq{\T_\X}}{\seq\X}}$ and 
     $\SubT{\seq{\T'}}{\Subst{\Subst{\seq{\T''}}{\seq{\T_\Y}}{\seq\Y}}{\seq{\T_\X}}{\seq\X}}$ and $\T=\Subst{\Subst{\T''}{\seq{\T_\Y}}{\seq\Y}}{\seq{\T_\X}}{\seq\X}$ and $\eff=\eCall{\Gen{\tname}{\seq{\T_\Y}}}{\m}{\seq{\T_\X}}$. 
 \end{itemize}
Since the premises of  rule \refToRule{t-run} are satisfied, we get
$\WTMVal{\mrun\tname\m(\ve,\seq\ve)}{\T}{\eff}$.
From \cref{lem:mlift} by rule \refToRule{t-ret-lift} we get $\WTMExp{\Mmap (\Ret{\ehole})\ \mrun\tname\m(\ve,\seq\ve)}{\T}{\eZero}{\eff}$, and 
$\SubT{\TEff{\T}{\EComp{\eff}{\eZero}}}{\TEff{\T}{\eff}}$.

\item [\refToRule{ret}] $\e$ is $\Do{\x}{\Ret\ve}{\e'}$ and $\me=\mun(\Subst{\e'}\ve\x)$.  By inversion   
$\WTValNarrow{\ve}{\T_1}$ and 
$\IsWFExpNarrow{\emptyset}{\TVarNarrow{\T_1}{\x}}{\e'}{\T}{\eff}$.
Therefore,  by substitution,   we get $\WTExpNarrow{\Subst{\e'}{\ve}{\x}}{\T}{\eff}$. 
From \cref{lem:mlift} by \refToRule{t-unit} 
we derive   
$\WTMExp{\mun(\Subst{\e'}\ve\x)}{\T}{\eff}{\eZero}$,  and $\SubT{\TEff{\T}{\EComp{\eZero}{\eff}}}{\TEff{\T}{\eff}}$.

\item [\refToRule{do}]  
$\e=\Do\x{\e_1}{\e_2}$ and $\me=\Mmap{(\Do\x{\ehole}{\e_2})}{\me_1} $ and $\e_1\red\me_1$. 
 By inversion $\eff'=\EComp{\eff_1}{\eff_2}$ and 
$\WTExpNarrow{\e_1}{\T_1}{\eff_1}$
and $\IsWFExp{\emptyset}{\TVar{\T_1}{\x}}{\e_2}{\T}{\eff_2}$. 
By induction hypothesis $\WTMExp{\me_1}{\T'_1}{\eff'_1}{\hat{\eff}_1}$ 
for some $\hat{\eff}_1$ and $\eff'_1$ with 
$\SubT{\TEff{\T'_1}{\EComp{\hat{\eff_1}}{\eff'_1}}}{\TEff{\T_1}{\eff_1}}$. 
From \cref{lem:mlift} by rule \refToRule{t-do-lift} we get
$\WTMExp{\Mmap{(\Do\x{\ehole}{\e')}{\me} }}{\T}{\EComp{\eff'_1}{\eff_2}}{\hat{\eff}_1 }$U. Therefore, we get the thesis noting that 
$ \SubT{\TEff{\T}{\EComp{\hat{\eff}_1}{\EComp{\eff'_1}{\eff_2}}}}{\TEff{\T}{\EComp{\eff_1}{\eff_2}}}\subt \TEff{\T}{\eff}$.  
\end{description}
\end{proofOf}


\section{Related Work}\label{sect:related}
\smallskip
\noindent\textit{Design of the pure language} 
As mentioned, the pure calculus we choose as basis for the effectful extension is inspired by recent work \cite{WangZOS16,ServettoZ21,WebsterSD24} proposing an object-oriented paradigm based on interfaces with default methods and anonymous classes, rather than class instances with fields. In particular, classless objects were also adopted in \emph{interface-based programming} \cite{WangZOS16}, enabling multiple inheritance by decoupling state and behaviour. Then, in \emph{$\lambda$-based object-oriented programming} \cite{ServettoZ21}, objects are classless and stateless, but limited to be lambdas implementing a single abstract method. In the Fearless language \cite{WebsterSD24}, as in our calculus, this approach is generalized to arbitrary Java anonymous classes with no fields. 

\smallskip
\noindent\textit{Design of the effectful language} 
In our effectful language,  constructs to raise effects, similarly to generic effects  \cite{PlotkinP03},  are just method calls of a special kind, called \emph{magic}.
As mentioned, this terminology is taken from Fearless \cite{WebsterSD24}, and also informally used in Java reflection and Python. 
Concerning constructs for handling effects, handlers of algebraic effects \cite{PlotkinPretnar09,PlotkinPretnar13,KammarLO13,HillerstromL18} are an extremely powerful programming abstraction, describing the semantics of algebraic operations in the language itself, thus enabling the simulation of several effectful programs, such as stream redirection or cooperative concurrency \cite{PlotkinPretnar13,KammarLO13,BauerP15,Pretnar15}. 
The code executed when a call is caught can resume the original computation, using a form of continuation-passing style. 
Differently from exception handling, all occurrences of an operation within the scope of a handler are handled. 
Other forms of handlers have been considered, e.g., shallow handlers  \cite{KammarLO13,HillerstromL18}, where only the first call is handled. 
The handling mechanism in our language is meant to provide a good compromise between simplicity and expressivity: notably, it does not explicitly handle continuations, as in handlers of algebraic effects, hence we cannot resume the original computation more than once; yet we can express a variety of \mbox{handling mechanisms. }

\smallskip
\noindent\textit{Type-and-effect system} 
Type-and-effect systems, or simply effect systems \cite{Wadler98,NielsonN99,WadlerT03,MarinoM09,Katsumata14}, 
are a popular way of statically controlling computational effects. Many of them have been designed for specific notions of computational effect and also implemented in mainstream programming languages, the most  well-known  being the  mechanism of Java exceptions. 
Katsumata \cite{Katsumata14} recognized that effect systems share a common algebraic structure, notably they form an ordered monoid, and gave them denotational semantics through parametric monads, using a structure equivalent to our notion of interpretation. 

Our effects are essentially sets,  like  in formal models of  Java  exceptions \cite{AnconaLZ01}, 
 tracking which magic calls an expression is allowed to do.  
Differently from effect systems for algebraic effects \cite{BauerP14,BauerP15,Pretnar15}, also implemented in Eff \cite{EffLang},  and inspired by  \cite{GarianoNS19}, 
we enrich the effects with type information, not using just the name of magic methods. 
 By using this type information, which is polymorphic, we also allow effects to depend on the effect of other methods, smoothly enabling a form of effect polymorphism.
Effect systems supporting effect polymorphism have been widely studied for algebraic effects 
\cite{Leijen17,HillerstromLAS17,LindleyMM17,BiernackiPPS19,BrachthauserSO20jfp,BrachthauserSO20oopsla},
especially in the form of row polymorphism. 
A precise comparison with these systems is left \mbox{for future work. }

\smallskip
\noindent\textit{Semantics for effectful languages} 
Algebraic effects are typically treated as uninterpreted operations, that is, 
the evaluation process just builds a tree  of operation calls \cite{JohannSV10,SimpsonV20,Pretnar15}. 
Monadic operational semantics for $\lambda$-calculi with algebraic effects are also considered, mainly in the form of a monadic definitional interpreter (see, e.g., \cite{LiangHJ95,DaraisLNH17,Gavazzo18,DalLagoG22,DagninoG24}). 
That is, they directly define a function from expressions to monadic values, which essentially corresponds to our infinitary semantics. 
Small-step approaches are also considered by \cite{GavazzoF21,GavazzoTV24}. 


An alternative way of interpreting algebraic operations is by means of runners, a.k.a. comodels 
\cite{PowerS04,PlotkinP08,Uustalu15,AhmanB20}. Roughly, runners describe how operations are executed by the system, that is, 
how they transform the environment where  they are  run. This essentially amounts to giving an interpretation of operations in the state monad. 
More general runners, where the system is modelled in a more expressive way, are considered by \cite{AhmanB20}, where the state monad is combined with errors and system primitive operations. 

The motivation to have methods interpreted in a monad, rather than implemented in code, is the same nicely illustrated for runners by \cite{AhmanB20}. They observe that, in languages modeling computational effects, they are represented in the language itself. 
However,  some kinds of effects, such as input/output, even though represented in the language as algebraic operations (as
in Eff) or a monad (as in Haskell), are handled  by an operating system
functionality. Hence, semantics lacks modularity,  which can be recovered by introducing runners. 
 The same modularity is achieved in our calculus by providing the interpretations of magic methods as an independent ingredient of the semantics, which can be regarded as the operational counterpart of (effectful) runners. 
Differently from runners of \cite{AhmanB20}, our magic methods can be interpreted in an arbitrary monad. 
Moreover, our interpretation is a parameter of the operational semantics, while runners of \cite{AhmanB20} are described in the language itself through a mechanism similar to effect handlers, with a limited use of continuations.


\section{Conclusion}\label{sect:conclu}
This paper provides a positive answer to the question:
\begin{quoting}
\emph{is it possible to smoothly incorporate in the object-oriented paradigm constructs to raise, compose, and handle effects in an arbitrary monad?}
\end{quoting}
In particular, we consider a pure OO paradigm based on interfaces with default methods and anonymous classes, exemplified by a higher-order calculus subsuming both FJ and $\lambda$-calculus. 
The effectful extension is designed to be familiar for OO programmers, with magic methods to raise effects, and an expressive, yet simple, handling mechanism.  
Effect types are designed to be familiar, being sets of call-effects generalizing exception names;  the key novel feature is that type effects containing type variables can be refined in concrete calls, through a non-trivial process called \emph{simplification}.  
The semantics and the soundness proof  are given by a  novel approach \cite{DagninoGZ25}, providing a significant application for a language with a complex type system (both nominal and structural types, subtyping and generic types).
Instantiating such  an  approach, we derive from the type system a \emph{monadic type system}, allowing to carry the proof of subject reduction by the usual rule-based reasoning.

The integration of effectful features turned out to be, in fact, even smoother than in the functional case, fitting
naturally in the OO paradigm, where computations happen in the context of a given program (class table in Java). That is, magic methods are, from the point of view of the programmer, as other methods offered by the environment (system, libraries), with the only difference that their calls can be handled.
Another advantage is that, since methods ``belong to a type'', atomic effects (call-effects in our paper) naturally include a type, rather than being just operation names as in previous literature handling the functional case. Thus, effect polymorphism is obtained for free from the fact that types can be type variables, without any need of ad-hoc effect variables.

Concerning the specific calculus, we could have likely used FJ (in its version with generic types) as well. However we have, besides nominal, structural and intersection types, tht is, a paradigm which is partly object-based rather than class-based.  This leads to more flexibility, hence the tracking of effects is more flexible as well (can be done on an object's basis). In particular, FJ is not higher-order, meaning that it does not support functions being first-class values as customary in functional languages. Formally, in the translation of $\lambda$-calculus shown in \cref{rem:lambda-encoding}, any arbitrary function is encoded by an object literal; without object literals, only a finite collection of functions could be encoded, each one by a class, in a program.

 This is a foundational paper, hence we focus on the introduction of design features, formalisation, and results through a toy language. Of course, any step towards an implementation and/or the application to a realistic language would be an important contribution, as it would be to mechanize some proofs. Other directions for future work are outlined  below. 

We plan to analyze better the handling mechanism proposed in the paper, and its relation with other approaches. 
The type-and-effect system we design is, as said above, novel and expressive; however, it is very simple in the sense that effect types are essentially sets. Hence, it is not adequate to approximate computational effects where the order matters, such as, e.g., sequences of write operations. A type-and-effect system with a non-commutative operator  on effects is shown in \cite{DagninoGZ25}; however, effects there are purely semantics, and a syntactic representation should be investigated in order to apply the approach to real languages.  

Finally, the interpretation of magic methods could return monadic expressions, rather than values. 
This would enable a more interactive behaviour with the system; for instance, the semantics of a magic method, instead of returning an unrecoverable error,  could  return a call to the method \lstinline{throw} of an exception, which then could be handled by the program.


%
%

\bibliographystyle{plainurl}
\bibliography{biblio}

\input{conf-bib-macros}
\begin{thebibliography}{10}

\bibitem{AhmanB20}
Danel Ahman and Andrej Bauer.
\newblock Runners in action.
\newblock In Peter M{\"{u}}ller, editor, {\em \ESOP{29th}{2020}}, volume 12075
  of {\em Lecture Notes in Computer Science}, pages 29--55. Springer, 2020.
\newblock \href {https://doi.org/10.1007/978-3-030-44914-8\_2}
  {\path{doi:10.1007/978-3-030-44914-8\_2}}.

\bibitem{AnconaLZ01}
Davide Ancona, Giovanni Lagorio, and Elena Zucca.
\newblock A core calculus for {Java} exceptions.
\newblock In Linda~M. Northrop and John~M. Vlissides, editors, {\em
  \OOPSLA{}{2001}}, pages 16--30. {ACM} Press, 2001.
\newblock \href {https://doi.org/10.1145/504282.504284}
  {\path{doi:10.1145/504282.504284}}.

\bibitem{EffLang}
Andrej Bauer and Matija Pretnar.
\newblock Eff.
\newblock URL: \url{https://www.eff-lang.org/}.

\bibitem{BauerP14}
Andrej Bauer and Matija Pretnar.
\newblock An effect system for algebraic effects and handlers.
\newblock {\em Logical Methods in Computer Science}, 10(4), 2014.
\newblock \href {https://doi.org/10.2168/LMCS-10(4:9)2014}
  {\path{doi:10.2168/LMCS-10(4:9)2014}}.

\bibitem{BauerP15}
Andrej Bauer and Matija Pretnar.
\newblock Programming with algebraic effects and handlers.
\newblock {\em Journal of Logical and Algebraic Methods in Programming},
  84(1):108--123, 2015.
\newblock \href {https://doi.org/10.1016/J.JLAMP.2014.02.001}
  {\path{doi:10.1016/J.JLAMP.2014.02.001}}.

\bibitem{BiernackiPPS19}
Dariusz Biernacki, Maciej Pir{\'{o}}g, Piotr Polesiuk, and Filip Sieczkowski.
\newblock Abstracting algebraic effects.
\newblock {\em Proceedings of the {ACM} on Programming Languages},
  3({POPL}):6:1--6:28, 2019.
\newblock \href {https://doi.org/10.1145/3290319} {\path{doi:10.1145/3290319}}.

\bibitem{BrachthauserSO20oopsla}
Jonathan~Immanuel Brachth{\"{a}}user, Philipp Schuster, and Klaus Ostermann.
\newblock Effects as capabilities: effect handlers and lightweight effect
  polymorphism.
\newblock {\em Proceedings of the {ACM} on Programming Languages},
  4({OOPSLA}):126:1--126:30, 2020.
\newblock \href {https://doi.org/10.1145/3428194} {\path{doi:10.1145/3428194}}.

\bibitem{BrachthauserSO20jfp}
Jonathan~Immanuel Brachth{\"{a}}user, Philipp Schuster, and Klaus Ostermann.
\newblock Effekt: Capability-passing style for type- and effect-safe,
  extensible effect handlers in scala.
\newblock {\em Journal of Functional Programming}, 30:e8, 2020.
\newblock \href {https://doi.org/10.1017/S0956796820000027}
  {\path{doi:10.1017/S0956796820000027}}.

\bibitem{DagninoG24}
Francesco Dagnino and Francesco Gavazzo.
\newblock A fibrational tale of operational logical relations: Pure, effectful
  and differential.
\newblock {\em Logical Methods in Computer Science}, 20(2), 2024.
\newblock \href {https://doi.org/10.46298/LMCS-20(2:1)2024}
  {\path{doi:10.46298/LMCS-20(2:1)2024}}.

\bibitem{DagninoGZ25}
Francesco Dagnino, Paola Giannini, and Elena Zucca.
\newblock Monadic type-and-effect soundness.
\newblock In {\em \ECOOP{}{2025}}, LIPIcs. Schloss Dagstuhl - Leibniz-Zentrum
  fuer Informatik, 2025.
\newblock To appear.

\bibitem{DalLagoG22}
Ugo {Dal Lago} and Francesco Gavazzo.
\newblock A relational theory of effects and coeffects.
\newblock {\em Proceedings of the {ACM} on Programming Languages},
  6({POPL}):1--28, 2022.
\newblock \href {https://doi.org/10.1145/3498692} {\path{doi:10.1145/3498692}}.

\bibitem{DaraisLNH17}
David Darais, Nicholas Labich, Phuc~C. Nguyen, and David~Van Horn.
\newblock Abstracting definitional interpreters (functional pearl).
\newblock {\em Proceedings of the {ACM} on Programming Languages},
  1({ICFP}):12:1--12:25, 2017.
\newblock \href {https://doi.org/10.1145/3110256} {\path{doi:10.1145/3110256}}.

\bibitem{EilenbergM65}
Samuel Eilenberg and John~C. Moore.
\newblock {Adjoint functors and triples}.
\newblock {\em Illinois Journal of Mathematics}, 9(3):381 -- 398, 1965.
\newblock \href {https://doi.org/10.1215/ijm/1256068141}
  {\path{doi:10.1215/ijm/1256068141}}.

\bibitem{GarianoNS19}
Isaac~Oscar Gariano, James Noble, and Marco Servetto.
\newblock Calle: an effect system for method calls.
\newblock In Hidehiko Masuhara and Tomas Petricek, editors, {\em Proceedings of
  the 2019 {ACM} {SIGPLAN} International Symposium on New Ideas, New Paradigms,
  and Reflections on Programming and Software, Onward! 2019}, pages 32--45.
  {ACM}, 2019.
\newblock \href {https://doi.org/10.1145/3359591.3359731}
  {\path{doi:10.1145/3359591.3359731}}.

\bibitem{Gavazzo18}
Francesco Gavazzo.
\newblock Quantitative behavioural reasoning for higher-order effectful
  programs: Applicative distances.
\newblock In Anuj Dawar and Erich Gr{\"{a}}del, editors, {\em
  \LICS{32nd}{2018}}, pages 452--461. {ACM}, 2018.
\newblock \href {https://doi.org/10.1145/3209108.3209149}
  {\path{doi:10.1145/3209108.3209149}}.

\bibitem{GavazzoF21}
Francesco Gavazzo and Claudia Faggian.
\newblock A relational theory of monadic rewriting systems, part {I}.
\newblock In {\em \LICS{36th}{2021}}, pages 1--14. {IEEE}, 2021.
\newblock \href {https://doi.org/10.1109/LICS52264.2021.9470633}
  {\path{doi:10.1109/LICS52264.2021.9470633}}.

\bibitem{GavazzoTV24}
Francesco Gavazzo, Riccardo Treglia, and Gabriele Vanoni.
\newblock Monadic intersection types, relationally.
\newblock In Stephanie Weirich, editor, {\em \ESOP{33rd}{2024}}, volume 14576
  of {\em Lecture Notes in Computer Science}, pages 22--51. Springer, 2024.
\newblock \href {https://doi.org/10.1007/978-3-031-57262-3\_2}
  {\path{doi:10.1007/978-3-031-57262-3\_2}}.

\bibitem{HillerstromL18}
Daniel Hillerstr{\"{o}}m and Sam Lindley.
\newblock Shallow effect handlers.
\newblock In Sukyoung Ryu, editor, {\em \APLAS{16th}{2018}}, volume 11275 of
  {\em Lecture Notes in Computer Science}, pages 415--435. Springer, 2018.
\newblock \href {https://doi.org/10.1007/978-3-030-02768-1\_22}
  {\path{doi:10.1007/978-3-030-02768-1\_22}}.

\bibitem{HillerstromLAS17}
Daniel Hillerstr{\"{o}}m, Sam Lindley, Robert Atkey, and K.~C.
  Sivaramakrishnan.
\newblock Continuation passing style for effect handlers.
\newblock In Dale Miller, editor, {\em \FSCD{2nd}{2017}}, volume~84 of {\em
  LIPIcs}, pages 18:1--18:19. Schloss Dagstuhl - Leibniz-Zentrum f{\"{u}}r
  Informatik, 2017.
\newblock \href {https://doi.org/10.4230/LIPICS.FSCD.2017.18}
  {\path{doi:10.4230/LIPICS.FSCD.2017.18}}.

\bibitem{IgarashiPW99}
Atsushi Igarashi, Benjamin~C. Pierce, and Philip Wadler.
\newblock Featherweight {J}ava: A minimal core calculus for {J}ava and {GJ}.
\newblock In {\em \OOPSLA{}{1999}}, pages 132--146. {ACM} Press, 1999.
\newblock \href {https://doi.org/10.1145/320384.320395}
  {\path{doi:10.1145/320384.320395}}.

\bibitem{Jacobs16}
Bart Jacobs.
\newblock {\em Introduction to Coalgebra: Towards Mathematics of States and
  Observation}, volume~59 of {\em Cambridge Tracts in Theoretical Computer
  Science}.
\newblock Cambridge University Press, 2016.
\newblock \href {https://doi.org/10.1017/CBO9781316823187}
  {\path{doi:10.1017/CBO9781316823187}}.

\bibitem{JohannSV10}
Patricia Johann, Alex Simpson, and Janis Voigtl{\"{a}}nder.
\newblock A generic operational metatheory for algebraic effects.
\newblock In {\em \LICS{25th}{2010}}, pages 209--218. {IEEE} Computer Society,
  2010.
\newblock \href {https://doi.org/10.1109/LICS.2010.29}
  {\path{doi:10.1109/LICS.2010.29}}.

\bibitem{KammarLO13}
Ohad Kammar, Sam Lindley, and Nicolas Oury.
\newblock Handlers in action.
\newblock In Greg Morrisett and Tarmo Uustalu, editors, {\em \ICFP{}{2013}},
  pages 145--158. {ACM}, 2013.
\newblock \href {https://doi.org/10.1145/2500365.2500590}
  {\path{doi:10.1145/2500365.2500590}}.

\bibitem{Katsumata14}
Shin{-}ya Katsumata.
\newblock Parametric effect monads and semantics of effect systems.
\newblock In Suresh Jagannathan and Peter Sewell, editors, {\em
  \POPL{41st}{2014}}, pages 633--646. {ACM}, 2014.
\newblock \href {https://doi.org/10.1145/2535838.2535846}
  {\path{doi:10.1145/2535838.2535846}}.

\bibitem{Leijen17}
Daan Leijen.
\newblock Type directed compilation of row-typed algebraic effects.
\newblock In Giuseppe Castagna and Andrew~D. Gordon, editors, {\em
  \POPL{44th}{2017}}, pages 486--499. {ACM}, 2017.
\newblock \href {https://doi.org/10.1145/3009837.3009872}
  {\path{doi:10.1145/3009837.3009872}}.

\bibitem{LevyPT03}
Paul~Blain Levy, John Power, and Hayo Thielecke.
\newblock Modelling environments in call-by-value programming languages.
\newblock {\em Information and Computation}, 185(2):182--210, 2003.
\newblock \href {https://doi.org/10.1016/S0890-5401(03)00088-9}
  {\path{doi:10.1016/S0890-5401(03)00088-9}}.

\bibitem{LiangHJ95}
Sheng Liang, Paul Hudak, and Mark~P. Jones.
\newblock Monad transformers and modular interpreters.
\newblock In Ron~K. Cytron and Peter Lee, editors, {\em \POPL{22nd}{1995}},
  pages 333--343. {ACM} Press, 1995.
\newblock \href {https://doi.org/10.1145/199448.199528}
  {\path{doi:10.1145/199448.199528}}.

\bibitem{LindleyMM17}
Sam Lindley, Conor McBride, and Craig McLaughlin.
\newblock Do be do be do.
\newblock In Giuseppe Castagna and Andrew~D. Gordon, editors, {\em
  \POPL{44th}{2017}}, pages 500--514. {ACM}, 2017.
\newblock \href {https://doi.org/10.1145/3009837.3009897}
  {\path{doi:10.1145/3009837.3009897}}.

\bibitem{MarinoM09}
Daniel Marino and Todd~D. Millstein.
\newblock A generic type-and-effect system.
\newblock In Andrew Kennedy and Amal Ahmed, editors, {\em TLDI'09: Types in
  Languages Design and Implementatio}, pages 39--50. {ACM} Press, 2009.
\newblock \href {https://doi.org/10.1145/1481861.1481868}
  {\path{doi:10.1145/1481861.1481868}}.

\bibitem{Moggi89}
Eugenio Moggi.
\newblock Computational lambda-calculus and monads.
\newblock In {\em \LICS{4th}{1989}}, pages 14--23. {IEEE} Computer Society,
  1989.
\newblock \href {https://doi.org/10.1109/LICS.1989.39155}
  {\path{doi:10.1109/LICS.1989.39155}}.

\bibitem{Moggi91}
Eugenio Moggi.
\newblock Notions of computation and monads.
\newblock {\em Information and Computation}, 93(1):55--92, 1991.
\newblock \href {https://doi.org/10.1016/0890-5401(91)90052-4}
  {\path{doi:10.1016/0890-5401(91)90052-4}}.

\bibitem{NielsonN99}
Flemming Nielson and Hanne~Riis Nielson.
\newblock Type and effect systems.
\newblock In Ernst{-}R{\"{u}}diger Olderog and Bernhard Steffen, editors, {\em
  Correct System Design, Recent Insight and Advances}, volume 1710 of {\em
  Lecture Notes in Computer Science}, pages 114--136. Springer, 1999.
\newblock \href {https://doi.org/10.1007/3-540-48092-7\_6}
  {\path{doi:10.1007/3-540-48092-7\_6}}.

\bibitem{PlotkinP01}
Gordon~D. Plotkin and John Power.
\newblock Adequacy for algebraic effects.
\newblock In Furio Honsell and Marino Miculan, editors, {\em
  \FOSSACS{4th}{2001}}, volume 2030 of {\em Lecture Notes in Computer Science},
  pages 1--24. Springer, 2001.
\newblock \href {https://doi.org/10.1007/3-540-45315-6\_1}
  {\path{doi:10.1007/3-540-45315-6\_1}}.

\bibitem{PlotkinP02}
Gordon~D. Plotkin and John Power.
\newblock Notions of computation determine monads.
\newblock In Mogens Nielsen and Uffe Engberg, editors, {\em
  \FOSSACS{5th}{2002}}, volume 2303 of {\em Lecture Notes in Computer Science},
  pages 342--356. Springer, 2002.
\newblock \href {https://doi.org/10.1007/3-540-45931-6\_24}
  {\path{doi:10.1007/3-540-45931-6\_24}}.

\bibitem{PlotkinP03}
Gordon~D. Plotkin and John Power.
\newblock Algebraic operations and generic effects.
\newblock {\em Applied Categorical Structures}, 11(1):69--94, 2003.
\newblock \href {https://doi.org/10.1023/A:1023064908962}
  {\path{doi:10.1023/A:1023064908962}}.

\bibitem{PlotkinP08}
Gordon~D. Plotkin and John Power.
\newblock Tensors of comodels and models for operational semantics.
\newblock In Andrej Bauer and Michael~W. Mislove, editors, {\em
  \MFPS{24th}{2008}}, volume 218 of {\em Electronic Notes in Theoretical
  Computer Science}, pages 295--311. Elsevier, 2008.
\newblock \href {https://doi.org/10.1016/J.ENTCS.2008.10.018}
  {\path{doi:10.1016/J.ENTCS.2008.10.018}}.

\bibitem{PlotkinPretnar09}
Gordon~D. Plotkin and Matija Pretnar.
\newblock Handlers of algebraic effects.
\newblock In Giuseppe Castagna, editor, {\em \ESOP{18th}{2009}}, volume 5502 of
  {\em Lecture Notes in Computer Science}, pages 80--94. Springer, 2009.
\newblock \href {https://doi.org/10.1007/978-3-642-00590-9\_7}
  {\path{doi:10.1007/978-3-642-00590-9\_7}}.

\bibitem{PlotkinPretnar13}
Gordon~D. Plotkin and Matija Pretnar.
\newblock Handling algebraic effects.
\newblock {\em Logical Methods in Computer Science}, 9(4), 2013.
\newblock \href {https://doi.org/10.2168/LMCS-9(4:23)2013}
  {\path{doi:10.2168/LMCS-9(4:23)2013}}.

\bibitem{PowerS04}
A.~John Power and Olha Shkaravska.
\newblock From comodels to coalgebras: State and arrays.
\newblock In Jir{\'{\i}} Ad{\'{a}}mek and Stefan Milius, editors, {\em
  \CMCS{}{2004}}, volume 106 of {\em Electronic Notes in Theoretical Computer
  Science}, pages 297--314. Elsevier, 2004.
\newblock \href {https://doi.org/10.1016/J.ENTCS.2004.02.041}
  {\path{doi:10.1016/J.ENTCS.2004.02.041}}.

\bibitem{Pretnar15}
Matija Pretnar.
\newblock An introduction to algebraic effects and handlers. {Invited} tutorial
  paper.
\newblock In Dan~R. Ghica, editor, {\em \MFPS{31st}{2015}}, volume 319 of {\em
  Electronic Notes in Theoretical Computer Science}, pages 19--35. Elsevier,
  2015.
\newblock \href {https://doi.org/10.1016/J.ENTCS.2015.12.003}
  {\path{doi:10.1016/J.ENTCS.2015.12.003}}.

\bibitem{Riehl17}
Emily Riehl.
\newblock {\em Category theory in context}.
\newblock Courier Dover Publications, 2017.

\bibitem{ServettoZ21}
Marco Servetto and Elena Zucca.
\newblock {\(\lambda\)}-based object-oriented programming (pearl).
\newblock In Anders M{\o}ller and Manu Sridharan, editors, {\em
  \ECOOP{35th}{2021}}, volume 194 of {\em LIPIcs}, pages 21:1--21:16. Schloss
  Dagstuhl - Leibniz-Zentrum fuer Informatik, 2021.
\newblock \href {https://doi.org/10.4230/LIPICS.ECOOP.2021.21}
  {\path{doi:10.4230/LIPICS.ECOOP.2021.21}}.

\bibitem{SimpsonV20}
Alex Simpson and Niels F.~W. Voorneveld.
\newblock Behavioural equivalence via modalities for algebraic effects.
\newblock {\em {ACM} Transactions on Programming Languages and Systems},
  42(1):4:1--4:45, 2020.
\newblock \href {https://doi.org/10.1145/3363518} {\path{doi:10.1145/3363518}}.

\bibitem{Street72}
Ross Street.
\newblock The formal theory of monads.
\newblock {\em Journal of Pure and Applied Algebra}, 2(2):149 -- 168, 1972.
\newblock \href {https://doi.org/10.1016/0022-4049(72)90019-9}
  {\path{doi:10.1016/0022-4049(72)90019-9}}.

\bibitem{Uustalu15}
Tarmo Uustalu.
\newblock Stateful runners of effectful computations.
\newblock In Dan~R. Ghica, editor, {\em \MFPS{31st}{2015}}, volume 319 of {\em
  Electronic Notes in Theoretical Computer Science}, pages 403--421. Elsevier,
  2015.
\newblock \href {https://doi.org/10.1016/J.ENTCS.2015.12.024}
  {\path{doi:10.1016/J.ENTCS.2015.12.024}}.

\bibitem{Wadler95}
Philip Wadler.
\newblock Monads for functional programming.
\newblock In Johan Jeuring and Erik Meijer, editors, {\em Advanced Functional
  Programming, First International Spring School on Advanced Functional
  Programming Techniques, B{\aa}stad, Sweden, May 24-30, 1995, Tutorial Text},
  volume 925 of {\em Lecture Notes in Computer Science}, pages 24--52.
  Springer, 1995.
\newblock \href {https://doi.org/10.1007/3-540-59451-5\_2}
  {\path{doi:10.1007/3-540-59451-5\_2}}.

\bibitem{Wadler98}
Philip Wadler.
\newblock The marriage of effects and monads.
\newblock In Matthias Felleisen, Paul Hudak, and Christian Queinnec, editors,
  {\em \ICFP{3rd}{1998}}, pages 63--74. {ACM}, 1998.
\newblock \href {https://doi.org/10.1145/289423.289429}
  {\path{doi:10.1145/289423.289429}}.

\bibitem{WadlerT03}
Philip Wadler and Peter Thiemann.
\newblock The marriage of effects and monads.
\newblock {\em {ACM} Transactions on Computational Logic}, 4(1):1--32, 2003.
\newblock \href {https://doi.org/10.1145/601775.601776}
  {\path{doi:10.1145/601775.601776}}.

\bibitem{WangZOS16}
Yanlin Wang, Haoyuan Zhang, Bruno~C. d.~S.~Oliveira, and Marco Servetto.
\newblock Classless {Java}.
\newblock In Bernd Fischer and Ina Schaefer, editors, {\em Generative
  Programming: Concepts and Experiences, {GPCE} 2016}, pages 14--24. {ACM}
  Press, 2016.
\newblock \href {https://doi.org/10.1145/2993236.2993238}
  {\path{doi:10.1145/2993236.2993238}}.

\bibitem{WebsterSD24}
Nick Webster, Marco Servetto, and Michael Homer.
\newblock The {Fearless} journey [{Draft}].
\newblock {\em CoRR}, abs/2405.06233, 2014.
\newblock URL: \url{https://doi.org/10.48550/arXiv.2405.06233}, \href
  {http://arxiv.org/abs/2405.06233} {\path{arXiv:2405.06233}}.

\end{thebibliography}

\clearpage 

\appendix 

\section{Auxiliary Definitions for the Type System}\label{sec:aux-typing}

In \cref{fig:extract-sig} we model the extraction of structural type information (that is, signatures) from a program. 

\begin{figure}[ht]
\begin{math}
\begin{array}{l}
\NamedRule{obj-sig}{\typeof{\TEnv}{\seq\NT}{\sig}}{\typeof{\TEnv}{\TObj{\seq\NT}{\sig'}}{\override{\sig}{\sig'}}}{
  }
\BigSpace
\NamedRule{ntypes-sig}{\typeof{\TEnv}{\NT_i}{\sig_i}\ \forall i\in 1..n}{\typeof{\TEnv}{ \NT_1\ldots\NT_n}{ \bigSigPlus_{i\in 1..n}\sig_i}}{}
\\[4ex]

\NamedRule{ntype-sig}{\typeof{}{\prog(\tname)}{\sig}}
{\typeof{\TEnv}{ \Gen\tname{\seq\T}}{\Subst{\sig}{\seq{\T}}{\seq{\Y}}}}
{\prog(\tname)=\TDec{\Gen\tname{\Ext{\seq\Y}{\_}}}{\_}{\_} \\
 \IsWFType{\TEnv}{\Gen\tname{\seq\T}}  \\
 }
 
 \\[4ex]
 \NamedRule{tdec-sig}{\typeof{\Ext{\seq\X}{\seq{\T}}}{\seq{\NT}}{\sig}\Space \typeof{\Ext{\seq\X}{\seq{\T}}}{\seq\md}{\sig'}}{\typeof{}{ \TDec{\Gen\tname{\Ext{\seq\X}{\seq\T}}}{\seq\NT}{\seq{\md}}}{ \override{\sig}{\sig'}}}{\IsWFType{\Ext{\seq\X}{\seq{\T}}}{\seq{\T}}  }
\\[3ex]
\NamedRule{meths-sig}{\typeof{\TEnv}{\md_i}{\sig_i}\ \forall i\in 1..n}{\typeof{\TEnv}{\md_1\ldots\md_n}{ \bigSigPlus_{i\in 1..n}\sig_i}}{}
\\[3ex]
\NamedRule{def-meth-sig}{}{\typeof{\TEnv}{ \MethDec{\m}{\defn}{\MT}{\_}}{ \MSig{\defn}{\m}{\MT }}}
{\IsWFType{\TEnv}{\MT}
}
 \\[3ex] 
\NamedRule{abs-mgc-meth-sig}{}{\typeof{\TEnv}{\MethDec{\m}{\kind}{\MT}{}}{ \MSig{\kind}{\m}{\MT }}}
{\kind\neq\defn\\
\IsWFType{\TEnv}{\MT}
}

\end{array}
\end{math}
\caption{Extraction of type-and-effect information (signatures)}
\label{fig:extract-sig}
\end{figure}

As mentioned, in this phase constraints about no conflicting method definitions and safe overriding are checked. This is achieved, respectively, by the \emph{symmetric} and \emph{right-preferential sum} operations on signatures defined below.
\begin{description}
\item[Symmetric sum]
The signature $\sig\sigPlus\sig'$ is  defined when the following condition holds:
\begin{quoting}
$\sig(\m)=\MkT{\kind}{\MT}$ and $\sig'(\m)=\MkT{\kind'}{\MT'}$  then $\MT=\MT'$, and $\kind\neq\mgc$, $\kind'\neq\mgc$.
 \end{quoting}
In this case, it is defined as follows: 
\begin{quoting}
\begin{math}
(\sig\sigPlus\sig')(\m)=
\begin{cases}
\sig(\m)&\mbox{if}\ \m\in\dom(\sig){\setminus}\dom(\sig')\\
\sig'(\m)&\mbox{if}\ \m\in\dom(\sig'){\setminus}\dom(\sig)\\
\MkT{\kind\sigPlus\kind'}{\MT}&\mbox{if}\ \sig(\m)=\MkT{\kind}{\MT}\ \mbox{and}\ \sig'(\m)=\MkT{\kind'}{\MT}
\end{cases}
\end{math}
\end{quoting}
where $\abs\sigPlus\abs=\defn\sigPlus\defn=\abs$, and $\abs\sigPlus\defn=\defn\sigPlus\abs=\defn$.\\

Symmetric sum handles conflicts, that is, method names occurring on both sides. Here, since this is not our focus, we take a simple choice, imposing method type-and-effects to be the same (up-to $\alpha$-renaming). Moreover, 
a method can be abstract on one side and defined on the other, and even defined on both; in this case, in the sum it becomes abstract, thus forcing objects implementing the sum to provide a definition.  A magic method, instead, cannot be inherited more than once.

\item[Right-preferential sum] The signature $\override{\sig}{\sig'}$ is defined in $\TEnv$ \mbox{when the following condition holds:}
\begin{quoting}
if $\sig(\m)=\MkT{\kind}{\MT}$ and $\sig'(\m)=\MkT{\kind'}{\MT'}$ then
\begin{quoting}
$\MT=\TMethEff{\Ext{\seq\X}{\seq\UT}}{\seq\T}{\T}{\eff}$ and $\MT'=\TMethEff{\Ext{\seq\X}{\seq\UT}}{\seq\T}{\T'}{\eff'}$, with $\SubType{\TEnv}{\TEff{\T'}{\eff'}}{\TEff{\T}{\eff}}$\\
$\kind'=\abs$  implies $\kind=\abs$, $\kind=\mgc$ iff $\kind'=\mgc$, $\kind=\kind'=\mgc$ implies $\T=\T'$
\end{quoting}
\end{quoting}
In this case, it is defined as follows: 
\begin{quoting}
\begin{math}
(\override{\sig}{\sig'})(\m)=
\begin{cases}
\sig(\m)\ &\mbox{if}\ \m\in\dom(\sig){\setminus}\dom(\sig')\\
\sig'(\m)\ &\mbox{otherwise}
\end{cases}
\end{math}
\end{quoting}
Right-preferential sum handles overriding. We allow method types  to be refined, abstract methods to only override abstract methods, magic methods to only override/be overriden by magic methods, and in this case \mbox{only the effect can be refined.}
\end{description}
Different policies could be  modularly  obtained by changing the above definitions. Note that, differently from left-preferential, our simple symmetric sum does not depend on $\TEnv$; it could be the case, e.g., if type-and-effects of conflicting methods could be in the subtyping relation.

 The subtyping and subeffecting relations are defined in \cref{fig:sub-eff}. 
 \begin{figure}[ht]
\begin{small}
\begin{math}
\begin{array}{c}

\NamedRule{sub-refl-var}{ 
}{ \SubType{\TEnv}{\X}{\X}}
{} 
\BigSpace
\NamedRule{sub-refl-ntype}{
}{ \SubType{\TEnv}{\Gen\tname{\seq\T}}{\Gen\tname{\seq\T} }}
{} 

\BigSpace
\NamedRule{sub-obj}{ \SubType{\TEnv}{\seq{\NT}}{\seq{\NT'}} \Space \SubType{\TEnv}{\sig}{\sig'}
}{ \SubType{\TEnv}{\TObj{\seq\NT}{\sig}}{\TObj{\seq\NT'}{\sig'}} }
{} 
\\[5ex]
\NamedRule{sub-ntypes-left}{\SubType{\TEnv}{\NT_i}{\T}}{\SubType{\TEnv}{\NT_1\ldots\NT_n}{\T}}{
i\in 1..n
}
\BigSpace
\NamedRule{sub-ntypes-right}{\SubType{\TEnv}{\T}{\NT_i}\ \forall i\in 1...n}{\SubType{\TEnv}{\T}{\NT_1\ldots\NT_n}}{}
\\[3ex]
\NamedRule{sub-ntype}{
}{ \SubType{\TEnv}{\Gen\tname{\seq\T}}{\Subst{\NT_i}{\seq\T}{\seq\Y}} }
{\prog(\tname)=\TDec{\Gen\tname{\Ext{\seq\Y}{\_}}}{\NT_1\ldots\NT_n}{\_}\\
i\in 1..n} 
\\[4ex]
\NamedRule{sub-empty}{}{\SubType{\TEnv}{\eZero}{\eff}}{}\BigSpace
\NamedRule{sub-top}{}{\SubType{\TEnv}{\eff}{\eTop}}{}
\\[3ex]
\NamedRule{sub-union-left}{\SubType{\TEnv}{\eff_1}{\eff}\Space\SubType{\TEnv}{\eff_2}{\eff}}{\SubType{\TEnv}{\EComp{\eff_1}{\eff_2}}{\eff}}{ }
\BigSpace
\NamedRule{sub-union-right}{\SubType{\TEnv}{\eff}{\eff_i}}{\SubType{\TEnv}{\eff}{\EComp{\eff_1}{\eff_2}}}{ i\in 1..2}
\\[3ex]
\NamedRule{sub-var-call}{\SubType{\TEnv}{\eff'}{\eff''}}{\SubType{\TEnv}{\eCall{\X}{\m}{\seq{\T}}}{\eff''}}{
\IsMType{\TEnv}{\TEnv(\X)}{\m}{\MkTE{\_}{\Ext{\seq\X}{\_}}{\_}{\_}{\eff}}\\
 \EffRed{\TEnv}{\Subst{\eff}{\seq\T}{\seq\X}}{\eff'}\\
}
\\[5ex]
\NamedRule{sub-mgc-call}{\SubType{\TEnv}{\T}{\T'}\Space\SubType{\TEnv}{\seq\T}{\seq\UT}}{\SubType{\TEnv}{\eCall{\T}{\m}{\seq{\T}}}{\eCall{\T'}{\m}{\seq{\UT}}}}{
\typeof{\TEnv}{\T}{\sig}\\
\typeof{\TEnv}{\T'}{\sig'}\\
\mkind(\sig,\m)=\mkind(\sig',\m)=\mgc
}
\\[5ex]
\NamedRule{sub-typeff}{\SubType{\TEnv,\Ext{\seq\X}{\seq\UT}}{\T'}{\T}\Space\SubType{\TEnv,\Ext{\seq\X}{\seq\UT}}{\eff'}{\eff}}{\SubType{\TEnv,\Ext{\seq\X}{\seq\UT}}{\TEff{\T'}{\eff'}}{\TEff{\T}}{\eff}}{}
\\[5ex]
\NamedRule{sub-mtypeff}{\SubType{\TEnv,\Ext{\seq\X}{\seq\UT}}{\TEff{\T'}{\eff'}}{\TEff{\T}}{\eff}}{\SubType{\TEnv}{\TMethEff{\Ext{\seq\X}{\seq\UT}}{\seq\T}{\T'}{\eff'}}{\TMethEff{\Ext{\seq\X}{\seq\UT}}{\seq\T}{\T}}{\eff}}{}
\\[3ex]
\NamedRule{sub-sig}{\SubType{\TEnv}{\MT_i}{\MT'_i}\ \forall i\in 1..n
}
{\SubType{\TEnv}{\sig}{\sig'}}
{\sig=\MSig{\m_1}{\kind_1}{\MT_1} \ldots \MSig{\m_n}{\kind_n}{\MT_{n+h}}\\
\sig'=\MSig{\m_1}{\kind'_1}{\MT'_1} \ldots \MSig{\m_n}{\kind'_n}{\MT'_n}\\
\SubType{}{\kind_i}{\kind'_i}\ \forall i\in 1..n}

\end{array}
\end{math}
\end{small}
\caption{Subtyping and subeffecting }\label{fig:sub-eff}
\end{figure}

Rules are mainly straightforward, we only comment those handling  call-effects.  In \refToRule{sub-var-call}, upper bounds of a variable effect are those of the bound of the variable in the type environment, instantiated on the actual type arguments, and simplified. In rule \refToRule{sub-mgc-call}, a magic call-effect can be refined by another for the same method,  where the receiver and the type arguments are subtypes.

Finally, well-formedness of type-and-effects  is defined in \cref{fig:wf-eff}.  
\begin{figure}[ht]
\begin{small}
\begin{math}
\begin{array}{c}
\NamedRule{WF-var}{
}{  \IsWFType{\TEnv}{\X} }
{\X\in\dom(\TEnv) } 
\BigSpace

\NamedRule{WF-otype}{
}{\IsWFType{\TEnv}{\TObj{\seq\NT}{\sig}}}
{\typeof{\TEnv}{\TObj{\seq\NT}{\sig}}{\_}
}
\\[3ex]
\NamedRule{WF-types}{
\IsWFType{\TEnv}{\T_i}\Space \forall  i \in 1..n
} 
{   \IsWFType{\TEnv}{\T_1\ldots\T_n} }
{ } 
\BigSpace
\NamedRule{WF-ntype}{
  \IsWFType{\TEnv}{\seq\T}
  } 
{   \IsWFType{\TEnv}{\Gen\tname{\seq\T}} }
{\seq\T=\T_1\ldots\T_n\\
\prog(\tname)=\TDec{\Gen\tname{\Ext{\seq\Y}{\T'_1\ldots\T'_n}}}{\_}{\_} \\
 \SubType{\TEnv}{\T_i}{\Subst{\T'_i}{\seq\T}{\seq\Y}}\ \forall i\in 1..n
} 
\\[5ex]
\NamedRule{WF-empty}{
}{  \IsWFType{\TEnv}{\eZero} }
{ }\BigSpace\NamedRule{WF-top}{
}{  \IsWFType{\TEnv}{\eTop} }
{ } 
\BigSpace
\NamedRule{WF-union}{\IsWFType{\TEnv}{\eff}\Space\IsWFType{\TEnv}{\eff'}
}{  \IsWFType{\TEnv}{\EComp{\eff}{\eff'}} }
{ } 
\\[3ex]
\NamedRule{WF-call}{\IsWFType{\TEnv}{\T}\Space\IsWFType{\TEnv}{\seq\T}}{\IsWFType{\TEnv}{\eCall{\T}{\m}{\seq{\T}}}}
{
\IsMType{\TEnv}{\T}{\m}{\MkT{\mgc}{\TMeth{\Ext{\seq\X}{\seq\UT}}{\_}{\_}}}\\
 \SubType{\TEnv}{\seq\T}{\Subst{\seq\UT}{\seq\T}{\seq\X}}
}
\\[6ex]
\NamedRule{WF-mtypeff}{\IsWFType{\TEnv,\Ext{\seq\X}{\seq{\UT}}}{\seq{\UT}}\Space\IsWFType{\TEnv,\Ext{\seq\X}{\seq{\UT}}}{\seq{\T}}\Space\IsWFType{\TEnv,\Ext{\seq\X}{\seq{\UT}}}{\T}
\Space\IsWFType{\TEnv,\Ext{\seq\X}{\seq{\UT}}}{\eff}}{\IsWFType{\TEnv}{\TMethEff{\Ext{\seq\X}{\seq\UT}}{\seq\T}{\T}{\eff}}}{}
\\[4ex]
\NamedRule{WF-sig}{
  \IsWFType{\TEnv}{\MT_i}\ \forall i\in 1..n
  } 
{   \IsWFType{\TEnv}{\m_1:\kind_1\ \MT_1\ \ldots\ \m_n:\kind_n\ \MT_n} }
{
} 
\end{array}
\end{math}
\end{small}
\caption{Well-formedness of type-and-effects}\label{fig:wf-eff}
\end{figure}

A type-and-effect is well-formed if, roughly, free variables are defined in the type environment, nominal types are instances, respecting the bounds, of those declared in the program, see the side conditions of rule \refToRule{WF-ntype}, and object types respect the constraints about conflicts and overriding, see the side condition in rule \refToRule{WF-otype}, where the judgment  $\typeof{\TEnv}{\TObj{\seq\NT}{\sig}}{\_}$ means that we can safely extract a signature from the type, as described before.

\section{Proofs}\label{sec:proofs}

\begin{lemma}[Inversion]\label[lemma]{lem:inversion}\
\begin{enumerate}
\item \label{lem:inversion:ret} If $ \WTExp{\Ret\ve}{\T}{\eff}$,\ then $\eff=\eZero$ and 
$\WTVal{\ve}{\T}$.
\item \label{lem:inversion:do} If $\WTExp{\Do\x{\e_1}{\e_2}}{\T}{\eff} $ then 
 $\WTExp{{\e_1}}{\T_1}{\eff_1}$ and  $  \IsWFExp{\emptyset}{\TVar{\T_1}{\x}}{{\e_2}}{\T}{\eff_2}$ and  
$\eff'=\EComp{\eff_1}{\eff_2}$.
\item \label{lem:inversion:try} If $\WTExp{\TryShort\e\handler}{\T}{\eff}$ then 
$\WTExp{\e}{\T'}{\eff'}$ and $\IsWFHandlerNarrow{\emptyset}{\emptyset}{\T'}{\handler}{\T}{\hfilter}$ and $\eff{=}\FilterFun{\eff'}{\hfilter}$. 
 \end{enumerate}
\end{lemma}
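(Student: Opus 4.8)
The plan is a routine \emph{syntax-directed} inversion, requiring no induction. The crucial structural observation is that the typing rules for expressions in \cref{fig:typing-exp} are deterministic in the shape of the subject: each of the four expression forms---a method call $\MCall\ve\m{\seq\T}{\seq\ve}$, a $\Ret\ve$, a $\Do\x{\e_1}{\e_2}$, and a try-block $\TryShort\e\handler$---appears as the conclusion of exactly one rule, namely \refToRule{t-invk}, \refToRule{t-ret}, \refToRule{t-do}, and \refToRule{t-try} respectively. Moreover there is no subsumption rule: subtyping is never applied ``at the top'' of a derivation, but only internalised inside the side conditions of individual rules (for instance the argument-type constraints of \refToRule{t-invk}). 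Hence any derivation of a ground judgment $\WTExp{\e}{\T}{\eff}$ must end with the unique rule whose conclusion matches the outermost constructor of $\e$, and the three claimed implications follow by reading off the premises and the forced form of $\T$ and $\eff$ in that rule's conclusion.

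Concretely, for \cref{lem:inversion:ret} a derivation of $\WTExp{\Ret\ve}{\T}{\eff}$ must end with \refToRule{t-ret}; its conclusion fixes $\eff=\eZero$, and its single premise, instantiated with $\TEnv=\Gamma=\emptyset$, is exactly $\WTVal{\ve}{\T}$. For \cref{lem:inversion:do} the derivation ends with \refToRule{t-do}, whose two premises (with $\TEnv=\emptyset$ and environment $\x{:}\T_1$ for the second subterm) yield $\WTExp{\e_1}{\T_1}{\eff_1}$ and $\IsWFExp{\emptyset}{\TVar{\T_1}{\x}}{\e_2}{\T}{\eff_2}$, while the conclusion forces the effect to be $\EComp{\eff_1}{\eff_2}$. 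For \cref{lem:inversion:try} it ends with \refToRule{t-try}, giving $\WTExp{\e}{\T'}{\eff'}$ and the handler judgment $\IsWFHandlerNarrow{\emptyset}{\emptyset}{\T'}{\handler}{\T}{\hfilter}$ as premises, with the conclusion fixing $\eff=\FilterFun{\eff'}{\hfilter}$; note that one stops at the handler judgment and need not invert \refToRule{t-handler} further.

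The argument carries no real difficulty, so the only point deserving care---which I would check once, globally, before the case split---is precisely the absence of any non-syntax-directed rule. I would scan \cref{fig:typing-exp} to confirm that no rule has a metavariable expression as its subject and that subtyping enters only through side conditions, so that the last rule of any derivation is uniquely determined by the head constructor of the expression. With this established, each item is a direct one-line case, reading premises and conclusion off the matching rule.
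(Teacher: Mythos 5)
Your proof is correct and takes the same route the paper does: the paper states this lemma without proof, treating it as immediate precisely because the rules of \cref{fig:typing-exp} are syntax-directed with no subsumption rule (cf.\ the proof of \cref{lem:inversion:invk}, which likewise just reads premises off \refToRule{t-invk}), which is exactly the global check you propose. The only discrepancy is in the statement itself, not your argument: the $\eff'$ in item~2 is a typo for $\eff$, as you implicitly read it.
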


 \begin{lemma}[Inversion for method calls]\label[lemma]{lem:inversion:invk}
If $\WTExp{\MCall{\ve_0}\m{\seq{\T_\X}}{\seq\ve}}{\T}{\eff}$, then 
\begin{enumerate}
\item \label{invk:one} $\WTVal{\ve_0}{\T_0}$ and $\WTVal{\seq\ve}{\seq\T}$
\item  \label{invk:two} $\IsMType{}{\T_0}{\m}{\MkTE{\kind}{\Ext{\seq\X}\seq{\UT_\X}}{\seq{\T'}}{\T'}{\eff'}}$, with $\kind\neq\abs$ 
\item   \label{invk:three}  $\SubT{\seq{\T_\X}}{\Subst{\seq{\UT_\X}}{\seq{\T_\X}}{\seq\X}}$ and 
     $\SubT{\seq\T}{\Subst{\seq{\T'}}{\seq{\T_\X}}{\seq\X}}$ and $\T=\Subst{\T'}{\seq{\T_\X}}{\seq\X}$ and $\EffRed{}{\Subst{\eff'}{\seq{\T_\X}}{\seq\X}}{\eff}$
\item   \label{invk:four} 
if $\kind=\mgc$, then there are
\begin{itemize}
\item  $\tdec=\TDec{\Gen\tname{\Ext{\seq\Y}{\seq{\UT_\Y}}}}{\_}{\_}$ such that ${\IsMType{}{\tdec}{\m}{\MkTE{\mgc}{\Ext{\seq\X}{\seq{\UT_\X}}}{\seq{\T''}}{\T''}{\eCall{\Gen{\tname}{\seq\Y}}{\m}{\seq\X}}}}$
\item $\seq{\T_\Y}$ such that $\SubT{\T_0}{\Gen{\tname}{\seq{\T_\Y}}}$ and $\SubT{\seq{\T_\Y}}{\Subst{\UT_\Y}{\seq{\T_\Y}}{\seq\Y}}$ and $\mtype(\T_0,\m)=\mtype(\Gen{\tname}{\seq{\T_\Y}},\m)$
\end{itemize}
hence 
 $\seq{\T'}=\Subst{\seq{\T''}}{\seq{\T_\Y}}{\seq\Y}$, $\T'=\Subst{\T''}{\seq{\T_\Y}}{\seq\Y}$, and $\eff'=\eCall{\Gen{\tname}{\seq{\T_\Y}}}{\m}{\seq\X}$, which implies $\Subst{\eff'}{\seq{\T_\X}}{\seq\X}=\eCall{\Gen{\tname}{\seq{\T_\Y}}}{\m}{\seq{\T_\X}}$ which, being already simplified, is $\eff$.
 \end{enumerate}
\end{lemma}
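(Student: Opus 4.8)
The plan is to argue by straightforward rule inversion. The expression $\MCall{\ve_0}\m{\seq{\T_\X}}{\seq\ve}$ is a syntactic form whose typing judgment can be concluded only by rule \refToRule{t-invk} of \cref{fig:typing-exp}. Inverting this single rule on the hypothesis $\WTExp{\MCall{\ve_0}\m{\seq{\T_\X}}{\seq\ve}}{\T}{\eff}$ immediately yields the premises typing $\ve_0$ and each argument in $\seq\ve$ (the first claim), the existence of a method type-and-effect $\IsMType{}{\T_0}{\m}{\MkTE{\kind}{\Ext{\seq\X}{\seq{\UT_\X}}}{\seq{\T'}}{\T'}{\eff'}}$ (the second claim), and all the conditions of the third claim: the bound constraints $\SubT{\seq{\T_\X}}{\Subst{\seq{\UT_\X}}{\seq{\T_\X}}{\seq\X}}$, the argument subtyping $\SubT{\seq\T}{\Subst{\seq{\T'}}{\seq{\T_\X}}{\seq\X}}$, the assigned return type $\T=\Subst{\T'}{\seq{\T_\X}}{\seq\X}$, and the simplification $\EffRed{}{\Subst{\eff'}{\seq{\T_\X}}{\seq\X}}{\eff}$.

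The first genuine obligation is to exclude $\kind=\abs$. Here I would exploit that the judgment is ground, so $\ve_0$ is a closed value; by the grammar of values it cannot be a variable, hence it is an object $\Obj{\seq\NT}{\seq\md}$. Inverting rule \refToRule{t-obj} on $\WTVal{\ve_0}{\T_0}$ gives $\T_0=\TObj{\seq\NT}{\sig}$ together with the signature $\sig'$ extracted from $\T_0$ and satisfying the side condition $\NoAbs{\sig'}$. Since the kind in the second claim is read off exactly from this extracted signature, which is what the abbreviation $\IsMType{}{\T_0}{\m}{\cdot}$ refers to, no method visible in $\T_0$ is abstract, so $\kind\in\{\defn,\mgc\}$.

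The main work, and the step I expect to be hardest, is the fourth claim, where $\kind=\mgc$. The crucial fact is that objects cannot declare magic methods: by the grammar and the $\NoMgc{\sig}$ side condition of \refToRule{t-obj}, the magic entry for $\m$ in the signature of $\T_0=\TObj{\seq\NT}{\sig}$ must be inherited from one of the parent nominal types in $\seq\NT$. I would therefore trace the signature-extraction rules of \cref{fig:extract-sig} (rules \refToRule{obj-sig}, \refToRule{ntypes-sig}, \refToRule{ntype-sig}, \refToRule{tdec-sig}) backwards along the inheritance chain to locate the declaration $\tdec=\TDec{\Gen\tname{\Ext{\seq\Y}{\seq{\UT_\Y}}}}{\_}{\_}$ in which $\m$ is declared with kind $\mgc$, and to extract the witnessing instantiation $\seq{\T_\Y}$ with $\SubT{\T_0}{\Gen{\tname}{\seq{\T_\Y}}}$ and the bound condition $\SubT{\seq{\T_\Y}}{\Subst{\seq{\UT_\Y}}{\seq{\T_\Y}}{\seq\Y}}$.

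The delicate point along this chain is that the method type must be transported without distortion beyond the substitution of type arguments. This is where I would invoke the constraints built into the signature sum operations of \cref{sec:aux-typing}: the symmetric sum $\sigPlus$ forbids inheriting a magic method more than once, while the right-preferential sum $\override{}{}$ permits a magic method to override, or be overridden by, only another magic method, with the same return type and at most a refined effect. Together these force $\mtype(\T_0,\m)=\mtype(\Gen{\tname}{\seq{\T_\Y}},\m)$, whence $\seq{\T'}=\Subst{\seq{\T''}}{\seq{\T_\Y}}{\seq\Y}$ and $\T'=\Subst{\T''}{\seq{\T_\Y}}{\seq\Y}$. For the effect I would use the canonical form of magic effects fixed in \cref{sect:effect-system}: the declaration in $\tdec$ carries $\eCall{\Gen{\tname}{\seq\Y}}{\m}{\seq\X}$, so after the extraction substitution $\Subst{}{\seq{\T_\Y}}{\seq\Y}$ we obtain $\eff'=\eCall{\Gen{\tname}{\seq{\T_\Y}}}{\m}{\seq\X}$, and the call substitution $\Subst{}{\seq{\T_\X}}{\seq\X}$ from the third claim gives $\eCall{\Gen{\tname}{\seq{\T_\Y}}}{\m}{\seq{\T_\X}}$; since a magic call-effect is returned unchanged by rule \refToRule{mgc} of \cref{fig:simplify}, it is already simplified and therefore equal to $\eff$. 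The remaining bound conditions of the fourth claim follow by re-reading the side conditions of \refToRule{t-invk} under this instantiation.
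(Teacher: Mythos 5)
Your proposal is correct and follows essentially the same route as the paper's own proof, which is just a compressed version of your argument: items (1)--(3) by inverting rule \refToRule{t-invk}, the exclusion of $\kind=\abs$ from $\ve_0$ being an object and the side conditions of rule \refToRule{t-obj}, and item (4) by unfolding the definition of $\mtype$ through the signature-extraction rules of \cref{fig:extract-sig}. Your elaboration of the extraction tracing (via $\NoMgc{\sig}$, the symmetric and right-preferential sums, and the canonical magic effect) fills in details the paper leaves implicit, but it is the same argument.
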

\begin{proof}
From rule \refToRule{t-invk} of \cref{fig:typing-exp} we get \Cref{invk:one,invk:two,invk:three}, apart $\kind\neq\abs$ which follows from the fact that $\ve_0$ is an object and this is required by rule \refToRule{t-obj}.From \cref{invk:two}, and the definition of $\mtype$, derived from the signature extraction in \cref{fig:extract-sig}, we get \cref{invk:four}.
\end{proof}

\begin{lemma}[Progress for method calls]\label{lem:meth-call-progress} 
If $\WTExp{\MCall{\ve_0}\m{\seq{\T_\X}}{\seq\ve}}{\T}{\eff}$, then either
$\e\purered\e'$ for some $\e'$ or 
$\e\red\me$ for some $\me$.
\end{lemma}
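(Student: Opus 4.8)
The plan is to reduce progress for a method call to an \emph{agreement} between the static method look-up, implicit in signature extraction during typing, and the dynamic look-up performed by $\mbody$ at runtime. Since $\WTExp{\MCall{\ve_0}\m{\seq{\T_\X}}{\seq\ve}}{\T}{\eff}$ is a ground judgment, $\ve_0$ is a closed value, hence an object $\Obj{\seq\NT}{\seq\md}$; let $\T_0$ be its type. First I would apply inversion for method calls, \cref{lem:inversion:invk}: item~(\ref{invk:two}) yields $\IsMType{}{\T_0}{\m}{\MkTE{\kind}{\Ext{\seq\X}{\seq{\UT_\X}}}{\seq{\T'}}{\T'}{\eff'}}$ with $\kind\neq\abs$, and items~(\ref{invk:one}) and~(\ref{invk:three}) record that the numbers of type and value arguments match those expected by $\m$. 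Thus $\kind\in\{\defn,\mgc\}$, and these two cases correspond exactly to the two reduction rules that can fire on a method call.

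The core ingredient is a look-up agreement statement, which I would prove separately: if $\IsMType{}{\T_0}{\m}{\MkT{\kind}{\_}}$ with $\kind\neq\abs$ and $\ve_0$ is a value of type $\T_0$, then $\mbody(\ve_0,\m)$ is defined, returning a defined body $\GenMBody{\seq\X}{\x}{\seq\x}{\e}$ when $\kind=\defn$ and the pair $\Pair{\mgc}{\tname}$ when $\kind=\mgc$. Granting this, I conclude by a case split: if $\kind=\defn$, rule \refToRule{invk} of \cref{fig:pure-red} applies and $\e\purered\e'$; if $\kind=\mgc$, rule \refToRule{mgc} of \cref{fig:monadic-red} applies and $\e\red\me$. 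The arity side-conditions obtained from inversion guarantee that the substitutions occurring in these rules are well-defined.

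The main obstacle is the agreement statement itself, since $\mbody$ (\cref{fig:auxfun}) and signature extraction (\cref{fig:extract-sig}) traverse the inheritance hierarchy by genuinely different mechanisms. I would prove it by induction following the clauses of $\mbody$, kept in lock-step with the signature-extraction derivation. For an object type $\TObj{\seq\NT}{\sig_0}$ the extracted signature is $\override{\sig_1}{\sig_0}$ with $\typeof{}{\seq\NT}{\sig_1}$: if $\m$ is declared among the object's own methods $\seq\md$ (necessarily as a defined method, since objects cannot declare magic ones, third side condition of \refToRule{t-obj}), then the first clause of $\mbody$ and the definition of right-preferential sum agree; otherwise $(\override{\sig_1}{\sig_0})(\m)=\sig_1(\m)$, $\mbody$ propagates to $\seq\NT$, and I apply the induction hypothesis. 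For a sequence of nominal types the delicate point is that $\mbody$ demands the method be found in a \emph{unique} parent; here I would exploit that the extracted signature $\bigSigPlus_i\sig_i$ is defined, i.e.\ the symmetric sums exist, which is precisely the condition forbidding the conflicting multiple inheritance (and the double inheritance of a magic method) that would otherwise leave $\mbody$ undefined. For a nominal type $\Gen\tname{\seq\T}$ both definitions unfold $\prog(\tname)$ and substitute $\seq\T$ for the parameters, so the two traversals stay synchronised and the kind recorded in the extracted signature coincides with the kind of the declaration eventually reached; the abstract case cannot survive to the receiver's top-level type because an object's exposed signature contains no abstract methods, which is reflected in the hypothesis $\kind\neq\abs$ carried by the induction.
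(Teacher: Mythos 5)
Your overall route is the same as the paper's: invert the typing of the call via \cref{lem:inversion:invk} to obtain $\IsMType{}{\T_0}{\m}{\MkT{\kind}{\_}}$ with $\kind\neq\abs$, split on $\kind\in\{\defn,\mgc\}$, and fire rule \refToRule{invk} of \cref{fig:pure-red} or rule \refToRule{mgc} of \cref{fig:monadic-red} accordingly. On one point you are in fact more careful than the paper: the agreement between static look-up (signature extraction, $\mtype$) and dynamic look-up ($\mbody$) is dispatched in the paper's proof with the phrase ``from the definition of $\mtype(\T_0,\m)$'', whereas your lock-step induction would actually establish it; your sketch of that induction (objects cannot declare magic methods, right-preferential sum for the object's own declarations, definedness of the symmetric sum for the uniqueness condition on parents) is plausible and consistent with the definitions.

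There is, however, a genuine gap in your $\mgc$ case. Rule \refToRule{mgc} reduces the call to $\Mmap{(\Ret{\ehole})}{\mrun\tname\m(\ve_0,\seq\ve)}$, and $\mrun\tname\m$ is only a \emph{partial} function $\pfun{\mrun\tname\m}{\Val\times\Val^\star}{\mfun\Val}$ --- the paper explicitly allows it to be undefined, e.g.\ when the arguments do not have the expected types. Hence knowing $\mbody(\ve_0,\m)=\Pair{\mgc}{\tname}$ is not enough to conclude that some $\me$ with $\e\red\me$ exists: you must also show that $\mrun\tname\m(\ve_0,\seq\ve)$ is defined, otherwise the call is stuck and progress fails exactly at this point. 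The paper closes this step by observing that the premises and side conditions of rule \refToRule{t-run} are satisfied --- well-typedness of the receiver and arguments together with the subtyping conditions supplied by \cref{invk:four} of \cref{lem:inversion:invk} --- and \refToRule{t-run}, being the assumed typing of the monadic constants, presupposes in particular that the monadic value $\mrun\tname\m(\ve_0,\seq\ve)$ exists. Your proposal never mentions $\mrun$ or \refToRule{t-run}; the arity and substitution well-definedness you do check is the relevant concern for the $\defn$ case, but it is not the issue in the $\mgc$ case. With this step added, your argument matches the paper's, modulo your (welcome) extra detail on the look-up agreement.
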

\begin{proof}
We assume $\WTExp{\MCall{\ve_0}\m{\seq\T}{\seq\ve}}{\T}{\eff}$ and have to show that either
$\e\purered\e'$ for some $\e'$ or 
$\e\red\me$ for some $\me$.

By \cref{lem:inversion:invk} we get $\IsMType{}{\T_0}{\m}{\MkTE{\kind}{\Ext{\seq\X}\seq{\UT_\X}}{\seq{\T'}}{\T'}{\eff'}}$ with $\kind\neq\abs$. There are two cases:
\begin{itemize}
\item 
$\kind=\defn$. In this case, from the definition of   $\mtype(\T_0,\m)$,  we get $\mbody(\ve_0,\m) =\GenMBody{\seq\X}{\x}{\seq\x}{\e}$ with $\seq\x$ and $\seq\ve$ of the same length, hence rule \refToRule{invk} of \cref{fig:pure-red} is applicable, and \mbox{we get $\e\purered\e'$ for some $\e'$. }

\item $\kind=\mgc$. In this case, again from \cref{lem:inversion:invk}, we have 
\begin{itemize}
\item $\WTVal{\ve_0}{\T_0}$ and $\WTVal{\seq\ve}{\seq\T}$
\item
there is a type declaration $\TDec{\Gen\tname{\Ext{\seq\Y}{\seq{\UT_\Y}}}}{\_}{\_}$ such that 
\begin{quoting}
${\IsMType{}{\TDec{\Gen\tname{\Ext{\seq\Y}{\seq{\UT_\Y}}}}{\_}{\_}}{\m}{\MkTE{\mgc}{\Ext{\seq\X}{\seq{\UT_\X}}}{\seq{\T''}}{\T''}{\eCall{\Gen{\tname}{\seq\Y}}{\m}{\seq\X}}}}$
\end{quoting}
\item there is $\seq{\T_\Y}$ such that $\SubT{\T_0}{\Gen{\tname}{\seq{\T_\Y}}}$ and $\SubT{\seq{\T_\Y}}{\Subst{\UT_\Y}{\seq{\T_\Y}}{\seq\Y}}$
\item $\SubT{\seq\T}{\Subst{\seq{\UT_\X}}{\seq\T}{\seq\X}}$ and 
     $\SubT{\seq\T}{\Subst{\Subst{\seq{\T''}}{\seq{\T_\Y}}{\seq\Y}}{\seq\T}{\seq\X}}$ and $\T=\Subst{\Subst{\T''}{\seq{\T_\Y}}{\seq\Y}}{\seq\T}{\seq\X}$ and $\eff=\eCall{\Gen{\tname}{\seq{\T_\Y}}}{\m}{\seq\T}$. 
 \end{itemize}

 Hence, from the definition of  $\mtype(\T_0,\m)$, we get $\mbody(\ve_0,\m)=\Pair{\mgc}{\tname}$. Since the premises of rule \refToRule{t-run} are verified, $\mrun\tname\m(\ve_0,\seq\ve)$ is defined, so \refToRule{mgc} of \cref{fig:monadic-red} is applicable and $\e\red\me$ for some $\me$. 
\end{itemize}
\end{proof}

\begin{lemma}[Progress for try expressions]\label{lem:mnd-pure-red-progress}If $\WTExp{\TryShort{\e}{\handler}}{\T}{\eff}$, then $\e\purered\e'$ \mbox{for some $\e'$.}
\end{lemma}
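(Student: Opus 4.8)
The statement concerns the progress of the try expression $\TryShort{\e}{\handler}$ that appears as the subject of the hypothesis, and (following the same convention as in \cref{lem:meth-call-progress}) its conclusion asserts that this try expression itself takes a pure step. The plan is therefore to exhibit, for every well-typed $\TryShort{\e}{\handler}$, some $\e'$ with $\TryShort{\e}{\handler}\purered\e'$, arguing by structural induction on the enclosed computation $\e$. First I would apply \cref{lem:inversion} to the hypothesis to learn that $\e$ is itself well-typed; the crucial structural fact I would then exploit is that the try rules of \cref{fig:pure-red} are organised by the fine-grain shape of $\e$, and these shapes — a method call, a $\Ret\ve$, a $\Do\y{\e_1}{\e_2}$, or a nested $\TryShort{\e_0}{\handler_0}$ — are exhaustive, so a matching rule is available in each case.

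Two of the four cases are immediate and use no typing information: if $\e=\Ret\ve$ the expression steps by \refToRule{try-ret}, and if $\e=\Do\y{\e_1}{\e_2}$ it steps by \refToRule{try-do}. When $\e=\TryShort{\e_0}{\handler_0}$ is a nested try, I would first use \cref{lem:inversion} once more to see that the inner try is well-typed, then invoke the induction hypothesis to obtain a pure step $\TryShort{\e_0}{\handler_0}\purered\e''$, and finally lift this step to the enclosing try with the contextual rule \refToRule{try-ctx}.

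The substantive case is $\e=\MCall{\ve_0}\m{\seq\T}{\seq\ve}$. Here I would appeal to \cref{lem:meth-call-progress}: since $\e$ is well-typed, either $\e\purered\e''$ or $\e\red\me$. In the first situation — a defined method — I again close the case with \refToRule{try-ctx}. In the second situation $\e$ is a normal form for $\purered$, and by inversion (\cref{lem:inversion:invk}, the $\mgc$ case) the method is magic, so $\mbody(\ve_0,\m)=\Pair{\mgc}{\tname}$; the try expression nevertheless progresses by distributing the handler over the call, the precise rule being selected by a trichotomy on $\cmatch(\MCall{\ve_0}\m{\seq\T}{\seq\ve},\seq\cc)$ — a matching $\Continue$-clause gives \refToRule{catch-continue}, a matching $\Stop$-clause gives \refToRule{catch-stop}, and no matching clause gives \refToRule{fwd}.

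The hard part will be exactly this magic-call subcase, where the enclosed $\e$ cannot reduce on its own: I would have to check that, once $\mbody(\ve_0,\m)=\Pair{\mgc}{\tname}$ is known, the side conditions of \refToRule{catch-continue}, \refToRule{catch-stop}, and \refToRule{fwd} are jointly exhaustive and mutually exclusive, i.e. that $\cmatch$ on a magic call and the clause sequence $\seq\cc$ always falls into exactly one of the three outcomes. Once this is settled, and given that the induction on the size of $\e$ is well-founded for the nested-try case, every shape of the enclosed computation supplies a pure step for $\TryShort{\e}{\handler}$, establishing the lemma.
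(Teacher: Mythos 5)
Your proposal is correct and follows essentially the same route as the paper's proof: a case analysis on the four fine-grain shapes of the enclosed computation, with \refToRule{try-ret} and \refToRule{try-do} immediate, the method-call case settled by inversion plus \cref{lem:meth-call-progress} together with the trichotomy on $\cmatch$ (rules \refToRule{catch-continue}, \refToRule{catch-stop}, \refToRule{fwd}), and the nested-try case closed by the induction hypothesis and \refToRule{try-ctx}. Your reading of the conclusion as a $\purered$-step of the whole try expression is the intended one (it is exactly how the lemma is used in the proof of \cref{theo:mnd-progress}), and the exhaustiveness and mutual exclusivity of the $\cmatch$ trichotomy that you flag as the point to check is immediate from its definition in \cref{fig:auxfun}, which the paper likewise takes for granted.
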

\begin{proof}
Let $\WTExp{\TryShort{\e}{\handler}}{\T}{\eff}$. By cases on $\e$ and induction on try expressions.
\begin{description}
\item [$\e=\Ret\ve$]  Rule \refToRule{try-ret} is applicable.
\item[$\e=\Do\y{\e_1}{\e_2}$]   Rule \refToRule{try-do}  is applicable.
\item[$\e=\MCall{\ve_0}\m{\seq{\T}}{\seq\ve}$] From \refItem{lem:inversion}{try} we have  $\WTExp{\MCall{\ve_0}\m{\seq\T}{\seq\ve}}{\T'}{\eff'}$ 
for some $\T'$ and $\eff'$. From \cref{lem:meth-call-progress} we have that either $\e\purered\e_1$ for some $\e_1$ or 
$\e\red\me$ for some $\me$.  In the first case rule \refToRule{try-ctx} is applicable and $\TryShort{\e}{\handler}\purered\TryShort{\e_1}{\handler}$.
In the second case $ \mbody(\ve,\m)=\Pair{\mgc}{\_}$. Therefore either one of the rules \refToRule{catch-continue} or \refToRule{catch-stop}
is applicable or if there os no handler rule \refToRule{fwd} can be applied. 
\item[$\e=\TryShort{\e_1}{\handler_1}$] From \refItem{lem:inversion}{try} we have  $\WTExp{\TryShort{\e_1}{\handler_1}}{\T'}{\eff'}$ 
for some $\T'$ and $\eff'$. By induction hypothesis $\e\purered\e'$, so rule  rule \refToRule{try-ctx} is applicable. 
\end{description}
\end{proof}

\begin{proofOf}{theo:mnd-progress}
Assuming $\WTExp{\e}{\T}{\eff}$, we have to prove that  either $\e = \Ret{\ve}$ for some $\ve\in\Val$, or 
$\e\red\me$ for some $\me\in\mfun\Exp$.
By induction on the typing rules of \cref{fig:typing-exp}.
\begin{description}
\item [\refToRule{t-invk}] In this case $\e$ is $\MCall{\ve_0}\m{\seq{\T_\X}}{\seq\ve}$.  By \cref{lem:meth-call-progress} we get that  either 
$\e\purered\e'$ for some $\e'$ or $\e\red\me$ for some $\me$. In the first case by rule \refToRule{pure} of \cref{fig:monadic-red} we get
$\e\red \mun(\e') $, and in the second case we are done.
\item [\refToRule{t-do}] In this case $\e$ is $\Do\x{\e_1}{\e_2}$. From \refItem{lem:inversion}{do} we have $\WTExp{{\e_1}}{\T_1}{\eff_1}$ and $\IsWFExp{\emptyset}{\TVar{\T_1}{\x}}{{\e_2}}{\T'}{\eff_2}$ and $\eff'=\EComp{\eff_1}{\eff_2}$. By induction hypothesis we get that either $\e_1 = \Ret\ve$ for some $\ve\in\Val$, or 
$\e_1\red\me$ for some $\me$. In the first case we can apply rule \refToRule{ret}  and in the second rule \refToRule{do} of \cref{fig:monadic-red}. 
\item [\refToRule{t-try}] In this case $\e$ is $\TryShort{\e_1}{\handler}$. By \cref{lem:mnd-pure-red-progress} we get that 
$\e\purered\e'$, so by rule \refToRule{pure} of \cref{fig:monadic-red} we get
$\e\red \mun(\e')$.
\end{description}
\end{proofOf}

 \begin{lemma}[Inversion for handlers]\label[lemma]{lem:h-inversion} 
If $\IsWFHandler{\emptyset}{\emptyset}{\T'}{\Handler{\seq\cc}{\x_0}{\e_0}}{\T}{\hfilter}$ where
$
\cc_i=\CC{\NT_i}{\m_i}{\seq\X^i}{\x_i}{\seq\x^i}{\e_i}{\mode_i}
$
for $i\in 1..n$, then $\IsMType{}{\NT_i}{\m_i}{\MkTE{\mgc}{\Ext{\seq\X^i}{\seq\UT^i}}{\seq\T^i}{\T_i}{\_}}$ and
\begin{enumerate}
\item \label{lem:h-inversion:one}  $\IsWFExp{\emptyset}{\TVar{\T'}{\x_0}}{\e_0}{\T_0}{\eff_0}$ with  $\SubType{\TEnv}{\T_0}{\T}$
\item \label{lem:h-inversion:two}  $\hfilter=\HFilter{\cfilter_1\dots \cfilter_n}{\eff_0}$ with 
$\cfilter_i=\CFilter{\T_i}{\m_i}{\seq{\X}^i}{\seq\UT^i}{\eff_i}{}$ and
\item \label{lem:h-inversion:three} $\IsWFClause{\emptyset}{\emptyset}{\T}{\cc_i}{\cfilter_i}$ and
\begin{enumerate}
\item  \label{lem:h-inversion:four}$ \IsWFExp{\emptyset}{\TVar{\NT_i}{\x},\TVars{\T^i}{\x^i}} {\e_i} {\T'_i} {\eff}$ with $\SubType{\TEnv}{\T'_i}{\T_i}$ if $\mode_i=\Continue$
\item \label{lem:h-inversion:five}$ \IsWFExp{\emptyset}{\TVar{\NT_i}{\x},\TVars{\T^i}{\x^i}} {\e_i} {\T'_i} {\eff}$ with $\SubType{\TEnv}{\T'_i}{\T_0}$  if $\mode_i=\Stop$
\end{enumerate}
\end{enumerate}
\end{lemma}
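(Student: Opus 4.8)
The plan is to prove this by a direct, two-level inversion on the typing derivation, exploiting the fact that every judgment form involved is syntax-directed. First I would observe that a handler judgment $\IsWFHandler{\emptyset}{\emptyset}{\T'}{\Handler{\seq\cc}{\x_0}{\e_0}}{\T}{\hfilter}$ can only be the conclusion of rule \refToRule{t-handler}. Inverting that rule immediately yields a type $\T_0$ and effect $\eff_0$ with $\IsWFExp{\emptyset}{\TVar{\T'}{\x_0}}{\e_0}{\T_0}{\eff_0}$ and $\SubType{\emptyset}{\T_0}{\T}$ (this is \refItem{lem:h-inversion}{one}), the shape $\hfilter=\HFilter{\cfilter_1\ldots\cfilter_n}{\eff_0}$ of the filter (the first half of \refItem{lem:h-inversion}{two}), and the clause judgments $\IsWFClause{\emptyset}{\emptyset}{\T}{\cc_i}{\cfilter_i}$ for each $i$ (the first half of \refItem{lem:h-inversion}{three}). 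Here I must read off the third, incoming-type slot of each clause judgment as the handler's \emph{result} type $\T$, which is how rule \refToRule{t-handler} threads the types through.

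Next I would invert each clause judgment $\IsWFClause{\emptyset}{\emptyset}{\T}{\cc_i}{\cfilter_i}$. The two clause rules \refToRule{t-continue} and \refToRule{t-stop} are syntax-directed on the mode $\mode_i$ carried by $\cc_i$, so the value of $\mode_i$ selects exactly one applicable rule. In either case the $\IsMType$ side condition supplies the method signature $\IsMType{}{\NT_i}{\m_i}{\MkTE{\mgc}{\Ext{\seq\X^i}{\seq\UT^i}}{\seq\T^i}{\T_i}{\_}}$ (the header equation of the statement, with $\T_i$ the return type and $\seq\T^i$ the parameter types), the premise gives the typing of the clause body $\e_i$, with some type $\T'_i$ and effect $\eff_i$, in the empty environment extended by the receiver variable at type $\NT_i$ and the parameters $\seq\x^i$ at the types $\seq\T^i$, and the conclusion fixes the clause filter $\cfilter_i=\CFilter{\T_i}{\m_i}{\seq\X^i}{\seq\UT^i}{\eff_i}{}$, completing \refItem{lem:h-inversion}{two}. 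The residual side condition then distinguishes the two cases: rule \refToRule{t-continue} gives $\SubType{\emptyset}{\T'_i}{\T_i}$ (\refItem{lem:h-inversion}{four}), while rule \refToRule{t-stop} bounds $\T'_i$ against the handler's result type (\refItem{lem:h-inversion}{five}).

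There is no genuine induction and no analytic difficulty here; the proof is essentially bookkeeping. The one point that needs attention — and the natural place for a slip to hide — is the mode tracking: both \refToRule{t-continue} and \refToRule{t-stop} emit a clause filter whose recorded mode is $\Continue$, so the filter alone does not reveal whether the clause was a continue- or stop-clause. Consequently the case split must be driven by the \emph{syntactic} mode $\mode_i$ of $\cc_i$ rather than by anything read off $\cfilter_i$, and it is exactly this split that produces the two different subtyping conclusions \refItem{lem:h-inversion}{four} and \refItem{lem:h-inversion}{five}. I would also carefully verify the target of the stop-clause subtyping, since \refToRule{t-stop} bounds the clause-expression type by the incoming (third-slot) type of the clause judgment, which \refToRule{t-handler} sets to the handler's result type; confirming that this coincides with the type recorded in \refItem{lem:h-inversion}{five} is the only spot where I would double-check the exact metavariable.
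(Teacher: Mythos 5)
Your proof is correct, and it is in fact the proof: the paper states this inversion lemma without any argument at all, treating it as immediate from the syntax-directedness of \refToRule{t-handler}, \refToRule{t-continue}, and \refToRule{t-stop}, which is exactly your two-level inversion. Your reading of \refToRule{t-handler} (final-expression premise giving item 1, the filter shape giving item 2, the clause premises giving item 3) is accurate, and so is your observation that the case split in item 3 must be driven by the \emph{syntactic} mode of $\cc_i$, since both clause rules emit a filter tagged $\Continue$.

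The metavariable you flagged in \refItem{lem:h-inversion}{five} is a genuine discrepancy, and your derivation settles it on the correct side. Rule \refToRule{t-stop} bounds the clause-expression type by the third-slot type of the clause judgment, and \refToRule{t-handler} fills that slot with the handler's \emph{result} type, so inversion yields $\SubType{\emptyset}{\T'_i}{\T}$, not $\SubType{\emptyset}{\T'_i}{\T_0}$ as the statement literally says. Since the rules only provide $\SubType{\emptyset}{\T_0}{\T}$ and not the converse, the bound by $\T_0$ is strictly stronger than anything obtainable by inversion; it has to be read as a slip in the statement, with $\T$ intended. The bound by $\T$ is also the one needed downstream: when a $\Stop$-clause fires (rule \refToRule{catch-stop}), the clause expression replaces the whole try-block, whose type is $\T$, so subject reduction requires exactly $\SubType{\emptyset}{\T'_i}{\T}$.
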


 \begin{lemma}[Properties of Effect Simplification]\label{lem:eff-simplification}
 If $\EffRed{\TEnv}{\eff}{\eff'}$ and  $\SubType{\TEnv}{\eff_1}{\eff}$ and $\EffRed{\TEnv}{\eff_1}{{\eff'_1}}$ , then $\SubType{\TEnv}{{\eff'_1}}{\eff'}$.
 \end{lemma}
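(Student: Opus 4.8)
The plan is to prove that simplification is monotone with respect to subeffecting, by induction on the derivation of $\SubType{\TEnv}{\eff_1}{\eff}$, with a case analysis on the last rule applied from \cref{fig:sub-eff}. A preliminary observation that I would record first is that the simplification judgment is syntax-directed on the shape of its input: the head of a call-effect is either a type variable in $\dom(\TEnv)$ (rule \refToRule{var}), or an object type whose method $\m$ is magic (rule \refToRule{mgc}) or non-magic (rule \refToRule{simplify-non-mgc}), and these cases are mutually exclusive; likewise $\eZero$, $\eTop$, and unions each admit exactly one rule. Hence $\EffRed{\TEnv}{-}{-}$ is a partial function, which lets me \emph{invert} the two given simplification derivations: when $\eff_1$ (resp.\ $\eff$) is a union, its simplification is forced to decompose through \refToRule{simplify-union}, and when it is a variable or magic call-effect, its simplification is the identity.

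Before the main induction I would establish an idempotence sub-lemma: if $\EffRed{\TEnv}{\eff_0}{\eff'_0}$, then $\EffRed{\TEnv}{\eff'_0}{\eff'_0}$, i.e.\ the image of simplification consists of fixpoints. This follows by a routine induction on the simplification derivation, since every output is built from $\eZero$, $\eTop$, unions, and magic or variable call-effects, each of which simplifies to itself. With these two facts in place, most cases of the main induction are mechanical. Rule \refToRule{sub-empty} forces $\eff_1=\eZero$, hence $\eff'_1=\eZero$, and we reconclude by \refToRule{sub-empty}; dually \refToRule{sub-top} forces $\eff=\eTop$, hence $\eff'=\eTop$, and \refToRule{sub-top} applies. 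For \refToRule{sub-union-left} I would invert the simplification of $\eff_1=\EComp{\eff_{11}}{\eff_{12}}$, apply the induction hypothesis to both premises against the same $\EffRed{\TEnv}{\eff}{\eff'}$, and recombine with \refToRule{sub-union-left}; the symmetric case \refToRule{sub-union-right} inverts the simplification of $\eff$ instead. Finally \refToRule{sub-mgc-call} is the identity on both sides, so the conclusion to be proved coincides with the hypothesis.

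The delicate case, which I expect to be the main obstacle, is \refToRule{sub-var-call}. Here $\eff_1=\eCall{\X}{\m}{\seq\T}$ simplifies to itself, while the premise of the rule already carries a \emph{simplified} effect $\eff'_0$ (the simplification of the instantiated bound of $\X$) together with $\SubType{\TEnv}{\eff'_0}{\eff}$. I would apply the induction hypothesis to this premise, supplying $\EffRed{\TEnv}{\eff'_0}{\eff'_0}$ from the idempotence sub-lemma and the given $\EffRed{\TEnv}{\eff}{\eff'}$, to obtain $\SubType{\TEnv}{\eff'_0}{\eff'}$; then I rebuild $\SubType{\TEnv}{\eCall{\X}{\m}{\seq\T}}{\eff'}$ by \refToRule{sub-var-call}, reusing the same side conditions on $\TEnv(\X)$ and the same simplification of the instantiated bound. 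It is worth noting that non-magic, non-variable call-effects never head a subeffecting rule, so on the left they can only appear under \refToRule{sub-top} or \refToRule{sub-union-right}, where the precise form of $\eff'_1$ is irrelevant and the given hypothesis $\EffRed{\TEnv}{\eff_1}{\eff'_1}$ is used opaquely. The induction is well-founded since in every case the hypothesis is invoked on a strict subderivation of $\SubType{\TEnv}{\eff_1}{\eff}$, while the simplification derivations are only inverted, never inducted upon.
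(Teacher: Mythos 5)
The paper never proves this lemma: in the appendix it is stated bare, with no proof environment, and is then invoked as a black box (in the \refToRule{t-invk} case of context subsumption, in the substitution lemma for types, and in subject reduction). So there is no proof of record to compare yours against; your proposal fills a genuine gap, and it is correct. The induction on the derivation of $\SubType{\TEnv}{\eff_1}{\eff}$ is well-founded, the six-way case split is exhaustive (the remaining rules of \cref{fig:sub-eff} concern types, method types and signatures, not effects), and your two preliminary observations are exactly the right ones. Inversion of simplification is legitimate because the rules are disjoint on the shape of the input: signature extraction is undefined on type variables, so a variable-headed call-effect can only be simplified by \refToRule{var}, while \refToRule{mgc} and \refToRule{simplify-non-mgc} exclude each other through the method kind. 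The idempotence sub-lemma ($\EffRed{\TEnv}{\eff_0}{\eff'_0}$ implies $\EffRed{\TEnv}{\eff'_0}{\eff'_0}$) does hold by the routine induction you indicate, and it is precisely what you need to apply the induction hypothesis to the premise $\SubType{\TEnv}{\eff'_0}{\eff}$ of \refToRule{sub-var-call} and then rebuild the judgment with target $\eff'$, reusing the unchanged side conditions; this is indeed the only non-mechanical case. Finally, your remark that non-magic, non-variable call-effects can head the left-hand side only under \refToRule{sub-top} or \refToRule{sub-union-right} --- the only situation where the left simplification is not the identity, and where $\eff'_1$ is used opaquely --- closes the one loophole a careful reader might worry about.
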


 \begin{lemma}[Properties of $\FilterF{}$]\label{lem:filter-prop}\
 \begin{enumerate}
\item \label{lem:filter-prop:uno} Let  $\hfilter=\HFilter{\cfilter_1\ldots\cfilter_n}{\eff_{\hfilter}}$ and $\hfilter'=\HFilter{\cfilter_1\ldots\cfilter_n}{\eff'_{\hfilter}}$ and 
 $\SubType{\TEnv}{\eff_{\hfilter}}{\eff'_{\hfilter}}$ and  $\SubType{\TEnv}{{\eff}}{\eff'}$. Then $\SubType{\TEnv}{\FilterFun{\eff}{\hfilter}}{\FilterFun{\eff'}{\hfilter'}}$.
 \item \label{lem:filter-prop:due} $\FilterFun{\EComp{\eff_1}{\eff_2}}{\HFilter{\seq\cfilter}{\eff}}=\FilterFun{{\eff_1}}{\HFilter{\seq\cfilter}{\eff'}}$ where
 $\eff'=\FilterFun{{\eff_2}}{\HFilter{\seq\cfilter}{\eff}}$.
 \end{enumerate}
 \end{lemma}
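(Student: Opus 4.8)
The plan is to treat the two parts separately, since part 2 is a direct calculation whereas part 1 needs an induction. Throughout I use the defining equation $\FilterFun{\eff}{\HFilter{\seq\cfilter}{\eff_0}} = \EComp{\FilterFun{\eff}{\seq\cfilter}}{\eff_0}$ together with the fact, noted earlier, that $\EComp{}{}$ is associative, commutative and idempotent, with $\subt$ monotone in each argument.

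For part 2 I would simply unfold both sides as elements of $\MEff$. The left-hand side is $\EComp{\FilterFun{\EComp{\eff_1}{\eff_2}}{\seq\cfilter}}{\eff} = \EComp{(\EComp{\FilterFun{\eff_1}{\seq\cfilter}}{\FilterFun{\eff_2}{\seq\cfilter}})}{\eff}$, using the union clause in the definition of $\FilterF{\seq\cfilter}$. The right-hand side, with $\eff'=\EComp{\FilterFun{\eff_2}{\seq\cfilter}}{\eff}$, is $\EComp{\FilterFun{\eff_1}{\seq\cfilter}}{\eff'} = \EComp{\FilterFun{\eff_1}{\seq\cfilter}}{(\EComp{\FilterFun{\eff_2}{\seq\cfilter}}{\eff})}$. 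These two expressions coincide by associativity and commutativity, so part 2 holds as an equality of equivalence classes.

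For part 1, I would first reduce to the core claim that the clause-filter map is monotone, that is $\SubType{\TEnv}{\eff}{\eff'}$ implies $\SubType{\TEnv}{\FilterFun{\eff}{\seq\cfilter}}{\FilterFun{\eff'}{\seq\cfilter}}$. Granting this, since $\FilterFun{\eff}{\hfilter}=\EComp{\FilterFun{\eff}{\seq\cfilter}}{\eff_{\hfilter}}$ and $\FilterFun{\eff'}{\hfilter'}=\EComp{\FilterFun{\eff'}{\seq\cfilter}}{\eff'_{\hfilter}}$, the statement follows by combining it with the hypothesis $\SubType{\TEnv}{\eff_{\hfilter}}{\eff'_{\hfilter}}$ and monotonicity of $\EComp{}{}$. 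The core claim I would prove by induction on the derivation of $\SubType{\TEnv}{\eff}{\eff'}$. The cases \refToRule{sub-empty}, \refToRule{sub-top}, \refToRule{sub-union-left} and \refToRule{sub-union-right} are routine, because $\FilterF{\seq\cfilter}$ fixes $\eZero$ and $\eTop$ and commutes with $\EComp{}{}$, so each rule is reproduced on the filtered effects from the induction hypotheses together with the same subeffecting rule.

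The delicate cases are the two for atomic call-effects, and I expect \refToRule{sub-mgc-call} to be the main obstacle. There $\SubType{\TEnv}{\eCall{\T}{\m}{\seq\T}}{\eCall{\T'}{\m}{\seq\UT}}$ comes from $\SubType{\TEnv}{\T}{\T'}$ and $\SubType{\TEnv}{\seq\T}{\seq\UT}$, and one must analyse how the first matching clause filter is selected on each side. Since matching tests $\SubType{\TEnv}{\T}{\NT}$ and $\T\subt\T'$, every clause matching $\eCall{\T'}{\m}{\seq\UT}$ also matches $\eCall{\T}{\m}{\seq\T}$, but the smaller effect may be caught by an \emph{earlier} clause, so the two filtered results are not a priori instances of the same clause effect. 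The hard part will be to show that the selected clause effects remain related by $\subt$; I would try to exploit well-formedness of the handler and the constraints on magic-method declaring types, and, where effects are instantiated, the monotonicity of substitution into simplified effects. The case \refToRule{sub-var-call} is easier: a variable call-effect $\eCall{\X}{\m}{\seq\T}$ matches no nominal clause, hence is filtered to itself, so I would close it by re-applying \refToRule{sub-var-call} to the target, discharging its premise with the induction hypothesis and \cref{lem:eff-simplification} on the simplified bound effect.
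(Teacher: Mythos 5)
Your part~2 is correct and is exactly the paper's proof: unfold $\FilterFun{\eff}{\HFilter{\seq\cfilter}{\eff_0}}=\EComp{\FilterFun{\eff}{\seq\cfilter}}{\eff_0}$, use the union clause of $\FilterF{\seq\cfilter}$, and rearrange by associativity/commutativity of $\EComp{}{}$. For part~1, your reduction --- monotonicity of $\EComp{}{}$ plus the core claim that $\SubType{\TEnv}{\eff}{\eff'}$ implies $\SubType{\TEnv}{\FilterFun{\eff}{\seq\cfilter}}{\FilterFun{\eff'}{\seq\cfilter}}$ --- is also precisely the route the paper takes, except that the paper compresses all of part~1 into the single sentence that it ``derives from monotonicity of $\vee$'', i.e., it silently assumes the very core claim you isolate and never argues the case analysis at all.

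The genuine gap is that you leave the \refToRule{sub-mgc-call} case of that core claim open, and as stated it cannot be closed by the tools you list. Concretely: a clause filter matches a call-effect by testing only $\SubType{\TEnv}{\T}{\NT}$, so if the clause list puts a more specific clause first --- say a clause for $\NT_1.\m$ with clause effect $\eff_{c_1}$ preceding one for $\NT_2.\m$ with clause effect $\eff_{c_2}$, where $\SubType{\TEnv}{\NT_1}{\NT_2}$ and $\m$ is magic in both --- then $\SubType{\TEnv}{\eCall{\NT_1}{\m}{\seq\T}}{\eCall{\NT_2}{\m}{\seq\T}}$ holds by \refToRule{sub-mgc-call}, yet filtering sends the subeffect to $\Subst{\eff_{c_1}}{\seq\T}{\seq\X}$ and the supereffect to $\Subst{\eff_{c_2}}{\seq\T}{\seq\X}$, and nothing in \cref{fig:filters} or in rules \refToRule{t-continue}/\refToRule{t-stop} imposes any $\subt$ relation between the effects of \emph{distinct} clauses; handler well-formedness will not supply one. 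So what you call ``the hard part'' is not merely hard: closing it needs either an additional hypothesis relating clause effects along the specificity order, or a weakening of the statement, neither of which your sketch (or the paper) provides. Your \refToRule{sub-var-call} case also does not quite close as written: re-applying \refToRule{sub-var-call} to the target requires $\SubType{\TEnv}{\eff'}{\FilterFun{\eff''}{\seq\cfilter}}$, and combining the induction hypothesis $\SubType{\TEnv}{\FilterFun{\eff'}{\seq\cfilter}}{\FilterFun{\eff''}{\seq\cfilter}}$ with it would need $\SubType{\TEnv}{\eff'}{\FilterFun{\eff'}{\seq\cfilter}}$, which fails in general since filtering is designed to shrink effects. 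In short: part~2 matches the paper; part~1 is incomplete in your proposal --- though your analysis is more informative than the paper's one-line dismissal, and the obstruction you flag sits exactly where the paper is silent.
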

\begin{proof}
We prove the second point, since the first one derives from monotonicity of $\vee$.\\
From definition  of $\FilterF{\hfilter}$ in \cref{fig:filters} we get $\eff'=\EComp{\FilterFun{{\eff_2}}{\seq\cfilter}} {\eff}$ and \\
\centerline{
$
\FilterFun{{\eff_1}}{\HFilter{\seq\cfilter}{\eff'}}=\EComp{\FilterFun{\eff_1}{\seq\cfilter}}{\EComp{\FilterFun{{\eff_2}}{\seq\cfilter}} {\eff}}
= \EComp{\FilterFun{\EComp{\eff_1}{\eff_2}}{\seq\cfilter}} {\eff}
= \FilterFun{\EComp{\eff_1}{\eff_2}}{\HFilter{\seq\cfilter}{\eff}}
$}
\end{proof}

We prove the subsumption lemma considering the different kinds of judgements occurring in a type derivation which are the judgments for expressions, values, handlers and catch clauses. 

Define $\SubType{\TEnv}{{\cfilter}}{\cfilter'}$ iff $\cfilter={\CFilter{\T}{\m}{\seq{\X}}{\seq\UT}{\eff}{\_}}$ and $\cfilter'={\CFilter{\T}{\m}{\seq{\X}}{\seq\UT}{\eff'}{\_}}$ and  $\SubType{\TEnv}{{\eff}}{\eff'}$.
\begin{lemma}[Context Subsumption]\label{lem:subsumption}
Let $\SubType{\TEnv}{\seq{\UT}}{\seq\T}$.
\begin{itemize}
\item If  $\IsWFExp{\TEnv}{\Gamma,\TVar{\seq\T}{\seq\x}}{\e}{\T}{\eff}$ then $\IsWFExp{\TEnv}{\Gamma,\TVar{\seq{\UT}}{\seq\x}}{\e}{\T'}{\eff'}$ and $\SubType{\TEnv}{\TEff{\T'}{\eff'}}{\TEff\T\eff}$.
\item If  $\IsWFVal{\TEnv}{\Gamma,\TVar{\seq\T}{\seq\x}}{\ve}{\T}$, then $\IsWFVal{\TEnv}{\Gamma,\TVar{\seq\UT}{\seq\x}}{\ve}{\T'}$ and $\SubType{\TEnv}{{\T'}}{\T}$.
\item If  $\IsWFHandler{\TEnv}{\Gamma,\TVar{\seq\T}{\seq\x}}{\T'}{\handler}{\T}{\HFilter{\cfilter_1\ldots\cfilter_n}{\eff}}$, then   $\IsWFHandler{\TEnv}{\Gamma,\TVar{\seq\UT}{\seq\x}}{\T'}{\handler}{\T}{\HFilter{\cfilter'_1\ldots\cfilter'_n}{\eff'}}$ with $\SubType{\TEnv}{{\eff'}}{\eff}$ and $\SubType{\TEnv}{{\cfilter_i}}{\cfilter'_i}$ for all $i\in 1..n$.
\item If $\IsWFClause{\TEnv}{\Gamma,\TVar{\seq\T}{\seq\x}}{\T}{\cc}{\cfilter}$, then $\IsWFClause{\TEnv}{\Gamma,\TVar{\seq\UT}{\seq\x}}{\T}{\cc}{\cfilter'}$ with $\SubType{\TEnv}{{\cfilter}}{\cfilter'}$.
\end{itemize}
\end{lemma}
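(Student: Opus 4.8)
The plan is to prove the four statements simultaneously, by induction on the typing derivation, adding a fifth companion statement for method declarations, namely that $\IsWFMethod{\TEnv}{\Gamma}{\T_\x}{\seq\md}$ is preserved under narrowing with the declared method type-and-effects left unchanged. First I would strengthen the induction hypothesis so that it quantifies over arbitrary pointwise refinements of the whole environment, i.e.\ over contexts whose every variable is given a subtype of its original type; this is what lets me descend under the binders introduced by \refToRule{t-do}, \refToRule{t-handler}, \refToRule{t-continue} and \refToRule{t-stop}, extending the refinement with the freshly bound variable mapped to the subtype already produced for it. Throughout I use that $\subt$ is a preorder (reflexivity and transitivity) and that subtyping and subeffecting are stable under substitution of type variables by types respecting their bounds; the latter substitution property is standard and I would take it from the auxiliary development \cite{DagninoGZ25bis}.

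The value and the structural expression cases are routine. For \refToRule{t-var} I return the refined type when the variable is among those narrowed, concluding by hypothesis, and the original type otherwise, concluding by reflexivity. For \refToRule{t-obj} the object type is syntactically determined and the side conditions on signature extraction do not mention $\Gamma$, so the only thing to redo is the premise $\IsWFMethod{\TEnv}{\Gamma}{\T}{\seq\md}$: unfolding \refToRule{t-meths} and \refToRule{t-meth}, I apply the expression hypothesis to each body (whose own binders carry fixed types, only the outer environment being refined) and conclude by transitivity that its inferred type-and-effect still lies below the declared one, leaving the object type unchanged. Rule \refToRule{t-ret} follows from the value case with effect $\eZero$, and \refToRule{t-do} from two applications of the expression hypothesis---the second under the extended refinement---combined with monotonicity of effect union $\EComp{\cdot}{\cdot}$.

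The crux is \refToRule{t-invk}, which I expect to be the main obstacle. I would first refine the receiver and the arguments by the value hypothesis, getting $\SubType{\TEnv}{\T''_0}{\T_0}$ and $\SubType{\TEnv}{\T''_i}{\T'_i}$. The essential auxiliary fact, which I would prove beforehand from \refToRule{sub-obj}, \refToRule{sub-sig} and \refToRule{sub-mtypeff}, is \emph{monotonicity of method-type lookup}: since a subtype signature subsumes the supertype one, the method $\m$ found in $\T''_0$ carries the same type parameters, bounds and parameter types as in $\T_0$, with only a covariantly smaller return type and effect. The bound and argument side conditions of \refToRule{t-invk} are then re-established by transitivity, the return type descends under the common instantiation by $\seq\T$, and for the effect I instantiate the smaller method effect and invoke \cref{lem:eff-simplification}: because the instantiated smaller effect is below the instantiated original one and both simplify, their simplifications are ordered as required. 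Controlling this interaction between subtyping, type substitution and simplification is the delicate part.

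Finally, the try-block, handler and clause cases turn on aligning the filter with \cref{lem:filter-prop}. In \refToRule{t-try} I refine the enclosed expression, whose type and effect both decrease, and refine the handler by the handler statement while keeping the handler's left type fixed; this is sound precisely because the now-smaller result type of the enclosed expression may still be bound to the parameter of the final expression. The effect of the whole block then decreases by \cref{lem:filter-prop}, which transports the decrease of the input effect and of the effects produced by the handler through $\FilterF{}$. In \refToRule{t-handler} I apply the expression hypothesis to the final expression (typed with the left type as the parameter's type) and the clause hypothesis to every clause, and reassemble the filter; in \refToRule{t-continue} and \refToRule{t-stop} I apply the expression hypothesis to the clause body, whose effect drives the monotone change of the clause filter, and re-establish by transitivity the residual side condition relating the clause body's result type to the operation's return type, respectively to the handler's result type.
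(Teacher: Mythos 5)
Your proposal is correct and follows essentially the same route as the paper's proof: a simultaneous induction on the typing derivation covering the four judgment forms, with the same key ingredients at the same places --- monotonicity of method-type lookup derived from the overriding (right-preferential sum) constraints together with \cref{lem:eff-simplification} for \refToRule{t-invk}, and \cref{lem:filter-prop} plus transitivity for the \refToRule{t-try}, \refToRule{t-handler} and clause cases. If anything, you are more careful than the paper on two points it glosses over: the paper dismisses \refToRule{t-obj} by claiming method declarations have no free variables (despite its own remark that methods inside objects may refer to outer variables), and in \refToRule{t-try} it does not explicitly re-type the handler at the now-smaller type of the enclosed expression (the content of its \cref{lem:h-subsumption}), whereas your strengthened induction hypothesis over pointwise refinements and your companion statement for method declarations address both uniformly.
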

\begin{proof}
By induction on the type derivation of \cref{fig:typing-exp}.
Consider the last rule applied in the derivation.
 \begin{description}
 \item [\refToRule{T-Var}]  Let $\IsWFVal{\TEnv}{\Gamma,\TVar{\seq\T}{\seq\x}}{\x}{\T}$, then either $\x=\x_i\in\seq\x$ and 
 $\IsWFVal{\TEnv}{\Gamma,\TVar{\seq\UT}{\seq\x}}{\x}{\UT_i}$ with $\SubType{\TEnv}{{\UT_i}}{\T_i}$ or $\x\not\in\seq\x$  and $\T=\T'$.
  \item [\refToRule{T-Obj}] Since method declaration do not have free variables the result is immediate.
\item [\refToRule{T-Invk}]   In this case $\e$ is $\MCall{\ve_0}\m{\seq{\T_\X}}{\seq\ve}$. 
From rule \refToRule{T-Invk} we have $\IsWFVal{\TEnv}{\Gamma,\TVar{\seq\T}{\seq\x}}{\ve_0}{\T_0}$ and $\IsWFVal{\TEnv}{\Gamma,\TVar{\seq\T}{\seq\x}}{\seq\ve}{\seq\T}$ and $\IsMType{}{\T_0}{\m}{\MkTE{\kind}{\Ext{\seq\X}\seq{\UT_\X}}{\seq{\T'}}}{\T_\m}{\eff_\m}$ and $\SubType{\TEnv}{\seq{\T_\X}}{\Subst{\seq{\UT_\X}}{\seq{\T_\X}}{\seq\X}}$ and 
     $\SubType{\TEnv}{\seq\T}{\Subst{\seq{\T'}}{\seq{\T_\X}}{\seq\X}}$ and $\T=\Subst{\T_\m}{\seq{\T_\X}}{\seq\X}$ and $\EffRed{\TEnv}{\Subst{\eff_\m}{\seq{\T_\X}}{\seq\X}}{\eff}$. By induction hypothesis on $\IsWFVal{\TEnv}{\Gamma,\TVar{\seq\T}{\seq\x}}{\ve_0}{\T_0}$, we get
     $\IsWFVal{\TEnv}{\Gamma,\TVar{\seq\UT}{\seq\x}}{\ve_0}{\T_0'}$ and $\SubType{\TEnv}{{\T_0'}}{\T_0}$. By the overriding rules enforced by the the ``right-preferential rules'', $\IsMType{}{\T'_0}{\m}{\MkTE{\kind}{\Ext{\seq\X}\seq{\UT_\X}}{\seq{\T'}}}{\T'_{\m}}{\eff'_\m}$ with $\SubType{\TEnv,\Ext{\seq\X}\seq{\UT_\X}}{\TEff{\T'_{\m}}{\eff'_\m}}{\TEff{\T_{\m}}{\eff_\m}}$. 
By induction hypothesis on $\IsWFVal{\TEnv}{\Gamma,\TVar{\seq\T}{\seq\x}}{\seq\ve}{\seq\T}$ we get 
 $\IsWFVal{\TEnv}{\Gamma,\TVar{\seq\UT}{\seq\x}}{\seq\ve}{\seq\T'}$ and $\SubType{\TEnv}{{\seq\T}}{\seq\T'}$. \\
 Applying rule \refToRule{T-Invk} we get $\IsWFExp{\TEnv}{\Gamma,\TVar{\seq{\UT}}{\seq\x}}{\e}{\T'}{\eff'}$ where $\T'=\Subst{\T'_\m}{\seq{\T_\X}}{\seq\X}$ and $\EffRed{\TEnv}{\Subst{\eff'_\m}{\seq{\T_\X}}{\seq\X}}{\eff'}$. From $\SubType{\TEnv}{\Subst{\eff'_\m}{\seq{\T_\X}}{\seq\X}}{\Subst{\eff_\m}{\seq{\T_\X}}{\seq\X}}$ and
\cref{lem:eff-simplification}  we get $\SubType{\TEnv}{\eff'}{\eff}$. Therefore $\SubType{\TEnv}{\TEff{\T'}{\eff'}}{\TEff\T\eff}$.
\item [\refToRule{T-Ret}]  In this case $\e$ is ${\Ret\ve}$ and $\IsWFExp{\TEnv}{\Gamma,\TVar{\seq\T}{\seq\x}}{\Ret\ve}{\T}{\eZero}$. From rule \refToRule{T-Rrt} we have $\IsWFVal{\TEnv}{\Gamma,\TVar{\seq\T}{\seq\x}}{\ve}{\T}$. By induction hypothesis, we get
     $\IsWFVal{\TEnv}{\Gamma,\TVar{\seq\UT}{\seq\x}}{\ve}{\T'}$ and $\SubType{\TEnv}{{\T'}}{\T}$.  Applying rule \refToRule{T-Ret} we get 
   $\IsWFExp{\TEnv}{\Gamma,\TVar{\seq\UT}{\seq\x}}{\Ret\ve}{\T'}{\eZero}$.
\item [\refToRule{T-Do}]  In this case $\e$ is $\Do\y{\e_1}{\e_2}$ and, from rule \refToRule{T-Do},
$ \IsWFExp{\TEnv}{\Gamma,\TVar{\seq\T}{\seq\x}}{{\e_1}}{\T_1}{\eff_1}$ and $\IsWFExp{\TEnv}{\Gamma,\TVar{\seq\T}{\seq\x},\TVar{\T_1}{\x}}{{\e_2}}{\T}{\eff_2}$ and 
and $\eff=\EComp{\eff_1}{\eff_2}$. By induction hypothesis on $ \IsWFExp{\TEnv}{\Gamma,\TVar{\seq\T}{\seq\x}}{{\e_1}}{\T_1}{\eff_1}$ we get
$\IsWFExp{\TEnv}{\Gamma,\TVar{\seq{\UT}}{\seq\x}}{\e_1}{\T'_1}{\eff'_1}$ and $\SubType{\TEnv}{\TEff{\T'_1}{\eff'_1}}{\TEff{\T_1}{\eff_1}}$.
Again by inductive hypothesis on $\IsWFExp{\TEnv}{\Gamma,\TVar{\seq\T}{\seq\x},\TVar{\T_1}{\x}}{{\e_2}}{\T}{\eff_2}$ we get
$\IsWFExp{\TEnv}{\Gamma,\TVar{\seq{\UT}}{\seq\x},\TVar{\T'_1}{\x}}{\e_2}{\T'}{\eff'_2}$ and $\SubType{\TEnv}{\TEff{\T'}{\eff'_2}}{\TEff{\T}{\eff_2}}$.
 Applying rule \refToRule{T-Do} we get $\IsWFExp{\TEnv}{\Gamma,\TVar{\seq{\UT}}{\seq\x}}{\e}{\T'}{\EComp{\eff'_1}{\eff'_2}}$ with 
 $\SubType{\TEnv}{\TEff{\T'}{{\EComp{\eff'_1}{\eff'_2}}}}{\TEff{\T}{{\EComp{\eff_1}{\eff_2}}}}$.
\item [\refToRule{T-Try}]   In this case $\e$ is  $\TryShort{\e_1}{\handler}$  and, from rule \refToRule{T-Try}, 
$\IsWFExp{\TEnv}{\Gamma,\TVar{\seq\T}{\seq\x}}{\e_1}{\T_1}{\eff_1}$ and
$\IsWFHandler{\TEnv}{\Gamma,\TVar{\seq\T}{\seq\x}}{\T_1}{\handler}{\T}{\hfilter}$ where $\hfilter=\HFilter{\cfilter_1\ldots\cfilter_n}{\eff_{\hfilter}}$ and $\eff=\FilterFun{\eff_1}{\hfilter}$.
By induction hypothesis  we get $\IsWFExp{\TEnv}{\Gamma,\TVar{\seq{\UT}}{\seq\x}}{\e_1}{\T'_1}{\eff'_1}$ and $\SubType{\TEnv}{\TEff{\T'_1}{\eff'_1}}{\TEff{\T_1}{\eff_1}}$. Let $\handler=\Handler{\cc_1\ldots\cc_n}{\x}{\e'}$.
By induction hypothesis on $\IsWFHandler{\TEnv}{\Gamma,\TVar{\seq\T}{\seq\x}}{\T_1}{\handler}{\T}{\hfilter}$ we get  $\IsWFHandler{\TEnv}{\Gamma,\TVar{\seq{\UT}}{\seq\x}}{\T_1}{\handler}{\T'}{\hfilter'}$
where $\hfilter'=\HFilter{\cfilter_1\ldots\cfilter_n}{\eff'_{\hfilter}}$ and  $\SubType{\TEnv}{\TEff{\T'}{\eff'_{\hfilter}}}{\TEff{\T}{\eff_{\hfilter}}}$. 
Therefore applying rule \refToRule{T-Try} we get $\IsWFExp{\TEnv}{\Gamma,\TVar{\seq{\UT}}{\seq\x}}{\e}{\T'}{\eff'}$ where $\eff'=\FilterFun{\eff'_1}{\hfilter'}$.
From \cref{lem:filter-prop} we get that  $\SubType{\TEnv}{\TEff{\T'}{\eff'}}{\TEff{\T}{\eff}}$ which proves the result.
 \item [\refToRule{T-Handler}]  In this case we have $\IsWFHandler{\TEnv}{\Gamma,\TVar{\seq\T}{\seq\x}}{\T'}{\handler}{\T}{\HFilter{\seq\cfilter}{\eff}}$.
 where $\handler=\Handler{\seq\cc}{\x_0}{\e_0}$. 
 From rule \refToRule{T-Handler}
  $\IsWFExp{\TEnv}{\Gamma,\TVar{\seq\T}{\seq\x},\TVar{\T'}{\x_0}}{\e_0}{\T_0}{\eff}$  and $\IsWFClause{\TEnv}{\Gamma,\TVar{\seq\T}{\seq\x}}{\T}{\cc_i}{\cfilter_i}$ for all $i\in 1..n$.
  By induction hypothesis we get 
  $\IsWFExp{\TEnv}{\Gamma,\TVar{\seq{\UT}}{\seq\x},\TVar{\T'}{\x_0} }{\e_0}{\T'_0}{\eff'}$ and $\SubType{\TEnv}{\TEff{\T'_0}{\eff'}}{\TEff{\T_0}{\eff}}$. By induction
  hypotheses on $\IsWFClause{\TEnv}{\Gamma,\TVar{\seq\T}{\seq\x}}{\T}{\cc_i}{\cfilter_i}$, for all $i\in 1..n$ ,we have
  $\IsWFClause{\TEnv}{\Gamma,\TVar{\seq\UT}{\seq\x}}{\T}{\cc_i}{\cfilter'_i}$  with $\SubType{\TEnv}{{\cfilter_i}}{\cfilter'_i}$  for all $i\in 1..n$. Applying rule \refToRule{T-Handler} we get 
   $\IsWFHandler{\TEnv}{\Gamma,\TVar{\seq\UT}{\seq\x}}{\T'}{\handler}{\T}{\HFilter{\seq\cfilter'}{\eff'}}$.
\item [\refToRule{T-Continue}]  In this case we have $\IsWFClause{\TEnv}{\Gamma,\TVar{\seq\T}{\seq\x}}{\T}{\CC{\NT_\x}{\m}{\seq\X}{\x}{\seq\x}{\e_\m}{\Continue}}{\CFilter{\T_\x}{\m}{\seq{\X}}{\seq\UT}{\eff_\m}{\Continue}}$. From rule \refToRule{T-Continue} we get $\IsWFExp{\TEnv}{\Gamma,\TVar{\NT_\x}{\x},\TVars{\T}{\x}} {\e_\m} {\T'} {\eff_\m}$ where 
$\IsMType{\TEnv}{\NT_\x}{\m}{\MkTE{\mgc}{\Ext{\seq\X}{\seq\UT}}{\seq\T}{\T_\m}{\_}}$ with  $\SubType{\TEnv}{\T'}{\T_\m}$. By induction hypothesis 
$\IsWFExp{\TEnv}{\Gamma,\TVar{\NT_\x}{\x},\TVars{\UT}{\x}} {\e_\m} {\T''} {\eff'_\m}$ such that $\SubType{\TEnv}{\TEff{\T''}{\eff'_\m}}{\TEff{\T'}{\eff_\m}}$. Since
by transitivity $\SubType{\TEnv}{\T''}{\T_\m}$ we can apply rule \refToRule{T-Continue} and we get 
$\IsWFClause{\TEnv}{\Gamma,\TVar{\seq\UT}{\seq\x}}{\T}{\CC{\NT_\x}{\m}{\seq\X}{\x}{\seq\x}{\e_\m}{\Continue}}{\CFilter{\T_\x}{\m}{\seq{\X}}{\seq\UT}{\eff'_\m}{\Continue}}$.
\item [\refToRule{T-Stop}]   In this case we have ${\IsWFClause{\TEnv}{\Gamma,\TVar{\seq\T}{\seq\x}}{\T}{\CC{\NT_\x}{\m}{\seq\X}{\x}{\seq\x}{\e_\m}{\Stop}}{\CFilter{\T_\x}{\m}{\seq{\X}}{\seq\UT}{\eff_\m}{\Continue}}}$. From rule \refToRule{T-Stop} we get $\IsWFExp{\TEnv}{\Gamma,\TVar{\NT_\x}{\x},\TVars{\T}{\x}} {\e_\m} {\T'} {\eff_\m}$
where $\IsMType{\TEnv}{\NT_\x}{\m}{\MkTE{\mgc}{\Ext{\seq\X}{\seq\UT}}{\seq\T_\m}{\_}{\_}}$ and $  \SubType{\TEnv}{\T'}{\T}$.
By induction hypothesis  $\IsWFExp{\TEnv}{\Gamma,\TVar{\NT_\x}{\x},\TVars{\UT}{\x}} {\e_\m} {\T''} {\eff'_\m}$  such that $\SubType{\TEnv}{\TEff{\T''}{\eff'_\m}}{\TEff{\T'}{\eff_\m}}$.
 Since by transitivity $\SubType{\TEnv}{\T''}{\T}$ we can apply rule \refToRule{T-Stop} and we get 
 ${\IsWFClause{\TEnv}{\Gamma,\TVar{\seq\T}{\seq\x}}{\T}{\CC{\NT_\x}{\m}{\seq\X}{\x}{\seq\x}{\e_\m}{\Stop}}{\CFilter{\T_\x}{\m}{\seq{\X}}{\seq\UT}{\eff_\m}{\Continue}}}$.
 \end{description}
\end{proof}

\begin{lemma}\label{lem:h-subsumption}
If
$\IsWFHandler{\TEnv}{\Gamma}{\T}{\handler}{\T'}{\HFilter{\seq\cfilter}{\eff}}$ and $\SubType{\TEnv}{\T}{\T''}$, then   
$\IsWFHandler{\TEnv}{\Gamma}{\T'}{\handler}{\T'}{\HFilter{\seq\cfilter}{\eff'}}$ with $\SubType{\TEnv}{{\eff'}}{\eff}$.
\end{lemma}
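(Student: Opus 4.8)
The plan is to prove this by inverting the handler judgment with the unique rule \refToRule{t-handler}, transferring the clause part unchanged, and retyping only the final expression. Writing $\handler=\Handler{\cc_1\ldots\cc_n}{\x}{\e_0}$ and $\seq\cfilter=\cfilter_1\ldots\cfilter_n$, inverting the hypothesis $\IsWFHandler{\TEnv}{\Gamma}{\T}{\handler}{\T'}{\HFilter{\seq\cfilter}{\eff}}$ yields a typing $\IsWFExp{\TEnv}{\Gamma,\TVar{\T}{\x}}{\e_0}{\T_0}{\eff}$ of the final expression with $\SubType{\TEnv}{\T_0}{\T'}$, together with clause judgments $\IsWFClause{\TEnv}{\Gamma}{\T'}{\cc_i}{\cfilter_i}$ for $i\in 1..n$.

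First I would observe that the clauses transfer for free. A clause judgment depends only on $\TEnv$, $\Gamma$, the output type, and the clause itself; since the conclusion uses the same output type $\T'$, the derivations $\IsWFClause{\TEnv}{\Gamma}{\T'}{\cc_i}{\cfilter_i}$ are reused verbatim, yielding exactly the same clause filters $\seq\cfilter$. This is precisely why the conclusion keeps $\seq\cfilter$ and only replaces the final effect by $\eff'$.

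The substantive step is retyping $\e_0$ after changing the type of its parameter $\x$ from $\T$ to $\T'$. Here I would apply context subsumption (\cref{lem:subsumption}, in its one-variable instance for $\x$), obtaining $\IsWFExp{\TEnv}{\Gamma,\TVar{\T'}{\x}}{\e_0}{\T_0'}{\eff'}$ with $\SubType{\TEnv}{\TEff{\T_0'}{\eff'}}{\TEff{\T_0}{\eff}}$; this gives both $\SubType{\TEnv}{\eff'}{\eff}$ (the claimed effect inequality) and, by transitivity with $\SubType{\TEnv}{\T_0}{\T'}$, the side condition $\SubType{\TEnv}{\T_0'}{\T'}$ required by \refToRule{t-handler}. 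Reapplying \refToRule{t-handler} with input and output both $\T'$, the reused filters $\seq\cfilter$, and final effect $\eff'$ then delivers $\IsWFHandler{\TEnv}{\Gamma}{\T'}{\handler}{\T'}{\HFilter{\seq\cfilter}{\eff'}}$.

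The main obstacle is the applicability of context subsumption in that step: it narrows the parameter, so it requires the new parameter type $\T'$ to be a subtype of the old one $\T$. For the common handler shape with identity final expression $\Ret{\x}$ this is automatic (the final expression retypes at any parameter type with effect $\eZero$); in general it is the subtyping assumption relating the input types that must license the call to \cref{lem:subsumption}, and checking that this is exactly the relation produced by the surrounding reduction step is the only delicate point.
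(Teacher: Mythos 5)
Your proof is correct and is essentially the paper's own argument: the paper's entire proof reads ``By rule \refToRule{t-handler} and \cref{lem:subsumption}'', i.e., exactly your steps of inverting the handler rule, reusing the clause judgments verbatim (they depend only on the unchanged output type), retyping the final expression by context subsumption, and reapplying \refToRule{t-handler} with the side condition recovered by transitivity. You are also right to flag the one delicate point: as stated, the lemma's hypothesis $\SubType{\TEnv}{\T}{\T''}$ is oriented the wrong way (and $\T''$ never reappears) --- what the subsumption step needs, and what the lemma's only use in the \refToRule{try-ctx} case of subject reduction actually supplies, is that the \emph{new} input type be a subtype of the \emph{old} one $\T$.
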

\begin{proof}
By rule \refToRule{T-Handler} and \cref{lem:subsumption}.
\end{proof}

\begin{lemma}[Substitution for types]\label{lem:substituion:type}
If $\SubType{\Ext{\seq\X}{\seq{\UT}}}{\TEff{\T}{\eff}}{\TEff{\T'}}{\eff'}$ and $\SubType{}{\seq\T}{\Subst{\seq\UT}{\seq\T}{\seq\X}}$ then 
$\SubType{}{\TEff{\Subst{\T}{\seq\T}{\seq\X}}{\eff_1}}{\TEff{\Subst{\T'}{\seq\T}{\seq\X}}{\eff'_1}}$ 
where $\EffRed{}{\Subst{\eff}{\seq{\T}}{\seq\X}}{\eff_1}$  and  $\EffRed{}{\Subst{\eff'}{\seq{\T}}{\seq\X}}{\eff'_1}$. 
\end{lemma}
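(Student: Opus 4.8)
The plan is to invert the hypothesis, treat the type and effect components separately, and recombine. Writing $\TEnv$ for $\Ext{\seq\X}{\seq\UT}$, the assumption $\SubType{\TEnv}{\TEff{\T}{\eff}}{\TEff{\T'}{\eff'}}$ can only be concluded by rule \refToRule{sub-typeff}, so inverting it yields $\SubType{\TEnv}{\T}{\T'}$ and $\SubType{\TEnv}{\eff}{\eff'}$. Since the goal $\SubType{}{\TEff{\Subst{\T}{\seq\T}{\seq\X}}{\eff_1}}{\TEff{\Subst{\T'}{\seq\T}{\seq\X}}{\eff'_1}}$ is itself produced by \refToRule{sub-typeff}, it suffices to prove $\SubType{}{\Subst{\T}{\seq\T}{\seq\X}}{\Subst{\T'}{\seq\T}{\seq\X}}$ and $\SubType{}{\eff_1}{\eff'_1}$ independently, and then reassemble them with \refToRule{sub-typeff} in the empty environment.

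For the type component I would prove the standard statement that subtyping is stable under a bound-respecting substitution: if $\SubType{\TEnv}{\T}{\T'}$ and $\SubType{}{\seq\T}{\Subst{\seq\UT}{\seq\T}{\seq\X}}$ then $\SubType{}{\Subst{\T}{\seq\T}{\seq\X}}{\Subst{\T'}{\seq\T}{\seq\X}}$, by induction on the derivation following the rules of \cref{fig:sub-eff}. In \refToRule{sub-refl-var} both sides become the same type $\Subst{\X}{\seq\T}{\seq\X}$, so the claim reduces to reflexivity of subtyping, which I would establish first by a routine induction on types. The cases \refToRule{sub-refl-ntype}, \refToRule{sub-ntype}, \refToRule{sub-obj}, \refToRule{sub-ntypes-left}, and \refToRule{sub-ntypes-right} are preserved by the inductive hypotheses, using that substitution commutes with the unfolding of nominal types and with signature extraction (\cref{fig:extract-sig}). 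This part is routine and does not use the effect hypothesis.

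The effect component is the substance of the proof, and I would isolate it as an auxiliary lemma: if $\SubType{\TEnv}{\eff}{\eff'}$, $\SubType{}{\seq\T}{\Subst{\seq\UT}{\seq\T}{\seq\X}}$, $\EffRed{}{\Subst{\eff}{\seq\T}{\seq\X}}{\eff_1}$ and $\EffRed{}{\Subst{\eff'}{\seq\T}{\seq\X}}{\eff'_1}$, then $\SubType{}{\eff_1}{\eff'_1}$, proved by induction on the subeffecting derivation of \cref{fig:sub-eff}. Since the only free type variables allowed under $\TEnv$ are $\seq\X$, after substitution no residual variable call-effect survives, so simplification in the empty environment always applies. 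Rules \refToRule{sub-empty} and \refToRule{sub-top} are immediate, as $\eZero$ and $\eTop$ are invariant under substitution and simplification and are the bottom and top of $\subt$. Rules \refToRule{sub-union-left} and \refToRule{sub-union-right} follow from the inductive hypotheses, distributivity of simplification over the union operator $\EComp{}{}$ (\refToRule{simplify-union}), and monotonicity of $\EComp{}{}$ for $\subt$. Rule \refToRule{sub-mgc-call} is handled by noting that a magic call-effect is left unchanged by simplification (\refToRule{mgc}), that substitution sends its receiver and arguments to subtypes by the type component above, and that \refToRule{sub-mgc-call} can then be reapplied.

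The main obstacle is the case \refToRule{sub-var-call}, where $\eff = \eCall{\X}{\m}{\seq{V}}$. Its premise is built from the effect $\eta$ of $\m$ in the bound $\TEnv(\X)$, with $\EffRed{\TEnv}{\Subst{\eta}{\seq{V}}{\seq{Z}}}{\eff_\ast}$ and $\SubType{\TEnv}{\eff_\ast}{\eff'}$, where $\seq{Z}$ are the type variables of $\m$. After substitution $\X$ becomes an object type $S = \Subst{\X}{\seq\T}{\seq\X}$, so on the left I must simplify $\eCall{S}{\m}{\Subst{\seq{V}}{\seq\T}{\seq\X}}$. From $\SubType{}{\seq\T}{\Subst{\seq\UT}{\seq\T}{\seq\X}}$ I obtain $\SubType{}{S}{\Subst{\TEnv(\X)}{\seq\T}{\seq\X}}$, and the overriding discipline of \cref{fig:extract-sig}---right-preferential sum refines method effects along subtyping and forces $\m$ to be magic in $S$ exactly when it is magic in its supertype---lets me relate the effect of $\m$ in $S$ to $\Subst{\eta}{\seq\T}{\seq\X}$. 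The decisive ingredient is a commutation property: simplifying after substitution agrees, up to $\subt$, with substituting into the already-simplified effect and resimplifying; combined with monotonicity of simplification (\cref{lem:eff-simplification}) and the inductive hypothesis applied to $\SubType{\TEnv}{\eff_\ast}{\eff'}$, this yields $\SubType{}{\eff_1}{\eff'_1}$. Establishing this commutation in both the magic and non-magic subcases is where the real work lies; with it in hand, recombining the type and effect parts through \refToRule{sub-typeff} completes the proof.
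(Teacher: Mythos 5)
Your plan coincides with the paper's own proof, which is recorded only as ``by induction on the rules of \cref{fig:sub-eff} using \cref{lem:eff-simplification}'': your induction over the subtyping/subeffecting derivations, with \cref{lem:eff-simplification} invoked to discharge the simplification steps, is exactly that argument spelled out. The additional structure you supply---inverting \refToRule{sub-typeff} to split type and effect components, and isolating \refToRule{sub-var-call} as the hard case requiring a substitution/simplification commutation property---is an honest elaboration of, not a departure from, the paper's one-line proof.
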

\begin{proof}
By induction on the rules of \cref{fig:sub-eff} using \cref{lem:eff-simplification}.
\end{proof}

\begin{lemma}[Substitution]\label{lem:substitution}
If
\begin{itemize}
\item $\IsWFExp{\TEnv,\Ext{\seq\Y}{\seq{\UT_\Y}}}{\Gamma,\TVar{\seq{\UT_\x}}{\seq\x}}{\e}{\T}{\eff}$ 
and 
\item $\SubType{\TEnv}{\seq{\T_\X}}{\Subst{\seq{\UT_\Y}}{\seq{\T}}{\seq\Y}}$
and  
\item $\IsWFVal{\TEnv}{\Gamma}{\seq\ve}{\seq{\T_\x}}$ and 
$\SubType{\TEnv}{{\seq{\T_\x}}}{\Subst{\seq{\UT_\x}}{\seq{\T_\X}}{\seq\Y}}$
\end{itemize}
then 
 $\IsWFExp{\TEnv}{\Gamma}{\Subst{\Subst\e{\seq\ve}{\seq\x}}{\seq{\T_\X}}{\seq\Y}}{\T'}{\eff'}$ 
$\IsWFExp{\TEnv}{\Gamma}{\Subst{\Subst\e{\seq\ve}{\seq\x}}{\seq{\T_\X}}{\seq\Y}}{\T'}{\eff'}$ and  $\SubType{\TEnv}{\TEff{\T'}{\eff'}}{\TEff{\Subst{\T}{\seq{\T_\X}}{\seq\Y}}{\eff''}}$
where $\EffRed{}{\Subst{\eff}{\seq{\T_\X}}{\seq\Y}}{\eff''}$.
\end{lemma}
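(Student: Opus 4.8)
The plan is to prove the statement by induction on the typing derivation of the hypothesis $\IsWFExp{\TEnv,\Ext{\seq\Y}{\seq{\UT_\Y}}}{\Gamma,\TVar{\seq{\UT_\x}}{\seq\x}}{\e}{\T}{\eff}$, simultaneously strengthening it to the other judgment forms (values, handlers, and clauses), exactly as done for \cref{lem:subsumption}: for values the type-and-effect conclusion becomes a subtype conclusion, and for handlers and clauses a pointwise subeffect relation on the extracted filters. Both substitutions --- replacing the term variables $\seq\x$ by the values $\seq\ve$ and the type variables $\seq\Y$ by the types $\seq{\T_\X}$ --- are carried out at once. Since the $\seq\ve$ are typed in the unextended context $\TEnv;\Gamma$, they neither mention $\seq\x$ nor depend on $\seq\Y$, so there are no capture problems and the two substitutions commute with the term constructors in the expected way.

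The base case is \refToRule{t-var}. If the variable is one of the $\seq\x$, the conclusion follows from the typing $\IsWFVal{\TEnv}{\Gamma}{\seq\ve}{\seq{\T_\x}}$ of the substituting values together with the third hypothesis, which bounds the value types $\seq{\T_\x}$ by the substituted parameter types $\Subst{\seq{\UT_\x}}{\seq{\T_\X}}{\seq\Y}$; otherwise the variable is left unchanged. For \refToRule{t-obj}, the method declarations contain no free occurrences of $\seq\x$, so only the type substitution acts, and well-formedness of the extracted signature is preserved by \cref{lem:substituion:type}. The genuinely interesting case is \refToRule{t-invk}: after substitution the receiver narrows to some subtype $\T'_0$ of $\Subst{\T_0}{\seq{\T_\X}}{\seq\Y}$, whose extracted method type refines, by the right-preferential overriding discipline, the one at the substituted original receiver type, so the method effect can only shrink; the declared effect $\eff$ is first instantiated and then simplified, and showing that re-simplifying the substituted effect yields a subeffect is precisely the content of \cref{lem:substituion:type} and \cref{lem:eff-simplification}. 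The value-substitution part of this and of the remaining cases is discharged by appealing to Context Subsumption (\cref{lem:subsumption}) to narrow the types of the actual arguments to the parameter types.

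The constructs \refToRule{t-ret} and \refToRule{t-do} are routine, the latter threading the induction hypotheses through the two premises and recomposing the union of effects, with subsumption on the type of the bound variable in the continuation handled again by \cref{lem:subsumption}. For \refToRule{t-try}, \refToRule{t-handler}, \refToRule{t-continue}, and \refToRule{t-stop} the effect of the whole expression is obtained by applying the filter extracted from the handler, so after substitution we must show that the filter of the substituted handler transforms the re-simplified substituted effect into a subeffect of the substituted result. This reduces to monotonicity of filter application in both the transformed effect and the clause effects, i.e.\ \refItem{lem:filter-prop}{uno}, combined with the clause-level induction hypotheses (which give subeffect relations on the $\cfilter_i$) and \cref{lem:h-subsumption} for the supertype annotation carried by the handler.

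I expect the main obstacle to be the \refToRule{t-invk} case, specifically the commutation of type-variable substitution with effect simplification. The effect assigned to a call is already the result of a simplification step, and substituting a type variable $\Y$ by a concrete object type can turn a previously variable call-effect $\eCall{\Y}{\m}{\seq\T}$ into one that simplifies further to the concrete effect of $\m$ in that type; one must verify that performing the substitution and then re-simplifying never produces a larger effect than simplifying first, which is exactly the role of \cref{lem:eff-simplification} and must be combined with the overriding-induced refinement of the method type at the substituted receiver. All the remaining bookkeeping --- closure of subtyping and well-formedness under the type substitution --- is packaged into \cref{lem:substituion:type} and used uniformly across the cases.
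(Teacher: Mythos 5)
Your proposal is correct and follows essentially the same route as the paper, whose proof of \cref{lem:substitution} is exactly an induction on the typing rules of \cref{fig:typing-exp} using Context Subsumption (\cref{lem:subsumption}), with the commutation of type substitution and effect simplification discharged by \cref{lem:substituion:type} and \cref{lem:eff-simplification} --- the same auxiliary lemmas you invoke, and in the same roles (notably in the \refToRule{t-invk} case, mirroring how the paper uses them in the \refToRule{invk} case of \cref{lem:subject-reduction}). Your mutual strengthening over values, handlers, and clauses likewise matches the template the paper establishes in the detailed proof of \cref{lem:subsumption}.
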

\begin{proof}
By induction on the the rules of \cref{fig:typing-exp} using \cref{lem:subsumption}.
\end{proof}

\begin{proofOf}{lem:subject-reduction}
We assume $\WTExp{\e}{\T}{\eff}$ and $\e\purered \e'$, and have to show $\WTExp{\e'}{\T'}{\eff'}$
such that  $\SubT{\TEff{\T'}{\eff'}}{\TEff\T\eff}$.
By induction on the reduction rules of \cref{fig:pure-red}. 
\begin{description}
\item [\refToRule{invk}] 

 In this case $\e$ is $\MCall{\ve_0}\m{\seq{\T_\X}}{\seq\ve}$ and $\e'$ is $\Subst{\Subst{\Subst{\e}{\seq{\T_\X}}{\seq\X}}{\ve}{\x}}{\seq\ve}{\seq\x}$ and $\mbody(\ve,\m) = \ple{\seq\X,\x,\seq\x,\e}$. 
From \cref{lem:inversion:invk} we have $\WTVal{\ve_0}{\T_0}$ and $\WTVal{\seq\ve}{\seq\T}$ and $\IsMType{}{\T_0}{\m}{\MkTE{\kind}{\Ext{\seq\X}\seq{\UT_\X}}{\seq{\T'}}}{\T_\m}{\eff_\m}$ and $\SubT{\seq{\T_\X}}{\Subst{\seq{\UT_\X}}{\seq{\T_\X}}{\seq\X}}$ and 
     $\SubT{\seq\T}{\Subst{\seq{\T'}}{\seq{\T_\X}}{\seq\X}}$ and $\T=\Subst{\T_\m}{\seq{\T_\X}}{\seq\X}$ and $\EffRed{}{\Subst{\eff_\m}{\seq{\T_\X}}{\seq\X}}{\eff}$.
     From the definition of $\mbody$, we have $\kind=\defn$. 
     Hence, by rule \refToRule{t-meth},
$\IsWFExp{\Ext{\seq\X}{\seq{\UT_\X}}}{\TVar{\Gen\tname{\seq{\T_\Y}}}{\x}, \TVar{\seq\T}{\seq\x} } {\e} {\T''} {\eff''}$
with $\SubType{\Ext{\seq\X}{\seq{\UT_\X}}}{\TEff{\T''}{\eff''}}{\TEff{\T_\m}{\eff_\m}}$. From  \cref{lem:substitution} we get
\begin{quoting}
$\WTExp{\Subst{\Subst{\Subst{\e}{\seq{\T_\X}}{\seq\X}}{\ve}{\x}}{\seq\ve}{\seq\x}}{\T'}{\eff'}\quad
\SubT{\TEff{\T'}{\eff'}}{\TEff{\Subst{\T''}{\seq{\T_\X}}{\seq\X}}{\hat{\eff}}}\quad\EffRed{}{\Subst{\eff''}{\seq{\T_\X}}{\seq\X}}{\hat{\eff}}
$
\end{quoting}
From $\SubType{\Ext{\seq\X}{\seq{\UT_\X}}}{\TEff{\T''}{\eff''}}{\TEff{\T_\m}{\eff_\m}}$ and $\SubT{\seq{\T_\X}}{\Subst{\seq{\UT_\X}}{\seq{\T_\X}}{\seq\X}}$ and \cref{lem:substituion:type} we get
$\SubT{\TEff{\Subst{\T''}{\seq{\T_\X}}{\seq\X}}{\Subst{\eff''}{\seq{\T_\X}}{\seq\X} }}{\TEff{\Subst{\T_\m}{\seq{\T_\X}}{\seq\X}}{ \Subst{\eff_\m}{\seq{\T_\X}}{\seq\X} }}$, hence from \cref{lem:eff-simplification} we get $\SubT{\TEff{\T'}{\eff'}}{\TEff{\T}{\eff}}$.

\item [\refToRule{try-ret}]  In this case $\e$ is $\TryShort{\Ret\ve}{\handler}$ and 
$\e'$ is $\Do\x{\Ret\ve}{\e_0}$.  Let $\handler=\Handler{\seq\cc}{\x}{\e_0}$.  From \refItem{lem:inversion}{try} and \refItem{lem:inversion}{ret} we have 
$\WTExp{\Ret\ve}{\T''}{\eZero}$ and $\IsWFHandlerNarrow{\emptyset}{\emptyset}{\T''}{\handler}{\T}{\hfilter}$ and
$\eff{=}\FilterFun{\eZero}{\hfilter}$. From the definition of $\FilterF{\hfilter}\ $ in \cref{fig:filters} and 
\refItem{lem:h-inversion}{one} we have that 
 $\IsWFExp{\emptyset}{\TVar{\T''}{\x}}{\e_0}{\T_0}{\eff}$ with  $\SubType{}{\T_0}{\T}$.
 Applying rule \refToRule{t-do} of \cref{fig:typing-exp} we get  $\WTExp{\e'}{\T_0}{(\EComp\eZero{\eff})}$ and
 from $\SubType{}{\TEff{\T_0}{\eff}}{\TEff{\T}{(\EComp\eZero{\eff}})}$ we get the result.
\item [\refToRule{try-do}]  In this case $\e$ is $\TryShort{(\Do\y{\e_1}{\e_2})}{\handler}$ and 
$\e'$ is $\Try{\e_1}{\seq\cc}{\y}{\TryShort{\e_2}{\handler}}$. 
 From \cref{lem:inversion:try} and \cref{lem:inversion:do} of \cref{lem:inversion} we have $\WTExp{\Do\y{\e_1}{\e_2}}{\UT_2}{\EComp{\eff_1}{\eff_2}}$ where
 \begin{enumerate}    
\item \label{i1} $\WTExp{\e_1}{\UT_1}{\eff_1}$ and
\item  \label{i2} $\IsWFExp{\emptyset}{\TVar{\UT_1}{\y}}{\e_2}{\UT_2}{\eff_2}$ and
\item  \label{i3}  $\IsWFHandler{\emptyset}{\emptyset}{\UT_2}{\handler}{\T}{\hfilter}$ and  $\eff{=}\FilterFun{\EComp{\eff_1}{\eff_2}}{\hfilter}$.
  \end{enumerate}
   Let $\handler=\Handler{\seq\cc}{\x}{\e_0}$.  From \cref{i3} and 
\refItem{lem:h-inversion}{two} we have that 
  \begin{enumerate}  \setcounter{enumi}{3}
  \item \label{i6}   $\hfilter=\HFilter{\cfilter_1\dots \cfilter_n}{\eff_0}$ and $\IsWFClause{\emptyset}{\emptyset}{\T}{\cc_i}{\cfilter_i}$ 
   \end{enumerate}
    From \cref{i3}, by weakening we get $\IsWFHandler{\emptyset}{\TVar{\UT_1}{\y}}{\UT_2}{\handler}{\T}{\hfilter}$ and from 
   \cref{i2} and rule \refToRule{t-try} we derive $\IsWFExp{\emptyset}{\TVar{\UT_1}{\y}}{\TryShort{\e_2}{\handler}}{\T}{\FilterFun{{\eff_2}}{\hfilter}}$.Therefore,
   from \Cref{i6} and rule \refToRule{T-Handler} we get
    \begin{enumerate}  \setcounter{enumi}{4}
 \item \label{i7} 
 $\IsWFHandler{\emptyset}{\emptyset}{\UT_1}{ \Handler{\seq\cc}{\y}{\TryShort{\e_2}{\handler}}}{\T}{\FilterFun{{\eff_2}}{\hfilter}}$
   \end{enumerate}
    From \Cref{i1,i7} and rule \refToRule{t-try} we get $\WTExp{\e'}{\T}{\FilterFun{\eff_1}{\FilterFun{{\eff_2}}{\hfilter}}}$. Finally, from \refItem{lem:filter-prop}{due}
   and \cref{i3} we have ${\FilterFun{\eff_1}{\FilterFun{{\eff_2}}{\hfilter}}}=\eff$.
  \item [\refToRule{fwd}]  In this case $\e$ is $\Try{\MCall{\ve}{\m}{\seq\T}{\seq\ve}}{\seq\cc}{\x}{\e_0}$ and 
$\e_1$ is $ \Do{\x}{\MCall{\ve}{\m}{\seq\T}{\seq\ve}}{\e_0}$.  From \refItem{lem:inversion}{try} we have 
\begin{enumerate}    
\item \label{ii1} $\WTExp{\MCall{\ve}{\m}{\seq\T}{\seq\ve}}{\T'}{\eff'}$  and
\item  \label{ii2}$\IsWFHandlerNarrow{\emptyset}{\emptyset}{\T'}{\Handler{\seq\cc}{\x}{\e_0}}{\T}{\hfilter}$ and $\eff{=}\FilterFun{\eff'}{\hfilter}$. 
  \end{enumerate}
From \cref{ii2} and 
\refItem{lem:h-inversion}{one}(2) we have that 
  \begin{enumerate}  \setcounter{enumi}{2}
 \item  \label{ii3}  $\IsWFExp{\emptyset}{\TVar{\T'}{\x}}{\e_0}{\T_0}{\eff_0}$ with  $\SubType{}{\T_0}{\T}$ and
 \item \label{ii4}   $\hfilter=\HFilter{\cfilter_1\dots \cfilter_n}{\eff_0}$ 
   \end{enumerate}
 Since we applied \refToRule{fwd} we have that $\mbody(\ve,\m)=\ple{\mgc,\_,\_,\_} $ and
  $\cmatch( \MCall\ve\m{\seq\T}{\seq\ve},\seq\cc)$ is undefined.
  From  \cref{invk:four} of \cref{lem:inversion:invk} we derive that 
  $\eff'=\eCall{\Gen\tname{\seq{\T_\X}}}\m{\seq\T}$ for some $\Gen\tname{\seq{\T_\X}}$.
  From definition  of $\FilterF{\hfilter}$ in \cref{fig:filters} and \Cref{ii2,ii4} we get $\eff{=}\EComp{\eff_0}{\eff'}$.
   Finally, from \Cref{ii1,ii3}  applying rule \refToRule{T-do} we get $\WTExp{\e'}{\T}{\eff}$. 
\item [\refToRule{try-ctx}]  In this case $\e$ is $ \TryShort{\e_1}{\handler}$ and 
$\e'$ is $ \TryShort{\e_2}{\handler}$ and $\e_1\purered\e_2$. 
From \refItem{lem:inversion}{try} we have 
\begin{enumerate}    
\item \label{iii1} $\WTExp{\e_1}{\T_1}{\eff_1}$  and
\item  \label{iii2}$\IsWFHandlerNarrow{\emptyset}{\emptyset}{\T_1}{\handler}{\T}{\hfilter}$ and $\eff{=}\FilterFun{\eff_1}{\hfilter}$. 
  \end{enumerate}
  By \cref{iii1} and inductive hypothesis we derive $\WTExp{\e_2}{\T_2}{\eff_2}$ such that  $\SubType{}{\TEff{\T_2}{\eff_2}}{\TEff{\T_1}{\eff_1}}$.
  Let $\hfilter=\HFilter{\seq\cfilter}{\eff_0}$. From  \cref{iii2} and \cref{lem:h-subsumption} we get that
  $\IsWFHandlerNarrow{\emptyset}{\emptyset}{\T_2}{\handler}{\T}{\HFilter{\seq\cfilter}{\eff'_0}}$ where $\SubType{}{\eff'_0}{\eff_0}$.
  Applying rule \refToRule{T-Try} we get $\WTExp{\e'}{\T}{\FilterFun{\eff_2}{\eff'}}$ where $\eff'=\HFilter{\seq\cfilter}{\eff'_0}$. 
  From  \refItem{lem:filter-prop}{uno} we get that $\SubType{}{\eff'}{\eff}$ and so $\SubType{}{\TEff{\T}{\eff'}}{\TEff{\T}{\eff}}$
 
\end{description}
\end{proofOf}

\end{document}